\newcommand{\blind}{1}
\definecolor{darkred}{RGB}{222,0,32}
\newtheorem{lemma}{Lemma}[section]
\newtheorem{thm}{\bf Theorem}[section]
\newtheorem{cor}{\bf Corollary}[section]
\newtheorem{remark}{Remark}
\renewcommand{\baselinestretch}{1.2}
\newcommand{\R}{\mathbb{R}}
\newcommand{\E}{\mathbb{E}}
\newcommand{\Prob}{\mathbb{P}}
\newcommand{\tr}{\mathrm{tr}}
\newcommand{\ve}{\mathrm{vec}}
\newcommand{\col}{\mathrm{Col}}
\newcommand{\bu}{\bm{u}}
\newcommand{\hU}{\widehat{U}}
\newcommand{\hV}{\widehat{V}}
\newcommand{\hW}{\widehat{W}}
\newcommand{\tZ}{\widetilde{Z}}
\newcommand{\calG}{\mathcal{G}}
\newcommand{\calM}{\mathcal{M}}
\newcommand{\calN}{\mathcal{N}}
\newcommand{\calT}{\mathcal{T}}
\newcommand{\calZ}{\mathcal{Z}}
\newcommand{\frakE}{\mathfrak{E}}
\newcommand{\frakS}{\mathfrak{S}}
\newcommand{\frakR}{\mathfrak{R}}
\DeclareFontFamily{U}{mathx}{}
\DeclareFontShape{U}{mathx}{m}{n}{<-> mathx10}{}
\DeclareSymbolFont{mathx}{U}{mathx}{m}{n}
\DeclareMathAccent{\widecheck}{0}{mathx}{"71}
\newcommand{\abs}[1]{\left\lvert#1\right\rvert}
\newcommand{\norm}[1]{\left\lVert#1\right\rVert}
\newcommand{\angles}[1]{\left\langle #1 \right\rangle}
\newcommand{\braces}[1]{\left\{ #1 \right\}}
\DeclareMathOperator*{\argmax}{\arg\!\max}
\definecolor{DSgray}{cmyk}{0,1,0,0}
\begin{document}
\pagenumbering{arabic}

\def\spacingset#1{\renewcommand{\baselinestretch}%
{#1}\small\normalsize} \spacingset{1}

%
%
%

\def\TITLE{Distributed Tensor Principal Component Analysis with Data Heterogeneity}

\if1\blind
{
	\title{\bf \TITLE}
	\author{ \medskip
		Elynn Chen$^\flat$ \hspace{2ex}
		Xi Chen$^\natural$
		\hspace{2ex}
		Wenbo Jing$^\sharp$ \hspace{2ex}
		Yichen Zhang$^\dag$ \\ \normalsize
		$^{\flat,\natural,\sharp}$Stern School of Business, New York University  \\
		\normalsize $^{\dag }$ Mitchell E. Daniels, Jr. School of Business, Purdue University
	}
	\date{}
	\maketitle
} \fi

\if0\blind
{
	\title{\bf \TITLE}
	\bigskip
	\bigskip
	\bigskip
	\date{}
	\maketitle
	\medskip
} \fi

\bigskip
\begin{abstract}
	As tensors become widespread in modern data analysis, Tucker low-rank Principal Component Analysis (PCA) has become essential for dimensionality reduction and structural discovery in tensor datasets. Motivated by the common scenario where large-scale tensors are distributed across diverse geographic locations, this paper investigates tensor PCA within a distributed framework where direct data pooling is theoretically suboptimal or practically infeasible.
	
	We offer a comprehensive analysis of three specific scenarios in distributed Tensor PCA: a homogeneous setting in which tensors at various locations are generated from a single noise-affected model; a heterogeneous setting where tensors at different locations come from distinct models but share some principal components, aiming to improve estimation across all locations; and a targeted heterogeneous setting, designed to boost estimation accuracy at a specific location with limited samples by utilizing transferred knowledge from other sites with ample data.
	
	We introduce novel estimation methods tailored to each scenario, establish statistical guarantees, and develop distributed inference techniques to construct confidence regions. Our theoretical findings demonstrate that these distributed methods achieve sharp rates of accuracy by efficiently aggregating shared information across different tensors, while maintaining reasonable communication costs. Empirical validation through simulations and real-world data applications highlights the advantages of our approaches, particularly in managing heterogeneous tensor data.
\end{abstract}

\noindent%
{\it Keywords:} Tensor Principal Component Analysis; Distributed Inference; Data Heterogeneity; Communication Efficiency; Tucker Decomposition;
\vfill


\newpage
\spacingset{1.78} 

\addtolength{\textheight}{.1in}%

\vspace{-.8em}
\section{Introduction}  \label{sec:intro}

In recent years, the prevalence of large-scale tensor datasets across diverse fields has garnered significant attention. Tucker low-rank tensor Principal Component Analysis (PCA) and Singular Value Decomposition (SVD) have become indispensable for unsupervised learning and dimension reduction, proving crucial in deriving valuable insights across various applications. These techniques are extensively utilized in recommendation systems \citep{karatzoglou2010multiverse}, biomedicine \citep{wen2024tensor}, natural image and video processing \citep{gatto2021tensor},  and health care systems \citep{ren2022blockchain}, among others.

Despite their wide applicability, the geographical dispersion of large-scale tensor datasets presents substantial challenges in data aggregation and analysis due to high communication costs, privacy concerns, and data security and ownership issues. For example, IT corporations face practical limitations in centralizing globally gathered customer data for recommendation systems due to communication budgets and network bandwidth constraints. Similarly, health records spread across multiple hospitals or jurisdictions pose significant privacy and ownership challenges for centralized processing.

Addressing these challenges, our paper makes several critical contributions to the field of tensor decomposition, particularly focusing on its application in distributed and heterogeneous environments. While recent studies have begun exploring distributed Tucker decomposition, they primarily concentrate on computational aspects such as parallelization and memory usage, focusing on environments where tensors share identical principal spaces \citep{shin2016fully,chakaravarthy2018optimizing,jang2020d}. These studies often overlook the importance of providing statistical guarantees and also fail to address the challenges posed by heterogeneous tensors, especially prevalent in fields like medical care, where tensor data from various locations differ in their underlying structures. Our research fills this gap by not only providing statistical convergence and inference theories for distributed Tucker decomposition but also extending its applicability beyond homogeneous data settings.

In this paper, we address the challenge of tensor PCA within a distributed framework, where tensors are stored across different machines without the feasibility of pooling them together. Our comprehensive analysis delineates three distinct scenarios: In the homogeneous scenario in Section \ref{sec:homo}, we develop a method that involves computing local estimators for each tensor's subspace, then transmitting these estimators to a central processor, and finally aggregating them to produce a global estimate, reducing communication by only transmitting essential subspace information. 

In the second scenario in Section \ref{sec:hetero}, tensors observed on different machines are allowed to be generated from different underlying models that share common leading principal components. Our focus is to improve estimation across all machines handling heterogeneous data. 
We formally define the partition of the common and individual components for each tensor and develop a generalized distributed Tensor PCA method to efficiently estimate these shared components and also identify and extract unique components specific to each tensor. This approach allows us to accommodate the distinct characteristics of each dataset within the collective framework. 

The third scenario in Section \ref{sec:transfer} focuses on another setting of heterogeneous distributed learning, aiming at enhancing the estimation accuracy of a tensor at a specific location by intelligently integrating abundant data from other locations. We develop a novel knowledge transfer algorithm based on a weighted averaging scheme that involves calibrating the influence of data based on their noise levels and relevance, thereby improving the precision of the target site's tensor estimation.

We have established comprehensive statistical error results for our proposed estimators across different scenarios. In Section \ref{sec:theory-homo}, our analysis involves a detailed decomposition and quantification of the bias and variance terms. We demonstrate that our estimator achieves the optimal minimax rate when the signal-to-noise ratio (SNR) exceeds a certain threshold, thus equating the performance of the pooled estimator that aggregates tensors directly (further details in Section \ref{sec:homo-setup}). Essentially, our distributed algorithm attains the highest possible estimation accuracy achievable in a non-distributed setting.

In the heterogeneous scenario detailed in Section \ref{sec:theory-hetero}, we show that our estimator for common components matches the performance of the homogeneous scenario, while the estimators for individual components maintain rates consistent with individual tensor PCA. This underscores the effectiveness of our method in simultaneously learning shared and unique tensor structures. Importantly, even when pooling of tensors is possible, creating an efficient estimator through direct aggregation is challenging due to data heterogeneity—a fact corroborated by our numerical studies where our distributed methods surpass the pooled estimator in various heterogeneous configurations.

Furthermore, we provide theoretical guarantees for our knowledge-transfer estimator, optimizing the weight allocations in our algorithm to enhance estimation accuracy beyond what is achievable with the target dataset alone. In addition, we derive the asymptotic distribution of the distance between our proposed estimator and the truth in Section E of the supplementary material, facilitating statistical inference and enabling the construction of confidence regions for the singular subspaces.

Our extensive numerical evaluations, detailed in Section \ref{sec:numerical}, assess the empirical performance of our methods. The results demonstrate that our approaches not only significantly outperform the ``single'' method, which applies PCA on individual tensors without aggregation, but also exceed the performance of the pooled method in both simulated heterogeneous settings and real data analyses. These findings highlight the superiority of our distributed approaches, which enhance estimation accuracy by effectively aggregating common information across diverse tensors amid data heterogeneity.

Our work, while related to distributed matrix PCA (e.g., \citealp{fan2019distributed, chen2022distributed}), tackles the more complex issue of distributed tensor PCA, which is inherently more challenging both practically and technically. Unlike matrices, tensors often involve three or more modes, increasing their dimensionality and complicating their analysis. The estimators of tensor PCA are often calculated from optimization algorithms iteratively over multiple modes of low-rank components and a core tensor, whereas the estimator of matrix PCA is based on non-iterative schemes. Due to the intricate statistical dependencies involved in iterative optimization algorithms, the establishment of theoretical guarantees for the distributed estimators in tensor PCA, including statistical convergence and inference, presents significantly greater challenges than its matrix counterpart. 

The contributions and novelty of our work are summarized as follows. 
\begin{itemize}
\item \textit{Modeling}: We introduce a new model to represent the distributed and heterogeneous environments encountered in tensor PCA. This model addresses a gap in the current literature, which has largely overlooked the complexities of real-world distributed data analysis where tensors are heterogeneous on different machines or servers.

\item \textit{Methodological Advances}: We propose novel distributed methods for tensor PCA that function effectively under both homogeneous and heterogeneous settings. These methods are designed to aggregate common components shared across different tensors, thereby enhancing estimation accuracy. Additionally, we expand our methodology to address scenarios akin to transfer learning, making considerable strides in distributed tensor analysis. To the best of our knowledge, we are the first to develop such methods for distributed tensor analysis, which, as discussed above, are highly different from the existing methods for matrix PCA.  

\item \textit{Theoretical Contributions}: We establish statistical guarantees for our distributed methods, demonstrating that they achieve a sharp statistical error rate that aligns with the minimax optimal rate possible in a non-distributed setting. Additionally, we calculate the asymptotic distribution of the proposed distributed estimator. This calculation aids in statistical inference and supports the construction of confidence regions for the singular subspaces. These contributions represent a theoretical advancement in utilizing aggregated common components, a finding that has not been previously documented in tensor analysis literature and thus enhances the existing computational approaches to distributed Tucker decomposition.
\end{itemize}

\vspace{-.6em}
\subsection{Related works}
\label{sec:literature}

This paper is situated at the intersection of two bodies of literature: tensor decomposition and distributed estimation and inference. 
The literature on both areas is broad and vast. 
Here we only review the closely related studies, namely tensor decomposition and distributed learning that provides
statistical guarantees. 
The readers are referred to \cite{bi2021tensors} and \cite{sun2021tensors} 
for recent surveys of statistical tensor learning. 

\paragraph{Tensor Decomposition.} 
Tensor decompositions have become increasingly important in machine learning, electrical engineering, and statistics. While low-rank matrix decomposition theory is well-established, tensors present unique challenges, with multiple notions of low-rankness \citep{kolda2009tensor}, primarily CANDECOMP/PARAFAC (CP) and multilinear/Tucker decompositions. Early work by \cite{richard2014statistical}, \cite{hopkins2015tensor}, and \cite{perry2016statistical} focused on rank-1 spiked tensor PCA models, proposing methods like tensor unfolding and sum of square optimization. \cite{zhang2018tensor} later developed a general tensor SVD framework based on Tucker decomposition, with \cite{xia2022inference} extending this to inference for tensor PCA. Recent advances include sparse tensor SVD \citep{zhang2019optimal}, binary tensor decomposition \citep{wang2020learning}, semi-parametric tensor factor models with covariates \citep{chen2024semiparametric}, and theoretical guarantees for Tucker/CP decomposition with correlated entries and missing data \citep{chen2024highCP,chen2024highTucker}.

{Part of our analysis under the homogeneous scenario utilizes the decomposition technique in \cite{xia2022inference} established for a non-distributed environment. 
Compared to their work, we highlight several major innovations in our study from the theoretical perspective.   In the homogeneous setting, we conduct a delicate bias-variance analysis for the error decomposition to establish the theoretical results for the distributed estimator. 
For the heterogeneous setting, we develop the decomposition for the common-component estimators using a different technique from \cite{xia2022inference}, due to the existence of distinct individual components. 
Additionally, the procedure for determining the common component ranks and the extension to data-adaptive weighted estimators in the knowledge transfer setting are, to our best knowledge, new to the literature.}

\paragraph{Distributed Estimation and Inference.}  To handle the challenges posed by massive and decentralized data, there has been a significant amount of recent literature developing distributed estimation or inference techniques for a variety of statistical problems \citep{li2013statistical, chen2014split,garber2017communication, lee2017communication,shi2018massive, jordan2019communication,fan2019distributed, volgushev2019distributed, chen2019quantile, li2020communication, yu2020simultaneous, 
tan2022communication,luo2022distributed, yu2022distributed, chen2022first, chen2024distributed}. We refer readers to \cite{gao2022review} for a comprehensive review.  
Among these works, our paper is most closely related to the literature on distributed matrix PCA. 
\cite{fan2019distributed} notably proposed a convenient one-shot approach that computes the top-$K$-dim eigenspace of the covariance matrix on each local machine and aggregates them on a central machine. They established a rigorous statistical guarantee showing that this approach achieves a sharp error rate as long as the sample size on each local machine is sufficiently large. An alternative multi-round method was simultaneously developed by \cite{garber2017communication}, which mainly estimates the first eigenspace by leveraging shift-and-invert preconditioning. Subsequently, \cite{chen2022distributed} proposed an improved multi-round algorithm for estimating the top-$K$-dim eigenspace that enjoys a fast convergence rate under weaker restrictions compared to \cite{fan2019distributed}. Other works focused on providing more general statistical error bounds \citep{zheng2022limit} or improving the communication efficiency \citep{huang2021communication,charisopoulos2021communication}.



\paragraph{Organizations.} The structure of this paper is as follows: Section \ref{sec:homo} introduces our distributed tensor PCA algorithm for homogeneous settings and establishes its statistical error rates. Section \ref{sec:hetero} adapts the algorithm for heterogeneous environments and includes theoretical analysis. Section \ref{sec:transfer} explores knowledge transferring from source tensors to a target tensor within a heterogeneous setting. Section \ref{sec:numerical} presents simulations and real data analyses to evaluate the performance of our methods. Some commonly used notations in this paper are defined in Section A of the supplementary material. Sections B and C provide supplemental theoretical results regarding Theorems \ref{thm:1} and \ref{thm:transfer}, respectively.  Section E, notably, presents the asymptotic distribution of our proposed estimator, featuring distributed inference. All technical proofs and additional numerical experiments are presented in Sections D and F of the supplementary material, respectively.

\vspace{-.8em}
\section{Estimation for Homogeneous Tensors}
\label{sec:homo}
In this section, we present the distributed tensor PCA algorithm for the homogeneous setting where each machine observes a tensor generated from the same underlying model. We first formulate the problem setup and illustrate the pooled estimator as a benchmark in Section \ref{sec:homo-setup}, followed by our estimator and the intuition behind its construction detailed in Section \ref{sec:method-homo}. Finally, Section \ref{sec:theory-homo} establishes theoretical guarantees on the statistical error rate for our estimator. 

\vspace{-.6em}
\subsection{Problem Setup and the Pooled Estimator}\label{sec:homo-setup}

Assume a $J$-mode tensor $\calT^* \in \R^{p_1 \times p_2 \times \cdots \times  p_J}$ has a Tucker decomposition
\begin{equation}\label{eq:model}
\calT^* = \calG \times_1 U_1 \times_2 U_2 \cdots \times_J U_J , \quad U_j \in \mathbb{O}^{p_j \times r_j}, \calG \in \R^{r_1 \times r_2 \times \cdots \times r_J},
\end{equation}
which decomposes $\calT^*$ into a core tensor $\calG$ multiplied by factor matrices $\{U_j\}$ with orthogonal columns.
Suppose we observe $L$ tensors $\{\calT_{\ell}\}_{\ell=1}^L$ that are distributed on $L$ machines and cannot be pooled together. For each machine $\ell$, the tensor $\calT_{\ell}$ is generated by $\calT_{\ell}=\calT^*+\calZ_{\ell}$, where $\calT^*$ is a common low-rank tensor of interest given by \eqref{eq:model}, and the noise tensor $\calZ_{\ell}$ has i.i.d. normal entries with mean zero and variance $\sigma^2$. Our goal is to estimate the singular subspace $\col(U_j)$, $j \in [J]$, i.e., the linear space spanned by the columns of $U_j$, under the distributed setting. 

The model defined in \eqref{eq:model} is non-identifiable since it is equivalent to the model with $\widetilde{\calG}=\calG \times_1 O^{\top}_{1} \times_2 O^{\top}_{2}\cdots \times_J O^{\top}_{J}$ and $\widetilde U_j = U_j O_j$, for any $O_j \in \mathbb{O}^{r_j \times r_j}$, $j\in[J]$. However, the singular subspace $\col(U_j)$ remains invariant under such orthogonal transformation and thus is identifiable. The singular subspace $\col(U_j)$ can further be represented by the \textit{projection matrix } $U_jU_j^{\top}$, which projects $\R^{p_j}$ onto $\col(U_j)$ and satisfies $\widetilde U_j \widetilde U_j ^{\top}=U_jO_jO_j^{\top}U_j^{\top}=U_jU_j^{\top}$.  To measure the estimation error between a singular subspace spanned by $U \in \mathbb{O}^{p \times r}$ and that spanned by an estimator $\hU \in \mathbb{O}^{p \times r}$, we use a metric $\rho\big(\hU,  U\big):=\big\|\widehat U \widehat U^{\top}- UU^{\top}\big\|_{\rm F}$, which is the Frobenius norm of the difference between the projection matrices of $U$ and $\hU$. We note that $\rho$ is equivalent to the well-known $\sin \Theta$ distance \citep{davis1970rotation} that measures the distance between the subspaces $\col(U)$ and $\col\big(\hU\big)$ using principal angles,  defined as  
\[\big\|\sin \Theta\big(U, \hU\big)\big\|_{\rm F} = \norm{\mathrm{diag} \big(\sin (\cos^{-1}\sigma_1), \dots, \sin(\cos^{-1}\sigma_r)\big)}_{\rm F}=\sqrt{r-\sum_{i=1}^{r}\sigma_i^2},\]
where $\sigma_1, \dots, \sigma_r$ are the singular values of $U^{\top}\hU$. The equivalence between $\rho$ and $\sin \Theta$ distance can be established by 
\[\rho^2\big(U, \widehat U\big)=\big\|UU^{\top}\big\|^2_{\rm F}+\big\| \widehat U\widehat U^{\top}\big\|_{F}^2-2\big\|U^{\top}\hU\big\|_{\rm F}^2=2r-2\sum_{i=1}^{r}\sigma_i^2=2\big\|\sin \Theta\big(U, \widehat U\big)\big\|_{F}^2.\] 

If the tensors $\{\calT_{\ell}\}$ were allowed to be pooled onto a central machine, a standard way for estimating the singular space $\col(U_j)$ would be to conduct a Tucker decomposition on the averaged tensor $\overline{\calT}=\frac{1}{L}\sum\limits_{\ell=1}^{L}\calT_{\ell}=\calT^*+\overline{\calZ}$, where $\overline{\calZ} =\frac{1}{L}\sum\limits_{\ell=1}^{L}\calZ_{\ell}$ is a tensor with i.i.d. $\calN(0, \frac{\sigma^2}{L})$ entries. Denote the estimator obtained through this method as $\widehat U_{ \mathrm{pooled}, j}$, $j \in [J]$, which serves as a benchmark against which we evaluate the effectiveness of our approach in the distributed environment.

\vspace{-.6em}
\subsection{Distributed Tensor PCA for Homogeneous Tensors}\label{sec:method-homo}

\begin{algorithm}[!t] 
\spacingset{1.2}
\caption{Distributed Tensor PCA for Homogeneous Data} 
\label{alg:homo} 
\vspace*{0.08in} {\bf Input:}
Tensors distributed on local machines $\{\calT_{\ell}\}$ and initial estimators $\big\{\hU_{1,\ell}^{(0)}, \hU_{2,\ell}^{(0)}, \dots, \hU_{J,\ell}^{(0)}\big\}$, for all $\ell \in [L]$. 

{\bf Output:}  Estimators $\big\{\hU_1, \hU_2, \dots, \hU_J\big\}$.
\begin{algorithmic}[1]  
\For{$\ell=1,2,\dots,L$} \do \\
\For{$j=1,2,\dots,J$}
\State Compute a local estimator $	\hU_{j,\ell} = \mathrm{svd}_{r_j}(M_{j,\ell})$, where $M_{j,\ell}$ is defined in \eqref{eq:local_homo};
\State Send  $\hU_{j,\ell}$ to the central machine; 
\EndFor
\EndFor
\For{$j=1,2,\dots, J$}
\State On the central machine, compute  $\hU_j = \mathrm{svd}_{r_j}\Big[\frac{1}{L}\sum\limits_{\ell=1}^{L}\hU_{j, \ell}\hU^{\top}_{j, \ell}\Big]$;
\EndFor
\end{algorithmic} 
\end{algorithm} 
We first propose Algorithm \ref{alg:homo} for estimating the singular subspace spanned by $U_j$ in the distributed setting. Algorithm \ref{alg:homo} starts with an initial estimate $\{\hU_{j,\ell}^{(0)}\}$ that can be obtained from a prototypical Tucker decomposition algorithm, for instance, higher-order SVD (HOSVD) or higher order orthogonal iteration (HOOI) \citep{de2000multilinear, de2000best}, on the corresponding individual tensors $\{\calT_\ell\}$. 

Given initial estimators $\{\hU_{j,\ell}^{(0)}\}$, we obtain a local estimator for $U_j$ by computing the left singular value matrix of  $M_{j,\ell}$ on each machine $\ell$, where the matrix $M_{j,\ell} \in \R^{p_j \times (r_1r_2\cdots r_J / r_j)}$ is defined as the matricization of a projected version of $\calT_{\ell}$, given by 
\begin{equation}\label{eq:local_homo}
M_{j,\ell} = \calM_j\big(\calT_{\ell} \times_{1} \hU_{1,\ell}^{(0)\top} \times_{2} \hU_{2,\ell}^{(0)\top} \cdots \times_{j-1}\hU_{j-1,\ell}^{(0)\top} \times_{j+1}\hU_{j+1,\ell}^{(0)\top}\cdots \times_{J}\hU_{J,\ell}^{(0)\top}\big).
\end{equation}
Under mild conditions, we show that $M_{j,\ell}$ approximately equals $U_j \calM_j(\calG)$, and therefore, the left singular vector matrices of $\{M_{j,\ell}\}$ provide estimators $\{\hU_{j,\ell}\}$ for $U_j$. The local estimators  $\{\hU_{j,\ell}\}$ are then sent to a central machine and further aggregated by averaging the projection matrices $\hU_{j,\ell}\hU_{j,\ell}^{\top}$ over all $\ell$. Finally, we compute the left singular vectors of the averaged matrix as the output estimator $\hU_{j}$. The communication cost of Algorithm \ref{alg:homo} is of the order $O(\sum_{j=1}^{J}p_jr_j)$, which is a significant reduction from $O(\prod_{j=1}^{J}p_j)$, the communication cost for transferring the individual tensors themselves across machines.
\begin{remark}\label{rmk:UU'}
In the aggregation step of Algorithm \ref{alg:homo}, we average the projection matrices  $\hU_{j,\ell}\hU_{j,\ell}^{\top}$ instead of the singular vectors $\hU_{j,\ell}$ due to the non-identifiability of $\hU_{j,\ell}$, as discussed in Section \ref{sec:homo-setup}. 
For instance, if a singular value has a multiplicity greater than 1, the corresponding singular vectors $\hU_{j,\ell}$ can be any orthonormal basis spanning the same singular subspace associated with that repeated singular value. Even if all singular values are distinctive,  there is still a sign ambiguity issue, i.e., both $\hU_{j,\ell}$  and $-\hU_{j,\ell}$ may be obtained from the SVD of the same matrix, which may lead to cancellations in the averaging of $\hU_{j,\ell}$. In contrast, averaging the projection matrices avoids these issues and provides a valid estimate for the singular space spanned by $U_j$. Moreover, since the estimated projection matrices $\big\{\hU_{j,\ell}\hU_{j,\ell}^{\top}\big\}_{\ell \in [L]}$ are not guaranteed to represent the same subspace, their average may have a rank larger than $r_j$. Therefore, we add an additional SVD in the final step to obtain a low-rank approximation of the averaged projection matrices, denoted as $\hU_j$, and output it as the final estimator. 
\end{remark}

\begin{remark}\label{rmk:rank} In practice, the ranks $\{r_j\}_{j\in [J]}$ may be unknown and need to be specified for Algorithm \ref{alg:homo} in the distributed way. Since the local matrix $M_{j,\ell}$ in \eqref{eq:local_homo} provides an approximation for $U_j\calM_j(G)$ which has rank $r_j$, one may first compute a consistent rank estimator $\widehat{r}_{j,\ell}$ for $M_{j,\ell}$ using existing rank determination methods (e.g., \citealp{choi2017selecting, charkaborty2022testing, han2022rank}) and further aggregate the locally estimated ranks $\{\widehat{r}_{j,\ell}\}_{\ell \in [L]}$, for example, by averaging, to obtain a more accurate estimate for $r_j$. Another approach is to overparametrize each local model by specifying a conservative rank $\widehat{r}_{j,\ell} \geq r_j$, as studied in \cite{xu2023power}, and then aggregate conservatively, for example, by choosing $\widehat{r}_j$ as the maximum or certain quantile of $\{\widehat{r}_{j,\ell}\}_{\ell \in [L]}$. It is a potentially interesting future direction to investigate the performance of Algorithm \ref{alg:homo} under overparametrization.
\end{remark}

\vspace{-.6em}
\subsection{Theoretical Guarantee}
\label{sec:theory-homo}

In this section, we provide theoretical guarantees for the statistical performance of Algorithm \ref{alg:homo}.
For $j \in [J]$, let $\Lambda_j$ be the
$r_j \times r_j$ singular value matrix of $\calM_j(\calG)$. Let $r=\max_j r_j$ and $p=\max_j p_j$. Moreover, define $\lambda_{\max}$, $\lambda_{\min}$ to be the maximum and minimum singular value of $\Lambda_j$ across all $j \in [J]$, and let $\kappa_0 = \lambda_{\max} \lambda^{-1}_{\min}$. 

\begin{thm}
Assume that there exist constants $C_1$, $c_1$, and $C_2$ such that, with probability at least $1-C_1e^{-c_1p}$, $\big\|\hU_{j,\ell}^{(0)}\hU_{j,\ell}^{(0)\top}- U_jU_j^{\top}\big\|_{2}\leq C_2\sqrt{p}\sigma\lambda_{\min}^{-1}$ for any $j \in [J]$ and $\ell \in [L]$. If $p_j \asymp p$ for all $j$, $L =O(p^{c_3})$ for some $c_3>0$, $r^{J-1}=O(p)$, $\kappa_0 = O(1)$, and $ \sqrt{pr}=o(\lambda_{\min}/\sigma)$, then we have
\begin{equation}
\label{eq:thm1-bound}
\sup_{j}\rho\big(\hU_j, U_j\big) \leq \widetilde C_2\bigg( \frac{\sigma}{\lambda_{\min}}\sqrt{\frac{pr}{L}}+ \frac{pr\sigma^2}{\lambda_{\min}^{2}}\bigg),
\end{equation}
with probability at least $1- \widetilde C_1e^{- \widetilde c_1 p}$ for some constants $ \widetilde C_1$ , $ \widetilde c_1$, and $ \widetilde C_2$, where $\{\hU_j\}$ is the output of Algorithm \ref{alg:homo}.
\label{thm:1}
\end{thm}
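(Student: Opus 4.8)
The plan is to decompose the error $\rho(\hU_j,U_j)$ into a \emph{variance} term of order $\frac{\sigma}{\lambda_{\min}}\sqrt{pr/L}$ and a \emph{bias} term of order $\frac{pr\sigma^2}{\lambda_{\min}^2}$, and to control each separately. The first step is to understand the local estimators $\hU_{j,\ell}$. Using the representation $M_{j,\ell}=\calM_j(\calT_\ell\times_{k\ne j}\hU_{k,\ell}^{(0)\top})$, I would write $M_{j,\ell}=U_j\calM_j(\calG)W_{j,\ell}^{\top}+E_{j,\ell}$, where $W_{j,\ell}=\bigotimes_{k\ne j}(U_k^{\top}\hU_{k,\ell}^{(0)})$ collects the (near-identity) cross-products from the initializers and $E_{j,\ell}$ is the projected-noise term plus the higher-order contamination caused by $\hU_{k,\ell}^{(0)}\ne U_k$. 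Under the assumed initialization bound $\|\hU_{k,\ell}^{(0)}\hU_{k,\ell}^{(0)\top}-U_kU_k^\top\|_2\lesssim\sqrt p\,\sigma/\lambda_{\min}$ and the SNR condition $\sqrt{pr}=o(\lambda_{\min}/\sigma)$, one shows $\|W_{j,\ell}-I\|$ is small and $\|E_{j,\ell}\|\lesssim\sigma\sqrt p$ with the stated exponential probability (standard Gaussian-matrix operator-norm bounds, together with a union bound over $j\in[J]$ and $\ell\in[L]$ using $L=O(p^{c_3})$). A Davis--Kahan / Wedin argument then gives $\rho(\hU_{j,\ell},U_j)\lesssim\sigma\sqrt{p}/\lambda_{\min}$, but this crude bound alone is not enough: it would only yield a $\sqrt{p}\,\sigma/\lambda_{\min}$ rate after averaging, with no $1/\sqrt L$ gain.

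The crucial second step is a \emph{first-order expansion} of the local projection matrix. I would invoke the perturbation expansion (this is exactly the decomposition technique of \citet{xia2022inference} that the authors flag as reused here): write
\[
\hU_{j,\ell}\hU_{j,\ell}^{\top}-U_jU_j^{\top}=\mathcal{L}_{j}(E_{j,\ell})+\mathcal{R}_{j,\ell},
\]
where $\mathcal{L}_j$ is a fixed linear map (depending on $U_j$ and $\Lambda_j$) applied to the noise, $\E[\mathcal{L}_j(E_{j,\ell})]=0$, $\|\mathcal{L}_j(E_{j,\ell})\|_{\rm F}\lesssim\sigma\sqrt{pr}/\lambda_{\min}$, and the remainder satisfies $\|\mathcal{R}_{j,\ell}\|_{\rm F}\lesssim pr\sigma^2/\lambda_{\min}^2$ (the bias order) on the good event. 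Averaging over $\ell$,
\[
\frac1L\sum_{\ell=1}^{L}\hU_{j,\ell}\hU_{j,\ell}^{\top}-U_jU_j^{\top}
=\mathcal{L}_j\!\Big(\tfrac1L\sum_{\ell}E_{j,\ell}\Big)+\tfrac1L\sum_{\ell}\mathcal{R}_{j,\ell}.
\]
Because the linear terms are mean-zero and (conditionally on the initializers, or after a decoupling argument) nearly independent across $\ell$, the averaged linear term concentrates at order $\frac{\sigma}{\lambda_{\min}}\sqrt{pr/L}$ — this is where the $1/\sqrt L$ variance reduction enters — while the averaged remainder is still $O(pr\sigma^2/\lambda_{\min}^2)$ since every summand is. The third step is the final aggregation SVD: I would show that applying $\mathrm{svd}_{r_j}$ to the averaged matrix, which differs from the rank-$r_j$ projector $U_jU_j^\top$ by something of Frobenius norm $\lesssim\frac{\sigma}{\lambda_{\min}}\sqrt{pr/L}+\frac{pr\sigma^2}{\lambda_{\min}^2}=o(1)$, changes the error only by a constant factor (best rank-$r_j$ approximation is within a factor $2$–$3$ of the perturbation in Frobenius norm, then another Davis--Kahan step converts back to $\rho$). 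Taking a union bound over $j\in[J]$ gives the $\sup_j$ in the statement.

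The main obstacle is controlling the dependence structure in the averaged linear term and, relatedly, the fact that each local estimator $\hU_{j,\ell}$ is built from a \emph{data-dependent} initializer $\hU_{k,\ell}^{(0)}$ computed from the same tensor $\calT_\ell$. Unlike matrix PCA, here $\mathcal{L}_j(E_{j,\ell})$ depends on $\calT_\ell$ both through the explicit noise and through $W_{j,\ell}$ (via the initializer), so the summands are not cleanly i.i.d. mean-zero matrices. I expect to handle this by isolating the leading stochastic component $\mathcal{L}_j(\mathcal{P}_{U}\,\calM_j(\calZ_\ell))$ — which \emph{is} an i.i.d.\ Gaussian-linear functional across $\ell$ and drives the $\sqrt{pr/L}$ rate via a matrix Bernstein or $\chi^2$-concentration bound — and absorbing the initializer-coupling discrepancy into the $O(pr\sigma^2/\lambda_{\min}^2)$ bias term, using the SNR condition $\sqrt{pr}=o(\lambda_{\min}/\sigma)$ and $r^{J-1}=O(p)$ to guarantee the cross terms from the Kronecker structure $W_{j,\ell}-I$ are genuinely lower order. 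Verifying that the higher-order remainder is uniformly $O(pr\sigma^2/\lambda_{\min}^2)$ across all $L$ machines — rather than inflating by a $\log L$ or $\sqrt L$ factor — is the delicate bookkeeping step, and is where the assumption $L=O(p^{c_3})$ is used so that the exponential tail $e^{-cp}$ survives the union bound.
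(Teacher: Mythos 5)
Your proof is essentially the one in the paper: both isolate a leading linear term in the perturbation expansion of $\hU_{j,\ell}\hU_{j,\ell}^{\top}$ that depends on $\calZ_\ell$ only through its projection onto the \emph{true} subspaces $U_{-j}$, $U_{j\perp}$ (hence is genuinely i.i.d.\ Gaussian across $\ell$ and averages down by $1/\sqrt{L}$), absorb all initializer-dependent coupling into higher-order remainders bounded uniformly over $\ell$ on a high-probability good event, and then apply the same expansion once more after the final aggregation SVD, with $L=O(p^{c_3})$ used exactly as you say, to keep the union-bound tail below $e^{-cp}$. The one local imprecision is that $W_{j,\ell}=\bigotimes_{k\ne j}U_k^{\top}\hU_{k,\ell}^{(0)}$ is not close to the identity but only to some orthogonal matrix (initializers are identified only up to rotation); this is harmless because the estimator uses the top left singular space of $M_{j,\ell}$, equivalently the top eigenspace of $M_{j,\ell}M_{j,\ell}^{\top}$, where that rotation cancels.
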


Theorem \ref{thm:1} establishes the error rate of the estimators $\big\{\hU_j\big\}$ obtained by Algorithm \ref{alg:homo}, which can be explained by a bias-variance decomposition. We take $J=3$, $j=1$ for an example. On each machine $\ell$, the estimation error of the local estimator $\hU_{1,\ell} $ can be decomposed as
\begin{equation}\label{eq:b-v-decomp}
\hU_{1,\ell}\hU^{\top}_{1,\ell} - U_1U_1^{\top} = U_1\Lambda_{1}^{-2}G_1 \big(U_{2}^{\top} \otimes U_{3}^{\top}\big)Z^{\top}_{1,\ell}U_{1\perp}U_{1\perp}^{\top} +  U_{1\perp}U_{1\perp}^{\top}Z_{1,\ell} \left(U_{2} \otimes U_{3} \right)G_1^{\top}\Lambda_{1}^{-2}U_1^{\top}+ R_{1,\ell},
\end{equation}
where $G_1 = \calM_1(\calG)$, $Z_{1,\ell}=\calM_1(\calZ_{\ell})$, $U_{1\perp}$ is the orthogonal complement of $U_1$, and $R_{1,\ell}$ is a remainder term. When the signal-to-noise ratio (SNR) satisfies $\sqrt{pr}=o(\lambda_{\min}/\sigma)$, the Frobenius norm of the first two mean-zero terms on the RHS of \eqref{eq:b-v-decomp} is of the order $O\left(\sqrt{pr}\sigma \lambda_{\min}^{-1}\right)$, and the remainder term $R_{1,\ell}$ has a higher order $O\big(pr\sigma^2\lambda_{\min}^{-2}\big)$. By averaging the projection matrices on all machines, the order of the first two terms can be reduced to $O\big(\sqrt{pr}\sigma \lambda_{\min}^{-1}L^{-1/2}\big)$, while that of $R_{1,\ell}$ does not change, leading to the error rate in \eqref{eq:thm1-bound}. When the SNR is sufficiently large with respect to $L$, the first term in \eqref{eq:thm1-bound} dominates the second one, which leads to the following corollary.

\begin{cor}\label{cor:1}
Under the same assumptions in Theorem \ref{alg:homo}, if we further assume that $\lambda_{\min} / \sigma \gtrsim \sqrt{prL} $, then we have
\begin{equation}\label{eq:cor1-bound}
\sup_{j}\rho\big(\hU_j, U_j\big) \leq \frac{\widetilde{C}_2\sigma}{\lambda_{\min}}\sqrt{\frac{pr}{L}},
\end{equation}
with probability at least $1- \widetilde C_1e^{- \widetilde c_1 p}$ for some constants $ \widetilde C_1$ , $ \widetilde c_1$, and $ \widetilde C_2$.
\end{cor}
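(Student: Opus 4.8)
## Proof Proposal for Corollary~\ref{cor:1}

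The plan is to derive Corollary~\ref{cor:1} directly from Theorem~\ref{thm:1} by comparing the two terms in the bound \eqref{eq:thm1-bound}. The entire content of the corollary is that, under the additional assumption $\lambda_{\min}/\sigma \gtrsim \sqrt{prL}$, the first term $\frac{\sigma}{\lambda_{\min}}\sqrt{pr/L}$ in \eqref{eq:thm1-bound} dominates (up to a constant) the second term $\frac{pr\sigma^2}{\lambda_{\min}^2}$, so the whole bound collapses to a constant multiple of the first term alone. The probability statement is inherited verbatim from Theorem~\ref{thm:1}, so there is nothing new to prove there.

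First I would verify that the hypotheses of Theorem~\ref{thm:1} are not violated by the strengthened SNR condition: the assumption $\lambda_{\min}/\sigma \gtrsim \sqrt{prL}$ is strictly stronger than the condition $\sqrt{pr} = o(\lambda_{\min}/\sigma)$ required in Theorem~\ref{thm:1} (since $L \to \infty$, or at least $L \geq 1$), so Theorem~\ref{thm:1} applies and \eqref{eq:thm1-bound} holds on the stated high-probability event. Next I would carry out the elementary comparison of the two terms. Dividing the second term by the first gives
\[
\frac{pr\sigma^2/\lambda_{\min}^2}{(\sigma/\lambda_{\min})\sqrt{pr/L}} = \frac{\sigma}{\lambda_{\min}}\sqrt{prL}.
\]
By the assumption $\lambda_{\min}/\sigma \gtrsim \sqrt{prL}$, this ratio is $O(1)$ — in fact it can be made at most $1$ by an appropriate choice of the implied constant. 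Hence the second term is bounded by a constant multiple of the first, and \eqref{eq:thm1-bound} yields
\[
\sup_j \rho\big(\hU_j, U_j\big) \leq \widetilde C_2\Big(\tfrac{\sigma}{\lambda_{\min}}\sqrt{\tfrac{pr}{L}} + \tfrac{pr\sigma^2}{\lambda_{\min}^2}\Big) \leq 2\widetilde C_2\,\tfrac{\sigma}{\lambda_{\min}}\sqrt{\tfrac{pr}{L}},
\]
which is \eqref{eq:cor1-bound} after relabeling the constant (absorbing the factor $2$ into $\widetilde C_2$).

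There is essentially no obstacle here: the corollary is a one-line consequence of the main theorem obtained by balancing the bias and variance terms, and the only thing to be careful about is bookkeeping of the constants (making sure the $\gtrsim$ in the hypothesis is read with a large enough implied constant that the ratio above is genuinely $\leq 1$, and then folding the resulting factor of $2$ into the reported constant $\widetilde C_2$). The high-probability event and all constants $\widetilde C_1, \widetilde c_1$ are passed through unchanged from Theorem~\ref{thm:1}. If desired, one could also remark that the rate $\frac{\sigma}{\lambda_{\min}}\sqrt{pr/L}$ matches the error rate of the pooled estimator $\hU_{\mathrm{pooled},j}$ (whose noise has variance $\sigma^2/L$), confirming that the distributed algorithm is rate-optimal in this regime, but this is commentary rather than part of the proof.
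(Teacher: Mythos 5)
Your argument is correct and coincides with the paper's (implicit) justification: the corollary is obtained by noting that under $\lambda_{\min}/\sigma \gtrsim \sqrt{prL}$ the second term $pr\sigma^2\lambda_{\min}^{-2}$ in \eqref{eq:thm1-bound} is dominated by the first term $\sigma\lambda_{\min}^{-1}\sqrt{pr/L}$, after which the constant is absorbed. Your bookkeeping of the ratio and of the inherited high-probability event is exactly what is needed; nothing is missing.
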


Corollary \ref{cor:1} shows that, when the SNR satisfies $ \lambda_{\min} / \sigma \gtrsim \sqrt{prL} $,  the estimator $\hU_j$ achieves the error rate of  $O(\sqrt{pr}\sigma \lambda_{\min}^{-1}L^{-1/2})$, which matches the minimax optimal lower bound. 
Concretely, for a tensor $\calT=\calT^*+\calZ$ with $\calT^*$ satisfying \eqref{eq:model} and $\calZ$ having i.i.d. $\calN(0, \sigma^2)$ entries, Theorem 3 in \cite{zhang2018tensor} shows that
\begin{equation}\label{eq:lower-bound}
\inf_{\hU}\sup_{\calT^* \in \mathcal{F}_{\bm{p}, \bm{r}}(\lambda)} \E r_j^{-1/2}\big\|\sin \Theta (\hU, U_j)\big\|_{\rm F} \gtrsim \Big(\frac{\sqrt{p_j}}{\lambda/\sigma} \wedge 1\Big),
\end{equation} 
where $\mathcal{F}_{\bm{p}, \bm{r}}(\lambda)$  is the class of tensors with dimension $\bm{p}=(p_1, \dots, p_J)$, rank $\bm{r}=(r_1,\dots,r_J)$, and minimum singular value $\lambda$ over all matricizations of $\calT^*$. 
Recall that if we are allowed to pool all the tensors $\{\calT_{\ell}\}$ on a single machine and average them, the averaged tensor satisfies $\overline{\calT} = \calT^* + \overline{\calZ}$, where $\overline{\calZ}$ has i.i.d. $\calN(0, \frac{\sigma^2}{L})$ entries. By \eqref{eq:lower-bound} and the equivalence between the $\rho$ distance and $\sin \Theta$ distance, the minimax error rate that the pooled estimator $\hU_{\mathrm{pooled}, j}$ (defined in Section \ref{sec:homo-setup}) can achieve is the same as the rate in \eqref{eq:cor1-bound}, which is the optimal rate one can expect in a non-distributed setting. Therefore, our proposed method enjoys a sharp rate when the SNR is sufficiently large.

\begin{remark}\label{rmk:lower-bound} 
{The second-order term $pr\sigma^2\lambda_{\min}^{-2}$  in Theorem \ref{thm:1} cannot be improved. We establish this through a lower bound result in Section B of the supplementary material, which shows that when the condition  $\lambda_{\min} / \sigma \gtrsim \sqrt{prL} $ is violated, there exists a target tensor $\calT^*$ and a set of initial estimators that, despite satisfy all the assumptions in Theorem \ref{thm:1}, achieve only the sub-optimal convergence rate $pr\sigma^2\lambda_{\min}^{-2}$. This result demonstrates the sharpness of the error rate in \eqref{eq:thm1-bound} and implies the necessity of the condition $\lambda_{\min} / \sigma \gtrsim \sqrt{prL} $ in Corollary \ref{cor:1} for achieving the minimax optimal rate.}
\end{remark}

The assumption for the initial estimators $\big\{\hU_{j, \ell}^{(0)}\big\}$ in Theorem \ref{thm:1} is consistent with Assumption 1 in \cite{xia2022inference}, which can be achieved by the HOOI algorithm under a requirement that $\lambda_{\min}/\sigma \gtrsim p^{J/4}$ \citep{zhang2018tensor}. In some scenarios, a certain number of tensors can be pooled together and averaged into a new tensor $\calT'_{\ell}$ with a noise level $\sigma'_{\ell} < \sigma$. The requirement can be relaxed into $\lambda_{\min}/\min_{\ell} \sigma'_{\ell} \gtrsim p^{J/4}$ in such scenarios by computing the initial estimators using the locally-aggregated tensor with the smallest noise level. On the other hand, with initial estimators that satisfy the assumption in Theorem \ref{thm:1}, our method only requires a weaker condition $\lambda_{\min} / \sigma \gtrsim \sqrt{prL} $ to achieve the optimal rate, as shown in Corollary \ref{cor:1}. 

Moreover, we do not require sample splitting for initialization, that is, the set of tensors $\{\calT_{\ell}\}$ used for initialization can be the same as that used for the distributed estimation procedure in Algorithm \ref{alg:homo}. Indeed, our theoretical error bound \eqref{eq:thm1-bound} uniformly holds for all initial estimators that satisfy the assumption in Theorem \ref{thm:1}, since the first two terms in decomposition \eqref{eq:b-v-decomp} do not rely on the initial estimators, and the remainder term $R_{1,\ell}$ can be uniformly bounded.

\vspace{-.8em}
\section{Estimation for Heterogeneous Tensors}
\label{sec:hetero}

In this section, we generalize Algorithm \ref{alg:homo} to a heterogeneous setting where we allow different truth tensors $\calT^*$ on different machines. Suppose we observe $L$ tensors $\{\calT_{\ell}\}_{\ell=1}^L$ distributed on $L$ machines, and $\calT_{\ell}=\calT_{\ell}^*+\calZ_{\ell}$, where $\calZ_{\ell}$ has i.i.d. $\calN(0, \sigma^2)$ entries. Assume the truth $\calT^*_{\ell} \in \R^{p_1 \times p_2 \cdots \times p_J}$ on machine $\ell$ satisfies a Tucker decomposition
\begin{equation}\label{eq:model_hetero}
\calT_{\ell}^* = \calG_{\ell} \times_1 [U_1 \; V_{1,\ell}] \times_2 [U_2 \; V_{2, \ell}] \cdots \times_J [U_J \; V_{J, \ell}],
\end{equation}
where $U_j \in \mathbb{O}^{p_j \times r_{j,U}}$, $V_{j,\ell}\in\mathbb{O}^{p_j\times r_{j,V,\ell}}$,  $U_j^{\top}V_{j,\ell}=0$, $\calG_{\ell} \in \R^{ r_{1,\ell} \times  r_{2,\ell} \cdots \times  r_{J,\ell}}$, and $r_{j,\ell}=r_{j, U}+r_{j,V,\ell}$. For each $j$, the component $U_j$ spans a common singular subspace shared by all tensors $\{\calT_{\ell}\}$, while  $V_{j,\ell}$ spans an individual subspace specific to each tensor. Moreover, the core $\calG_{\ell}$ is allowed to be different for different $\ell$. The goal is to estimate the shared singular subspace and the individual subspace separately for $j \in [J]$ and $\ell \in [L]$. Note that in this section, we assume the noise level $\sigma^2$ is identical on all machines for clear presentation. The heterogeneity on $\sigma^2$  will be further investigated in Section \ref{sec:transfer}.

In addition to \eqref{eq:model_hetero}, we need a regularity condition to ensure the identifiability of the model. As discussed in Section \ref{sec:homo-setup},  the model defined in \eqref{eq:model} is non-identifiable since it is equivalent to the model with $\widetilde{\calG}=\calG \times_1 O^{\top}_{1} \times_2 O^{\top}_{2}\cdots \times_J O^{\top}_{J}$ and $\widetilde U_j = U_j O_j$ , for any $O_j \in \mathbb{O}^{r_j \times r_j}$, $j\in[J]$. In the homogeneous case, the non-identifiability has no impact on estimation since the singular subspace $\col(U_j)$ remains invariant under orthogonal transformation. 
However, under the heterogeneous setting \eqref{eq:model_hetero}, the partition of the common component $U_j$ and the individual component $V_{j,\ell}$ is not orthogonally invariant, which necessitates identifying a fixed $O_j$ for each $j$. Therefore, we require that the core tensors $\calG_{\ell}$ satisfy
\begin{equation}\label{eq:identification} \calM_j(\calG_{\ell})\calM_j(\calG_{\ell})^{\top}=\Lambda_{j,\ell}^2,
\end{equation}
where $\Lambda_{j,\ell}$ is a diagonal matrix with decreasing singular values for all $j \in [J]$, $\ell \in [L]$. 

\begin{remark}\label{rmk:IC} 
{Condition \eqref{eq:identification} is closely related to the all-orthogonality condition in the HOSVD framework \citep{de2000multilinear}. The all-orthogonality condition states that the mode-$j$ subtensors of the core tensor (i.e., the subtensor obtained by fixing the $j$-th index and letting the other indices be free) are mutually orthogonal with respect to the scalar product, which leads to orthogonal rows in $\calM_j(\calG_{\ell})$  and implies \eqref{eq:identification}.}

{To clearly illustrate how this condition guarantees the uniqueness of $\col(U_j)$, we first restate the non-identifiablility of our heterogeneous model,
$
\calT^*_{\ell} = \calG_{\ell} \times_1 [U_{1,\ell} \, V_{1,\ell}]  \times_2 [U_{2,\ell} \, V_{2,\ell}] \cdots \times_J [U_{J, \ell} \,V_{J, \ell}]
$,
where we initially do not assume that the components $U_{j,\ell}$ are shared across the tensors. This model is non-identifiable: it remains unchanged under the transformation $\widetilde{\calG}_{\ell}=\calG_{\ell} \times_1 O^{\top}_{1,\ell} \times_2 O^{\top}_{2,\ell}\cdots \times_J O^{\top}_{J,\ell}$ and $\big[\widetilde U_{j,\ell} \, \widetilde V_{j,\ell}\big]= [U_{j,\ell} V_{j,\ell}] O_{j,\ell}$, for any orthogonal matrices $\{O_{j,\ell}\in \mathbb{O}^{r_{j,\ell} \times r_{j,\ell}}\}_{j \in [J]}$.  
Note that $\calM_j\big(\calG_{\ell}\big)=O_{j,\ell}\calM_j\big(\widetilde \calG_{\ell}\big)\big(\bigotimes\limits_{j' \neq j} O_{j',\ell}^{\top}\big)$. There exists a unique $O_{j,\ell}$---the left singular vector matrix of $\calM_j\big(\widetilde \calG_{\ell}\big)$---such that $\calM_j(\calG_{\ell})\calM_j(\calG_{\ell})^{\top}=O_{j,\ell}^{\top}\calM_j\big(\widetilde \calG_{\ell}\big)\calM_j\big(\widetilde \calG_{\ell}\big)^{\top}O_{j,\ell}$ is a diagonal matrix with decreasing singular values. This uniquely determines $[U_{j,\ell} \, V_{j,\ell}]=\big[\widetilde U_{j,\ell} \, \widetilde V_{j,\ell}\big]O_{j,\ell}^{\top}$. Based on this condition, we further assume that $U_{j,\ell}=U_j$ to be the common components shared across all tensors, yielding model \eqref{eq:model_hetero}.} 
\end{remark}

\vspace{-.6em}
\subsection{Distributed Tensor PCA for Heterogeneous Tensors}

\begin{algorithm}[!t] 
\spacingset{1.4}
\caption{Distributed Tensor PCA for Heterogeneous Data} 
\label{alg:hetero} 
\vspace*{0.08in} {\bf Input:}
Tensors distributed on local machines $\{\calT_{\ell}\}$ and initial estimators $\big\{\big[\hU_{1,\ell}^{(0)} \, \hV_{1,\ell}^{(0)}\big], \big[\hU_{2,\ell}^{(0)} \, \hV_{2,\ell}^{(0)}\big], \dots, \big[\hU_{J,\ell}^{(0)} \, \hV_{J,\ell}^{(0)}\big]\big\}$, where $\ell=1,2,\dots,L$. 

{\bf Output:}  Estimators $\big\{\hU_1, \hV_{1, \ell}, \hU_2, \hV_{2, \ell}, \dots, \hU_J, \hV_{J, \ell}\big\}$.
\begin{algorithmic}[1]  
\For{$\ell=1,2,\dots,L$} \do \\
\For{$j=1,2,\dots,J$}
\State Compute a local estimator 
$\hU_{j, \ell}=\mathrm{svd}_{r_{j, U}}\big(\widetilde M_{j,\ell}\big)$, where $\widetilde M_{j,\ell}$ is defined in \eqref{eq:local_hetero};
\State Send  $\hU_{j,\ell}$ to the central machine; 
\EndFor
\EndFor
\For{$j=1,2,\dots, J$}
\State On the central machine, compute  $\hU_j = \mathrm{svd}_{r_{j, U}}\Big[\frac{1}{L}\sum\limits_{\ell=1}^{L}\hU_{j, \ell}\hU^{\top}_{j, \ell}\Big]$;
\EndFor
\State Send $\hU_j$ to all machines;
\For{$\ell=1,2,\dots, L$}
\State Compute 
$\hV_{j, \ell}=\mathrm{svd}_{r_{j, V, \ell}}\Big[\big(I_{p_j}-\hU_{j}\hU_{j}^{\top}\big)\widetilde M_{j,\ell}\Big]$;
\EndFor
\end{algorithmic} 
\end{algorithm} 

We propose Algorithm \ref{alg:hetero} for estimating the subspace spanned by $U_j$ and $V_{j,\ell}$ under the heterogeneous setting. The initial estimators $\big\{[\hU_{j,\ell}^{(0)}\; \hV_{j,\ell}^{(0)}]\big\}$ can be obtained in the same way as the homogeneous case, for example, by the HOOI algorithm, and do not need to be partitioned into $\hU_{j,\ell}^{(0)}$ and $\hV_{j,\ell}^{(0)}$. Similar to Algorithm \ref{alg:homo}, we first obtain a local estimator $\hU_{j,\ell}$ for $U_j$ on each machine $\ell$ by taking the top singular vectors of $\widetilde{M}_{j,\ell}$, where
\begin{align}
\widetilde{M}_{j,\ell}&:=\calM_j\big(\calT_{\ell} \times_{1} \big[\hU_{1,\ell}^{(0)} \; \hV_{1,\ell}^{(0)}\big]^{\top} \cdots \times_{j-1} \big[\hU_{j-1,\ell}^{(0)} \; \hV_{j-1,\ell}^{(0)}\big]^{\top} \times_{j+1} \big[\hU_{j+1,\ell}^{(0)} \; \hV_{j+1,\ell}^{(0)}\big]^{\top} \cdots 
\times_{J}\big[\hU_{J,\ell}^{(0)} \; \hV_{J,\ell}^{(0)}\big]^{\top} \big)\notag\\
&\approx [U_j \; V_{j,\ell}] \calM_j(\calG_{\ell}).	\label{eq:local_hetero}
\end{align}
We then send $\hU_{j,\ell}$ to the central machine and aggregate the projection matrices $\hU_{j,\ell}\hU_{j,\ell}^{\top}$ to compute a global estimator $\hU_j$.  To estimate $\col(V_{j,\ell})$, we send $\hU_{j}$ back to each machine and\\
compute the top singular vectors of $\big(I_{p_j}-\hU_{j}\hU_{j}^{\top}\big)\widetilde M_{j,\ell}$, the projection of $\widetilde{M}_{j,\ell}$ onto the orthogonal space of $\col(\hU_j)$, whose top singular subspace provides a local estimator for $\col(V_{j,\ell})$.

\vspace{-.6em}
\subsection{Theoretical Guarantee}
\label{sec:theory-hetero}

In this section, we establish the statistical error rate for the estimators in Algorithm \ref{alg:hetero}. For $j\in[J]$, $\ell \in [L]$, let $\Lambda_{j,\ell}$ be the
$ r_{j,\ell} \times  r_{j,\ell} $ singular value matrix of $\calM_j(\calG_{\ell})$ defined in \eqref{eq:identification}.  Define $\lambda_{\max}$, $\lambda_{\min}$ to be the maximum and minimum singular value over all $\Lambda_{j,\ell}$, respectively, and let $\kappa_0 = \lambda_{\max} \lambda^{-1}_{\min}$.  Moreover, define $\Delta=\min_{j,\ell}\{\lambda_{r_{j,U},j,\ell}-\lambda_{r_{j,U}+1,j,\ell}\}$ and $\kappa=\lambda_{\max}/\Delta$, where $\lambda_{r,j,\ell}$ denotes the $r$-th largest singular value of $\Lambda_{j,\ell}$. Additionally, let $r=\max_{j,\ell}  r_{j, \ell}$ and $r_V=\max_{j, \ell}r_{j, V, \ell}$.

\begin{thm}
Assume that there exist constants $C_1, c_1, C_2$ such that, with probability at least $1-C_1e^{-c_1p}$, $\big\|\hU_{j,\ell}^{(0)}\hU_{j,\ell}^{(0)\top}+\hV_{j,\ell}^{(0)}\hV_{j,\ell}^{(0)\top}- U_jU_j^{\top}-V_{j, \ell}V_{j, \ell}^{\top}\big\|_{2}\leq C_2\sqrt{p}\sigma\lambda_{\min}^{-1}$ for all $j, \ell$. If $p_j \asymp p$ for all $j$, $L =O(p^{c_3})$ for some $c_3>0$, $r^{J-1} =O(p)$, $\kappa_0 =O(1)$, $\kappa=O(1)$, and $\sqrt{pr}=o(\min(\Delta, \lambda_{\min})/\sigma)$, then we have
\begin{equation}
\label{eq:thm2-U}
\sup_j \rho\big(\hU_{j}, U_j\big)\leq \widetilde C_2\left(\sqrt{\frac{pr}{L}}\frac{\sigma}{\Delta}+\frac{pr\sigma^2}{\Delta^2}\right),
\end{equation}
and
\begin{equation}\label{eq:thm2-V}
\sup_{j, \ell}	\rho\big(\hV_{j, \ell}, V_{j, \ell}\big) \leq \widetilde C_2\left(\frac{\sqrt{pr_{V}}\sigma(1+\sqrt{r/L})}{\lambda_{\min}}+\frac{\sqrt{r_V}pr\sigma^2}{\lambda_{\min}^2}\right),
\end{equation}
with probability at least $1- \widetilde C_1e^{- \widetilde c_1 p}$ for some constants $ \widetilde C_1$ , $ \widetilde c_1$, and $ \widetilde C_2$, where $\hU_j$ and $\hV_{j, \ell}$'s are the output of Algorithm \ref{alg:hetero}.
\label{thm:hetero}
\end{thm}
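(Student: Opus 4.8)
The plan is to extend the bias--variance analysis behind Theorem \ref{thm:1} (cf.\ the decomposition \eqref{eq:b-v-decomp}), accommodating two new features: the target of $\mathrm{svd}_{r_{j,U}}(\widetilde M_{j,\ell})$ is a \emph{proper sub-block} of the signal subspace rather than the whole of it, which forces us to track the block separation $\Delta$ alongside the smallest signal strength $\lambda_{\min}$ (under $\kappa_0,\kappa=O(1)$ these are of the same order, so the two forms of the denominator are interchangeable); and the estimate of $V_{j,\ell}$ is built from $(I_{p_j}-\hU_j\hU_j^\top)\widetilde M_{j,\ell}$, so the error of the common-component estimator propagates into it. The first step is a local signal-plus-noise representation: writing $\calT_\ell=\calT_\ell^*+\calZ_\ell$, using that $\calT_\ell^*$ lies in the tensor product of the true subspaces $\col([U_{j'}\;V_{j',\ell}])$ while the initial estimators lie within $C_2\sqrt p\sigma\lambda_{\min}^{-1}$ of those subspaces, I would show
\[
\widetilde M_{j,\ell}=[U_j\;V_{j,\ell}]\,\calM_j(\calG_\ell)\,O_{\neq j}+E_{j,\ell}+\Xi_{j,\ell},
\]
where $O_{\neq j}$ is an orthogonal matrix absorbing the Kronecker product of rotations from the other modes, $E_{j,\ell}$ is the matricized projected noise (conditionally a $\sigma^2$-variance Gaussian matrix of size $p_j\times\prod_{j'\ne j}r_{j',\ell}$, with $\|E_{j,\ell}\|_2\lesssim\sigma\sqrt p$ with probability $1-C_1e^{-c_1p}$ uniformly over admissible initializations by a decoupling argument as in \cite{zhang2018tensor,xia2022inference}, using $r^{J-1}=O(p)$), and $\Xi_{j,\ell}$ collects the initialization error, of operator norm of order $pr\sigma^2\lambda_{\min}^{-1}$. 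By the identification condition \eqref{eq:identification}, the top-$r_{j,U}$ left singular subspace of the leading term is exactly $\col(U_j)$, with the $r_{j,U}$-th and $(r_{j,U}{+}1)$-th singular values separated by at least $\Delta$.

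The crux is the first-order expansion for $\hU_j$, where the argument departs from \cite{xia2022inference}: because we extract a non-leading subspace, the spectral-projector (contour-integral/Neumann-series) expansion must resolve the coupling between the $U_j$- and $V_{j,\ell}$-blocks of the signal. I would derive, in analogy with \eqref{eq:b-v-decomp},
\[
\hU_{j,\ell}\hU_{j,\ell}^\top-U_jU_j^\top=S_{j,\ell}+R_{j,\ell},
\]
where, after replacing the initial estimators by their population targets at a cost absorbed into the remainder, $S_{j,\ell}$ is a fixed linear, mean-zero functional of $\calZ_\ell$ alone (hence independent across $\ell$) of the bilinear form $U_{j\perp}U_{j\perp}^\top\calM_j(\calZ_\ell)B_{j,\ell}U_j^\top+(\text{transpose})$ for a fixed $B_{j,\ell}$ with $\|B_{j,\ell}\|_{\rm F}\lesssim\sqrt r/\lambda_{\min}$, so that $\|S_{j,\ell}\|_{\rm F}\lesssim\sigma\sqrt{pr}/\Delta$ w.h.p., and $R_{j,\ell}$ is a remainder with $\|R_{j,\ell}\|_{\rm F}\lesssim pr\sigma^2/\Delta^2$ w.h.p.\ uniformly over all initializations meeting the hypothesis; the SNR condition $\sqrt{pr}=o(\min(\Delta,\lambda_{\min})/\sigma)$ together with $\kappa_0,\kappa=O(1)$ guarantees convergence of the expansion and that $R_{j,\ell}$ is genuinely lower order. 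Averaging gives $\tfrac1L\sum_\ell\hU_{j,\ell}\hU_{j,\ell}^\top=U_jU_j^\top+\tfrac1L\sum_\ell S_{j,\ell}+\tfrac1L\sum_\ell R_{j,\ell}$; since $\tfrac1L\sum_\ell S_{j,\ell}$ has the same bilinear form with the averaged noise $\overline{\calZ}$ of variance $\sigma^2/L$, a Gaussian-chaos (or matrix-Bernstein) bound gives $\|\tfrac1L\sum_\ell S_{j,\ell}\|_{\rm F}\lesssim\sigma\sqrt{pr/L}/\Delta$, while $\|\tfrac1L\sum_\ell R_{j,\ell}\|_{\rm F}\le\max_\ell\|R_{j,\ell}\|_{\rm F}\lesssim pr\sigma^2/\Delta^2$; a union bound over $j\in[J]$ and $\ell\in[L]=O(p^{c_3})$ is absorbed by the $e^{-c_1p}$ tails. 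Finally, since the aggregated matrix is within $o(1)$ of the rank-$r_{j,U}$ projector $U_jU_j^\top$ (eigengap $\asymp1$), a Davis--Kahan bound \citep{davis1970rotation} yields $\rho(\hU_j,U_j)\lesssim\|\tfrac1L\sum_\ell(\hU_{j,\ell}\hU_{j,\ell}^\top-U_jU_j^\top)\|_{\rm F}$, which is \eqref{eq:thm2-U}.

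For the individual components I would write $(I_{p_j}-\hU_j\hU_j^\top)\widetilde M_{j,\ell}=(I_{p_j}-U_jU_j^\top)\widetilde M_{j,\ell}-(\hU_j\hU_j^\top-U_jU_j^\top)\widetilde M_{j,\ell}$. Because $U_j^\top V_{j,\ell}=0$, the first term equals $V_{j,\ell}\cdot(\text{bottom block of }\calM_j(\calG_\ell))O_{\neq j}+(I_{p_j}-U_jU_j^\top)(E_{j,\ell}+\Xi_{j,\ell})$, a rank-$r_{j,V,\ell}$ matrix with column space $\col(V_{j,\ell})$ and singular values $\ge\lambda_{\min}$, perturbed by a $\sigma^2$-Gaussian block of operator norm $\lesssim\sigma\sqrt p$; a $\sin\Theta$ perturbation bound gives the intrinsic single-tensor rate $\sqrt{pr_V}\sigma/\lambda_{\min}$ plus the higher-order $\sqrt{r_V}pr\sigma^2/\lambda_{\min}^2$ coming from $\Xi_{j,\ell}$ and the quadratic perturbation term. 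The second term has $\|(\hU_j\hU_j^\top-U_jU_j^\top)\widetilde M_{j,\ell}\|_{\rm F}\lesssim\rho(\hU_j,U_j)\lambda_{\max}$, which after dividing by the $V$-block gap ($\ge\lambda_{\min}$) and using $\kappa=O(1)$ contributes $\lesssim\sqrt{pr_V}\sigma\sqrt{r/L}/\lambda_{\min}$ from the leading part of the bound \eqref{eq:thm2-U} (the extra $\sqrt{r_V}$ arising on passing from operator to Frobenius norm for an $r_{j,V,\ell}$-dimensional subspace) and $\lesssim\sqrt{r_V}pr\sigma^2/\lambda_{\min}^2$ from its remainder; combining yields \eqref{eq:thm2-V}.

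The main obstacle is the first-order expansion of Step two: producing a clean linear-in-noise term for a \emph{non-leading} singular subspace that (i) has the correct $\sqrt{pr}/\Delta$ size after resolving the $U_j$-versus-$V_{j,\ell}$ block coupling, and (ii) is free of the data-dependent initialization, so that cross-machine independence---and hence the $L^{-1/2}$ averaging gain---survives, while bounding the remainder $R_{j,\ell}$ uniformly over all admissible initializations (needed because on machine $\ell$ the initialization is itself a function of $\calZ_\ell$). A secondary technical point is that the projected noise $E_{j,\ell}$ is not Gaussian, handled by the standard decoupling/net argument of \cite{zhang2018tensor,xia2022inference} together with the SNR condition.
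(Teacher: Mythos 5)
Your plan matches the paper's proof of Theorem 3.1 in all essentials: the paper implements the spectral-projector first-order expansion via Lemma~2 of \cite{fan2019distributed} (the per-column resolvent map $f_{j,\ell}$), obtains the $\sqrt{pr/L}\,\sigma/\Delta$ rate for the averaged linear term by computing its exact Gaussian covariance $\Sigma_j$ with $\|\Sigma_j\|_2\lesssim L^{-1}\sigma^2\lambda_{\max}^2\lambda_{\min}^{-2}\Delta^{-2}$, controls the remainder uniformly at order $pr\sigma^2/\Delta^2$, and then propagates the $\hU_j$ error into $\hV_{j,\ell}$ exactly as you describe before applying Davis--Kahan with the $\lambda_{\min}^2$ eigengap. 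One caveat worth noting: in the heterogeneous setting the local first-order terms $S_{j,\ell}$ carry $\ell$-dependent coefficient matrices $B_{j,\ell}$ (through $\calG_\ell$ and $V_{j,\ell}$), so their average is \emph{not} a fixed bilinear form applied to $\overline{\calZ}$; the $L^{-1/2}$ gain instead comes from concentration of a sum of independent mean-zero Gaussian matrices, which is the matrix-Bernstein/Gaussian-chaos route you also mention and which the paper makes precise through the block-diagonal covariance bound.
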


The statistical rate in \eqref{eq:thm2-U} is similar to the rate established in Theorem \ref{thm:1} in the homogeneous case with the SNR changing from $\lambda_{\min}/\sigma$ to $\Delta/\sigma$. Indeed, the quantity $\Delta$ denotes the minimum gap between the minimum singular value corresponding to $U_j$ and the maximum singular value corresponding to $V_{j,\ell}$ over all $j, \ell$, which indicates the strength of the signal for estimating $U_j$ in this heterogeneous setting and is equal to $\lambda_{\min}$  in the homogeneous setting where $V_{j,\ell}=0$. Similar to Corollary \ref{cor:1}, when SNR is sufficiently large, i.e., $\Delta/\sigma \gtrsim \sqrt{prL}$, our estimator $\hU_j$ enjoys the sharp rate $O\big(\sqrt{pr}\sigma\Delta^{-1}L^{-1/2}\big)$. For the local estimator $\hV_{j,\ell}$, the rate established in \eqref{eq:thm2-V} matches the local rate for $[U \; V]$, $\sqrt{pr}\sigma\lambda_{\min}^{-1}$, under a mild condition that $r_V \lesssim L$ and $(\lambda_{\min} /\sigma) \gtrsim \sqrt{prr_V}$.

{It is worth noting that, the definition of $\Delta$ is related to our theoretical framework that assumes the common components $U_j$ correspond to the large singular values for clear presentation. While extension to settings where $U_j$'s are not necessarily the top singular vectors is technically feasible, we leave it to future work as the analysis would involve introducing numerous additional notations and case-specific definitions without providing significant new insights. We anticipate that the error rate will remain the same as Theorem \ref{thm:hetero}, requiring only a redefinition of the signal strength $\Delta$. This redefined $\Delta$ still represents the minimum eigenvalue gap between singular values of common and individual components, but writing out its explicit form would be notationally cumbersome. } 
\vspace{-.6em}
\subsection{Rank Determination}\label{sec:rank-hetero}

{In this section, we provide an estimation method for determining the ranks $r_{j, U}$ and $r_{j, V, \ell}$ in model \eqref{eq:model_hetero} if they are unknown in practice. Note that in Remark \ref{rmk:rank}, we discuss how to use the existing rank determination methods to obtain an accurate estimate for $r_j$ under the homogeneous setting. Similarly, these methods can be applied to estimate the rank of $\widetilde{M}_{j, \ell}$ defined in \eqref{eq:local_hetero}, i.e., $r_{j, U}+r_{j,V,\ell}$. 
Therefore, we focus on how to obtain a consistent estimate for $r_{j, U}$, assuming that we have already correctly estimated $\widehat{r}_{j, \ell}=r_{j, U}+r_{j,V,\ell}$ for all $\ell$. For each tensor $\calT_{\ell}$, compute the left $\widehat r_{j, \ell}$ singular vectors of $\widetilde{M}_{j,\ell}$ and
denote it by $\hW_{j, \ell}$. We then compute $\overline{W}_{j}=\frac{1}{L}\sum_{\ell=1}^{L}\hW_{j, \ell}\hW_{j, \ell}^{\top}$ and its singular values $\sigma_1\big(\overline{W}_{j}\big)\geq \sigma_2\big(\overline{W}_{j}\big)\geq\dots \geq \sigma_{p_j}\big(\overline{W}_{j}\big)$. Finally, we estimate $r_{j, U}$ as 
\[\widehat{r}_{j, U}=\max\braces{k \mid \sigma_{k}\big(\overline{W}_{j}\big) \geq 1 - \delta_0},\]
for some constant $\delta_0>0$. 
The motivation behind $\widehat{r}_{j, U}$ stems from the fact that $\overline{W}_{j}$ is a consistent estimator for the matrix $U_jU_j^{\top}+\frac{1}{L}\sum_{\ell=1}^{L}V_{j, \ell}V_{j, \ell}^{\top}$. Its eigenvalues exhibit a distinctive pattern: those corresponding to $U_jU_j^{\top}$ are 1, whereas those corresponding to $\frac{1}{L}\sum_{\ell=1}^{L}V_{j, \ell}V_{j, \ell}^{\top}$ should be less than 1 due to the heterogeneity among $\col(V_{j, \ell})$ subspaces. To formalize this intuition, we introduce a heterogeneity measure $\eta_{V}:=\min_{j, \ell}\frac{1}{L-1}\sum_{\ell'\neq \ell}(1-\norm{V_{j, \ell}^{\top} V_{j, \ell'}}_2)$, which quantifies the dissimilarity among the individual components $V_{j, \ell}$ across different tensors. The largest singular value of $V_{j, \ell}^{\top} V_{j, \ell'}$, i.e., $\norm{V_{j, \ell}^{\top} V_{j, \ell'}}_2$, represents the cosine of the smallest principal angle between the subspaces $\col(V_{j, \ell})$ and $\col(V_{j, \ell'})$.  A larger value of $\norm{V_{j, \ell}^{\top} V_{j, \ell'}}_2$ indicates a smaller distance between the corresponding principal vectors of these subspaces. Consequently, $\eta_V$ can be interpreted as the minimum (over $\ell$) of the average distance between the principal vectors of $\col(V_{j, \ell})$ and $\col(V_{j, \ell'})$, for $\ell' \neq \ell$. }

{We then present the consistency results for $\widehat{r}_{j, U}$, whose proof is provided in Section D.2 of the supplementary material.
\begin{thm}\label{thm:rankU}
Under the assumptions of Theorem \ref{thm:hetero}, further assume that  
\[\eta_V=\min_{j, \ell}\frac{1}{L-1}\sum_{\ell'\neq \ell}(1-\norm{V_{j, \ell}^{\top} V_{j, \ell'}}_2)\geq \eta_0, \] 
for some constant $\eta_0>0$. For any $0<\delta_0<\eta_0/3$, we have
\[\Prob\left(\widehat{r}_{j, U}=r_{j, U}\right) \geq 1-\widetilde{C}_1e^{-\widetilde c_1p},\]
for some constants $\widetilde c_1, \widetilde C_1>0$.
\end{thm}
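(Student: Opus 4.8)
\textbf{Proof proposal for Theorem \ref{thm:rankU}.}

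The plan is to show that the averaged projection matrix $\overline{W}_j$ is uniformly close (in spectral norm) to the population target $P_j := U_jU_j^\top + \frac{1}{L}\sum_{\ell=1}^L V_{j,\ell}V_{j,\ell}^\top$, and then to control the spectrum of $P_j$ itself so that its singular values split cleanly into a block equal to $1$ (of multiplicity $r_{j,U}$) and a block bounded away from $1$ by at least $\eta_0$. Combining these two facts with Weyl's inequality will force $\widehat r_{j,U}$ to equal $r_{j,U}$ on the high-probability event. First I would invoke Theorem \ref{thm:hetero} (or rather the local-estimator analysis underlying it): each $\hW_{j,\ell}$ is the top-$\widehat r_{j,\ell}$ left singular subspace of $\widetilde M_{j,\ell}$, whose population version is $[U_j\ V_{j,\ell}]\calM_j(\calG_\ell)$ with column space $\col(U_j)\oplus\col(V_{j,\ell})$; under the stated SNR condition $\sqrt{pr}=o(\min(\Delta,\lambda_{\min})/\sigma)$ the Davis--Kahan/$\sin\Theta$ bound gives $\|\hW_{j,\ell}\hW_{j,\ell}^\top - (U_jU_j^\top + V_{j,\ell}V_{j,\ell}^\top)\|_2 \lesssim \sqrt{p}\,\sigma/\lambda_{\min}$ with probability at least $1 - C_1 e^{-c_1 p}$ (this is essentially the same argument as for the initial-estimator assumption, but now for the refined estimator; one may also simply cite the refined local bound already obtained in the proof of Theorem \ref{thm:hetero}). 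Taking a union bound over the $O(L)$ machines and the fixed $J$ modes — permissible since $L = O(p^{c_3})$ so $L\,e^{-c_1 p}$ is still exponentially small — and averaging, we get $\|\overline{W}_j - P_j\|_2 \lesssim \sqrt{p}\,\sigma/\lambda_{\min} = o(1)$ uniformly in $j$ on an event of probability $1 - \widetilde C_1 e^{-\widetilde c_1 p}$. In particular this deviation is eventually smaller than $\delta_0$ and than $\eta_0 - 3\delta_0$.

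The second, more structural, step is to pin down the eigenvalues of $P_j$. Write $P_j = U_jU_j^\top + Q_j$ with $Q_j := \frac1L\sum_\ell V_{j,\ell}V_{j,\ell}^\top \succeq 0$. Because $U_j^\top V_{j,\ell} = 0$ for every $\ell$, the range of $Q_j$ is orthogonal to $\col(U_j)$, so $P_j$ is block-diagonal with respect to the splitting $\R^{p_j} = \col(U_j) \oplus \col(U_j)^\perp$: it acts as the identity on $\col(U_j)$, contributing exactly $r_{j,U}$ eigenvalues equal to $1$, and as $Q_j$ on the orthogonal complement. Hence it suffices to show every eigenvalue of $Q_j$ is at most $1 - \eta_0$. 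For a unit vector $x \perp \col(U_j)$, $x^\top Q_j x = \frac1L\sum_\ell \|V_{j,\ell}^\top x\|_2^2 = \frac1L\sum_\ell \|V_{j,\ell}V_{j,\ell}^\top x\|_2^2$; I would argue that the heterogeneity hypothesis $\eta_V \ge \eta_0$ prevents all these projections from being simultaneously close to $1$. Concretely, fix any $\ell$; then $\frac1L\sum_{\ell'}\|V_{j,\ell'}V_{j,\ell'}^\top x\|_2^2 \le \frac1L\big(\|V_{j,\ell}V_{j,\ell}^\top x\|_2^2 + \sum_{\ell'\neq\ell}\|V_{j,\ell'}^\top x\|_2^2\big)$, and for $\ell'\neq\ell$ one bounds $\|V_{j,\ell'}^\top x\|_2^2$ using that $x$ has unit norm together with $\|V_{j,\ell}^\top V_{j,\ell'}\|_2 \le 1 - (1-\|V_{j,\ell}^\top V_{j,\ell'}\|_2)$; averaging over $\ell'$ and optimizing the choice of the ``anchor'' index $\ell$ brings in the quantity $\frac{1}{L-1}\sum_{\ell'\neq\ell}(1-\|V_{j,\ell}^\top V_{j,\ell'}\|_2) \ge \eta_0$, yielding $\|Q_j\|_2 \le 1 - c\,\eta_0$ for an absolute constant $c$ (I expect $c$ can be taken to be $1$ after the cleanest bookkeeping, but any positive $c$ suffices once $\delta_0$ is chosen small enough relative to $\eta_0$). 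This is the step I expect to be the main obstacle: the quadratic-form estimate has to convert the \emph{pairwise} subspace-angle bound encoded in $\eta_V$ into a \emph{uniform} spectral bound on the average $Q_j$, and one must be careful that a single worst-case direction $x$ cannot align well with a large fraction of the $V_{j,\ell'}$ at once — the ``$\min_\ell$ of an average'' form of $\eta_V$ is exactly what rules this out, but making the bookkeeping tight (rather than losing an extra factor) requires care.

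Finally I would assemble the pieces. On the good event, by Weyl's inequality $|\sigma_k(\overline W_j) - \sigma_k(P_j)| \le \|\overline W_j - P_j\|_2 < \delta_0$ for all $k$, provided $p$ is large enough that the $o(1)$ deviation above is $<\delta_0$. For $k \le r_{j,U}$ we have $\sigma_k(P_j) = 1$, hence $\sigma_k(\overline W_j) > 1 - \delta_0$, so the threshold is met. For $k = r_{j,U}+1$ we have $\sigma_k(P_j) = \sigma_1(Q_j) \le 1 - \eta_0$ (using the bound from the second step, with $c=1$; if $c<1$ one simply enlarges the gap requirement on $\delta_0$ accordingly), hence $\sigma_k(\overline W_j) < 1 - \eta_0 + \delta_0 \le 1 - \delta_0$ because $\delta_0 < \eta_0/3 \le \eta_0/2$; thus the threshold fails at index $r_{j,U}+1$, and since $\sigma_k(P_j)$ is nonincreasing in $k$ it also fails for all larger $k$. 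Therefore $\max\{k : \sigma_k(\overline W_j) \ge 1-\delta_0\} = r_{j,U}$, i.e. $\widehat r_{j,U} = r_{j,U}$. Taking the intersection of the good events over the finitely many modes $j \in [J]$ (absorbing the constant $J$ into $\widetilde C_1$) gives the claimed probability $1 - \widetilde C_1 e^{-\widetilde c_1 p}$, which completes the proof.
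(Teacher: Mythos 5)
Your high-level plan (concentrate $\overline W_j$ around $P_j = U_jU_j^\top + Q_j$, bound the spectrum of $P_j$, then separate by Weyl) is the same as the paper's, and your first and third steps are fine. The gap is in the step you yourself flag as the obstacle: you never actually establish $\|Q_j\|_2 \le 1 - \eta_0/3$. Your sketch reduces to a tautology (``$\|V_{j,\ell}^\top V_{j,\ell'}\|_2 \le 1 - (1-\|V_{j,\ell}^\top V_{j,\ell'}\|_2)$''), and it leaves unexplained how a bound on $\|V_{j,\ell}^\top V_{j,\ell'}\|_2$ controls $\|V_{j,\ell'}^\top x\|_2$ for an arbitrary test direction $x$. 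If you push the single-anchor argument through carefully — pick $\ell^*=\arg\max_\ell \|V_{j,\ell}^\top x\|_2^2$, write $x$ in the $\col(V_{j,\ell^*})\oplus\col(V_{j,\ell^*})^\perp$ splitting, and use $\|V_{j,\ell'}^\top x\|_2 \le \|V_{j,\ell^*}^\top V_{j,\ell'}\|_2\sqrt{a_{\ell^*}} + \sqrt{1-a_{\ell^*}}$ — the crossing term $\sqrt{1-a_{\ell^*}}$ costs you a square root, and the balance only gives a spectral gap of order $\eta_0^2$, not $\eta_0/3$. That is strictly weaker than the theorem: the stated result is for \emph{every} $\delta_0<\eta_0/3$, so you must certify that $\sigma_1(Q_j) \le 1-\eta_0/3$ exactly, and an $O(\eta_0^2)$ gap only covers a narrower range of thresholds.

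The paper closes this step with a genuinely different device that you are missing. Instead of fixing an anchor and working with quadratic forms in $\R^{p_j}$, it passes to the block Gram matrix: writing $\overline V_j := Q_j = \frac1L\widetilde V_j\widetilde V_j^\top$ with $\widetilde V_j=[V_{j,1}\,\cdots\,V_{j,L}]$, it lower-bounds every eigenvalue of $I - \frac1L\widetilde V_j^\top\widetilde V_j$ by $\frac{\eta_V(L-1)}{L}$ via a block-Gershgorin argument (look at the block of an eigenvector with largest norm, and bound the off-diagonal contribution $\frac1L\sum_{\ell'\neq\ell}\|V_{j,\ell}^\top V_{j,\ell'}\|_2\|\bm v_{\ell'}\|_2$ using precisely the ``$\min_\ell$ of an average'' form of $\eta_V$). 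It then transfers this to $\|Q_j\|_2$ via the Woodbury identity (or, more simply, via the fact that $AA^\top$ and $A^\top A$ share nonzero eigenvalues), obtaining $\|Q_j\|_2 \le 1-\eta_V/3$ after the elementary bound $\frac{\eta_V(L-1)}{\eta_V(L-1)+L}\ge\eta_V/3$ for $L\ge 2$ and $\eta_V\le 1$. This Gram-matrix step is what converts pairwise angle information into a uniform spectral bound with a linear (rather than quadratic) dependence on $\eta_0$; without it your argument is incomplete and would not reach the stated threshold range.
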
}

{Theorem \ref{thm:rankU} demonstrates that, for a sufficiently small constant $\delta_0$, the estimator $\widehat{r}_{j, U}$ is consistent under the
requirement $\eta_{V} \geq \eta_0$. This requirement ensures adequate heterogeneity among $V_{j, \ell}$ components. Indeed, the quantity $\eta_{V}=0$ if there exists a common principal vector across all $\col(V_{j, \ell})$, as this implies $\norm{V_{j, \ell}^{\top} V_{j, \ell'}}_2=1$ for all $\ell' \neq \ell$. Conversely, $\eta_{V}>0$ indicates that no principal vector is universally shared across all $\col(V_{j, \ell})$ subspaces, thereby quantifying the heterogeneity among the individual components.}

\vspace{-.8em}
\section{Knowledge Transfer in Distributed Tensor PCA}\label{sec:transfer}

In this section, we explore the task of transferring knowledge from source locations to a target location within a heterogeneous setting. Knowledge Transfer seeks to enhance learning performance at a target site by leveraging insights from related source tasks. To illustrate our approach clearly, we concentrate on transferring knowledge between a single source dataset and a target dataset. The generalization to knowledge transfer across multiple tasks is provided in Section C of the supplementary material. 

\vspace{-.6em}
\subsection{Transferred Tensor PCA}

\begin{algorithm}[!t] 
\spacingset{1.4}
\caption{Transferred Tensor PCA} 
\label{alg:transfer} 
\vspace*{0.08in} {\bf Input:}
Target tensor $\calT_{t}$, source tensor $\calT_{s}$, initial estimators $\big\{[\hU_{1,\ell}^{(0)} \; \hV^{(0)}_{1,\ell}],\dots,[\hU_{J,\ell}^{(0)} \; \hV^{(0)}_{J,\ell}]\big\}$, and weights $w_{\ell}$ for $\ell=s, t$ that satisfy $w_s+w_t=1$;

{\bf Output:}  Estimators $\big\{\hU_1, \hV_{1, t}, \hU_2, \hV_{2, t}, \dots, \hU_J, \hV_{J, t}\big\}$.
\begin{algorithmic}[1]  
\For{$\ell=s, t$} \do \\
\For{$j=1,2,\dots,J$}
\State Compute a local estimator 
$\hU_{j, \ell}=\mathrm{svd}_{r_{j, U}}\big(\widetilde M_{j,\ell}\big)$, where $\widetilde M_{j,\ell}$ is defined in \eqref{eq:local_hetero}.
\State Send  $\hU_{j,\ell}$ to the target machine; 
\EndFor
\EndFor
\For{$j=1,2,\dots, J$}
\State On the target machine, compute  $\hU_j = \mathrm{svd}_{r_{j, U}}\Big[w_s\hU_{j, s}\hU^{\top}_{j, s}+w_t\hU_{j, t}\hU^{\top}_{j, t}\Big]$;
\State Compute 
$\hV_{j, t}=\mathrm{svd}_{r_{j, V, t}}\Big[\big(I_{p_j}-\hU_{j}\hU_{j}^{\top}\big)\widetilde M_{j,t}\Big]$;
\EndFor
\end{algorithmic} 
\end{algorithm} 

Formally, suppose the source tensor $\calT_{s}=\calT_{s}^*+\calZ_s$ and the target tensor $\calT_{t}=\calT_{t}^*+\calZ_t$, where 
\begin{equation*}
\begin{aligned}
\calT_{s}^* &= \calG_{s} \times_1 [U_1 \; V_{1,s}] \times_2 [U_2 \; V_{2, s}] \cdots \times_J [U_J \; V_{J, s}], ~ U_j \in \mathbb{O}^{p_j \times r_{j,U}}, V_{j,s}\in\mathbb{O}^{p_j\times r_{j,V,s}}, \calG_{s} \in \R^{r_{1,s} \times  r_{2,s} \cdots \times r_{J,s}},\\
\calT_{t}^* &= \calG_{t} \times_1 [U_1 \; V_{1,t}] \times_2 [U_2 \; V_{2, t}] \cdots \times_J [U_J \; V_{J, t}],~ U_j \in \mathbb{O}^{p_j \times r_{j,U}}, V_{j,t}\in\mathbb{O}^{p_j\times r_{j,V,t}}, \calG_{t} \in \R^{r_{1,t} \times  r_{2,t} \cdots \times r_{J,t}},
\end{aligned}
\end{equation*}
with $U_j^{\top}V_{j,\ell}=0$ and $ r_{j,\ell}=r_{j, U}+r_{j, V, \ell}$ for $j \in [J]$ and $\ell=s, t$. In other words, we assume the source and target tensors share a common top-$r_{j, U}$ singular subspace spanned by $U_j$, but either task can have different individual components $V_{j, \ell}$. The goal is to estimate the singular subspaces spanned by $U_j$ and $V_{j, t}$ of the target tensor $\calT_t$. Meanwhile, we assume the noise $\calZ_{\ell}$ has i.i.d. $\calN(0, \sigma_{\ell}^2)$ entries for $\ell=s, t$, where the noise levels $\sigma_{s}$ and $\sigma_{t}$ are allowed to be different. 
To achieve knowledge transfer between $\calT_s$ and $\calT_t$,  we 
propose Algorithm \ref{alg:transfer}, which is carefully designed for dealing with the heterogeneity in the transfer setting. 
Different from Algorithm \ref{alg:hetero} who treats all tensors equally, Algorithm \ref{alg:transfer} aggregates the local estimators $\hU_{j, s}$ and $\hU_{j, t}$ through a weighted average. 
The weights $w_s$ and $w_t$ are designed to optimally balance the contributions from the source and target tensors, respectively, accounting for the potential heterogeneity in their noise levels $\sigma_s^2$ and $\sigma_t^2$. 
The choice for the weights will be specified in the next section.

\vspace{-.6em}
\subsection{Theoretical Guarantee}
Analogous to the notations in the heterogeneous settings, for $j \in [J]$ and $\ell=s,t$, let $\Lambda_{j,\ell}$ be the
$r_{j,\ell} \times  r_{j,\ell} $ singular value matrix of $\calM_j(\calG_{\ell})$.  Define $\lambda_{\max}$, $\lambda_{\min}$ to be the maximum and minimum singular value over all $\Lambda_{j,\ell}$, respectively, and let $\kappa_0 = \lambda_{\max} \lambda^{-1}_{\min}$.  Moreover, define $\Delta=\min_{j \in [J], \ell\in \{s, t\}}\{\lambda_{r_{j,U},j,\ell}-\lambda_{r_{j,U}+1,j,\ell}\}$ and $\kappa=\lambda_{\max}/\Delta$, where $\lambda_{r,j,\ell}$ denotes the $r$-th largest singular value of $\Lambda_{j,\ell}$. Additionally, let $r=\max_{j,\ell}  r_{j, \ell}$ and $r_V=\max_{j, \ell}r_{j, V, \ell}$.
\begin{thm}\label{thm:transfer}
Assume that there exist constants $C_1, c_1, C_2$ such that, with probability at least $1-C_1e^{-c_1p}$, $\big\|\hU_{j,\ell}^{(0)}\hU_{j,\ell}^{(0)\top}+\hV_{j,\ell}^{(0)}\hV_{j,\ell}^{(0)\top}- U_jU_j^{\top}-V_{j, \ell}V_{j, \ell}^{\top}\big\|_{2}\leq C_2\sqrt{p}\sigma_{\ell}\lambda_{\min}^{-1}$ for all $j \in [J], \ell \in \{s, t\}$. Assume $p_j \asymp p$ for all $j$,  $r^{J-1}=O(p)$, $\kappa_0=O(1)$, $\kappa=O(1)$, and $\sqrt{pr}=o\big(\min(\Delta, \lambda_{\min})/\max(\sigma_t, \sigma_s)\big)$. 
There exists constants $\widetilde{C}_1, \widetilde{c}_1, \widetilde{C}_2$ such that
\begin{align}\label{eq:transfer-U}
\sup_j\norm{\hU_{j}\hU_{j}^{\top}-U_jU_j^{\top}}_{\rm F}&\leq\widetilde{C}_2\frac{\sqrt{pr}\sqrt{w_{s}^2\sigma_{s}^2+w_{t}^2\sigma_{t}^2}}{\Delta},
\\
\label{eq:transfer-V}
\sup_{j}\norm{\hV_{j,t} \hV_{j,t}^{\top}-V_{j,t} V_{j,t}^{\top}}_{\rm F} 
&\leq\widetilde{C}_2\left(\frac{\sqrt{pr_{V}}}{\lambda_{\min}}\Big(\sigma_t+\sqrt{r}\sqrt{w_s^2\sigma^2_s+w_t^2\sigma_t^2}\Big)\right),
\end{align}
with probability at least $1-\widetilde{C}_1e^{-\widetilde c_1p}$, where $\hU_j, \hV_{j, t}$'s are the outputs of Algorithm \ref{alg:transfer}. 


\end{thm}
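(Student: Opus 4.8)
The plan is to follow the same bias–variance template used for Theorem~\ref{thm:1} and Theorem~\ref{thm:hetero}, but tracking the two noise levels $\sigma_s, \sigma_t$ separately through the weighted aggregation. First I would establish the local error decomposition for each $\ell \in \{s,t\}$: arguing as in the heterogeneous setting, the projected matricization $\widetilde M_{j,\ell}$ satisfies $\widetilde M_{j,\ell} \approx [U_j\;V_{j,\ell}]\calM_j(\calG_\ell)$, and a perturbation analysis of its top-$r_{j,U}$ left singular subspace yields
\begin{equation*}
\hU_{j,\ell}\hU_{j,\ell}^{\top} - U_jU_j^{\top} = \calL_{j,\ell}(Z_{j,\ell}) + R_{j,\ell},
\end{equation*}
where $Z_{j,\ell}=\calM_j(\calZ_\ell)$, $\calL_{j,\ell}$ is a fixed linear map (the first-order term, of the form appearing on the RHS of \eqref{eq:b-v-decomp} with $\Lambda_{j,\ell}$ and the relevant singular-vector blocks, now governed by the gap $\Delta$ rather than $\lambda_{\min}$), and $R_{j,\ell}$ is a remainder of order $pr\sigma_\ell^2\Delta^{-2}$ under the stated SNR condition $\sqrt{pr}=o(\Delta/\max(\sigma_s,\sigma_t))$. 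The key point is that $\calL_{j,\ell}(Z_{j,\ell})$ is linear and mean-zero in $Z_{j,\ell}$, with $\|\calL_{j,\ell}(Z_{j,\ell})\|_{\rm F}$ of order $\sqrt{pr}\,\sigma_\ell/\Delta$ with exponentially high probability (sub-Gaussian concentration for linear functionals of a Gaussian tensor).

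Next I would analyze the weighted aggregate $w_s\hU_{j,s}\hU_{j,s}^{\top}+w_t\hU_{j,t}\hU_{j,t}^{\top}$. Using $w_s+w_t=1$ and the decomposition above, this equals $U_jU_j^{\top} + w_s\calL_{j,s}(Z_{j,s}) + w_t\calL_{j,t}(Z_{j,t}) + w_sR_{j,s}+w_tR_{j,t}$. Since $\calZ_s$ and $\calZ_t$ are independent, the Frobenius norm of the combined first-order term $w_s\calL_{j,s}(Z_{j,s})+w_t\calL_{j,t}(Z_{j,t})$ concentrates at order $\sqrt{pr}\,\sqrt{w_s^2\sigma_s^2+w_t^2\sigma_t^2}/\Delta$ — this is exactly where the $\sqrt{w_s^2\sigma_s^2+w_t^2\sigma_t^2}$ factor in \eqref{eq:transfer-U} arises, from the variance of a weighted sum of independent mean-zero Gaussian linear forms. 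The remainder contributes at most $O(pr\max(\sigma_s,\sigma_t)^2/\Delta^2)$, which is dominated under the SNR assumption. A final Davis–Kahan / $\sin\Theta$ step converts the perturbation bound on the aggregated projection matrix into the bound on $\hU_j\hU_j^{\top}-U_jU_j^{\top}$, giving \eqref{eq:transfer-U}; here $\kappa=O(1)$ ensures the eigengap at level $r_{j,U}$ of the aggregated matrix stays bounded below.

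For the individual-component bound \eqref{eq:transfer-V}, I would plug $\hU_j$ into $(I_{p_j}-\hU_j\hU_j^{\top})\widetilde M_{j,t}$ and decompose: writing $I_{p_j}-\hU_j\hU_j^{\top} = I_{p_j}-U_jU_j^{\top} - (\hU_j\hU_j^{\top}-U_jU_j^{\top})$, the leading part is $(I_{p_j}-U_jU_j^{\top})\widetilde M_{j,t}$, whose top-$r_{j,V,t}$ subspace approximates $\col(V_{j,t})$ with the usual single-tensor rate $\sqrt{pr_V}\,\sigma_t/\lambda_{\min}$; the error from replacing $U_j$ by $\hU_j$ propagates through \eqref{eq:transfer-U}, contributing $\sqrt{r}\cdot\sqrt{pr_V}\sqrt{w_s^2\sigma_s^2+w_t^2\sigma_t^2}/\lambda_{\min}$ after bounding $\|\widetilde M_{j,t}\|_2 \lesssim \lambda_{\max}$ and accounting for the $\sqrt{r_V}$ factor in going from operator to Frobenius norm on the rank-$r_{j,V,t}$ subspace; the remainder terms are again lower order. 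Combining and applying $\sin\Theta$ yields \eqref{eq:transfer-V}. The main obstacle I anticipate is the careful bookkeeping in the local decomposition — isolating the genuinely first-order linear-in-$Z_\ell$ term with the correct $\Delta$-dependence while showing the cross terms between the initialization error and the noise, as well as the interaction between the common and individual blocks of $\widetilde M_{j,\ell}$, are absorbable into $R_{j,\ell}$ at order $pr\sigma_\ell^2\Delta^{-2}$; this is where the assumption $\sqrt{pr}=o(\min(\Delta,\lambda_{\min})/\max(\sigma_s,\sigma_t))$ and the initialization bound $\|\hU^{(0)}_{j,\ell}\hU^{(0)\top}_{j,\ell}+\hV^{(0)}_{j,\ell}\hV^{(0)\top}_{j,\ell}-U_jU_j^{\top}-V_{j,\ell}V_{j,\ell}^{\top}\|_2 \le C_2\sqrt{p}\sigma_\ell\lambda_{\min}^{-1}$ must be used in full, uniformly over the exponentially-good event.
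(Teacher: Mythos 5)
Your proposal tracks the paper's own argument essentially step for step: the paper likewise writes each local estimator as a first-order linear functional of $Z_{j,\ell}$ (via the linear maps $f_{j,\ell}$ from Lemma 2 of \cite{fan2019distributed}, governed by the gap $\Delta$) plus a remainder of order $pr\sigma_\ell^2\Delta^{-2}$, aggregates the weighted sum by computing the covariance of the independent Gaussian columns to obtain the factor $\sqrt{w_s^2\sigma_s^2+w_t^2\sigma_t^2}$, applies Davis--Kahan, and then propagates the $\hU_j$ error into $(I-\hU_j\hU_j^{\top})\widetilde M_{j,t}$ for the $V_{j,t}$ bound. The only cosmetic imprecision is your reference to ``the form appearing on the RHS of \eqref{eq:b-v-decomp}'' for the first-order term --- the heterogeneous setting requires the eigengap-resolved $f_{j,\ell}$ expansion rather than the homogeneous $\Lambda_j^{-2}$ expansion from \cite{xia2021normal}, precisely because the signal has both $U_j$ and $V_{j,\ell}$ blocks --- but you correctly note the role of $\Delta$, so this does not amount to a gap.
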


Theorem \ref{thm:transfer} establishes the statistical error rate for the estimators obtained by Algorithm \ref{alg:transfer}. We note that the best rate that an estimator for the common component $U_j$ can attain without transfer is $O(\sqrt{pr}\sigma_t\Delta^{-1})$, compared to which our transfer learning approach improves $\sigma_t$ into $\sqrt{w_{s}^2\sigma_{s}^2+w_{t}^2\sigma_{t}^2}$ if $\sigma_{s} \lesssim \sigma_t$. At the same time, the individual component estimator $\hV_{j,t}$ matches the local rate $O(\sqrt{pr}\sigma_t\lambda_{\min}^{-1})$ under a mild condition that $\sqrt{r_V}\sqrt{w_s^2\sigma^2_s+w_t^2\sigma_t^2} \lesssim \sigma_t$.

Based on the error rates established in \eqref{eq:transfer-U} and \eqref{eq:transfer-V}, we further give the optimal choice for $w_s$ and $w_t$. Specifically, by minimizing $w_s^2\sigma^2_s+w_t^2\sigma_t^2$ under the constraint $w_s+w_t=1$, we obtain that the optimal weights are $w_s^* = \frac{\sigma_t^2}{\sigma_s^2+\sigma_t^2}$ and $w_t^* = \frac{\sigma_s^2}{\sigma_s^2+\sigma_t^2}$, leading to the error rates
\begin{equation*}\label{eq:transfer-rate}
\sup_j\norm{\hU_{j}\hU_{j}^{\top}-U_jU_j^{\top}}_{\rm F}=O_{\Prob}\left(\frac{\sqrt{pr}\overline{\sigma}}{\Delta}\right) \, \mathrm{and} \, 	\sup_{j}\norm{\hV_{j,t} \hV_{j,t}^{\top}-V_{j,t} V_{j,t}^{\top}}_{\rm F} 
=O_{\Prob}\left(\frac{\sqrt{pr_{V}}\sigma_t}{\lambda_{\min}}\Big(1+\frac{\sqrt{r}\overline{\sigma}}{\sigma_t}\Big)\right),
\end{equation*} 
where $\overline{\sigma}^2=1/(\sigma_s^{-2}+\sigma_t^{-2})$. 
In practice, we can estimate $\sigma_{\ell}$ for $\ell=s, t$ by 
\begin{equation}\label{eq:estimate-sigma}
\widehat\sigma_{\ell} = \Big\|\calT_{\ell}-\calT_{\ell} \times_1 \big[\hU_{1,\ell}\; \hV_{1,\ell}\big]\big[\hU_{1,\ell}\; \hV_{1,\ell}\big]^{\top} \times_2\cdots\times_J\big[\hU_{J,\ell}\; \hV_{J,\ell}\big]\big[\hU_{J,\ell}\; \hV_{J,\ell}\big]^{\top}\Big\|_{\rm F} / \sqrt{p_1p_2\cdots p_J}, 
\end{equation}
and estimate the optimal weights by $\widehat{w}_s = \frac{\widehat\sigma_t^2}{\widehat\sigma_s^2+\widehat\sigma_t^2}$ and $\widehat w_t = \frac{\widehat\sigma_s^2}{\widehat\sigma_s^2+\widehat\sigma_t^2}$.

In Theorem C.1 of the supplementary material, we generalize our knowledge transfer methodology and theoretical guarantee from the two-machine setting into an arbitrary number of machines $L$. Furthermore, we establish the statistical error rate for the estimators using data-adaptive weights $\widehat w_{\ell}$ in Theorem C.2.

\vspace{-.8em}
\section{Numerical Study}\label{sec:numerical}

In this section, we conduct numerical studies to verify the theoretical properties and evaluate the empirical performance of our proposed distributed tensor PCA algorithms. Section \ref{sec:simulations} presents simulation studies for both the homogeneous and heterogeneous settings. Section \ref{sec:real} illustrates the performance of our algorithms on two real datasets of molecule structure.

\vspace{-.6em}
\subsection{Simulations}\label{sec:simulations}

In this section, we use Monte Carlo simulations to verify the performance of our proposed distributed Tensor PCA methods under various settings. Throughout the simulation, we consider 3-mode tensors (i.e., $J=3$). All results are averaged over 1,000 independent runs.

\paragraph{Estimation for Homogeneous Tensors} We first simulate the distributed homogeneous tensor setting described in Section \ref{sec:homo-setup}, 
where $
\calT^* = \calG \times_1 U_1 \times_2 U_2  \times_3 U_3$, $U_j \in \mathbb{O}^{p_j \times r_j}$, $\calG \in \R^{r_1 \times r_2  \times r_3}.$
We set the dimensions $p_1=p_2=p_3=p$ and the ranks $r_1=r_2=r_3=r$. The core tensor $\calG$ is generated by first sampling a tensor $\widetilde{\calG} \in \R^{r_1 \times r_2 \times r_3} $ with i.i.d. $ \calN(0, 1)$ entries and rescaling it as $\calG = \lambda \cdot \widetilde{\calG} / \lambda_{\min}\big(\widetilde{\calG}\big)$, where $\lambda_{\min}\big(\widetilde\calG\big)$ denotes the minimum singular value of all matricizations $\calM_j\big(\widetilde\calG\big)$. The generation procedure of $\calG$ ensures that the minimum singular value of $\calG$ is $\lambda$, which is denoted as the signal strength $\lambda_{\min}$ defined in Section \ref{sec:theory-homo}. For $j=1,2,3$, the matrix $U_j$ is generated via QR decomposition on a matrix $\widetilde{U}_j \in \R^{p_j \times r_j}$ with i.i.d. $\calN(0, 1)$ entries. Then we independently generate $\calZ_{\ell}$ with i.i.d. $\calN(0, \sigma^2)$ entries and obtain $\calT_{\ell}=\calT^*+\calZ_{\ell}$ for $\ell=1,2,\dots, L$. 

In particular,  we fix $r=3$, $\sigma=1$, and let $p \in \left\{50, 100\right\}$, $L \in \{10, 20\}$, and  $\lambda = p^{\gamma}$ with $\gamma \in [0.45, 0.95]$.  We report the estimation error of our proposed Algorithm \ref{alg:homo} (referred to as ``distributed'') for $U_1$, i.e., $\rho\big(\hU_1, U_1\big)$, with the SNR $\lambda/\sigma$ ranging from $p^{0.45}$ to $p^{0.95}$.  For comparison, we also report the estimation error of the pooled estimator $\hU_{\mathrm{pooled}, 1}$  (referred to as ``pooled'') defined in Section \ref{sec:homo-setup} under the same settings. The estimation errors for $p=50$, $100$ and $L=10$, $20$ are displayed in Figure \ref{fig:homo}.

Figure \ref{fig:homo} shows that the estimation errors of both our proposed estimator $\hU_1$ and the pooled estimator $\hU_{\mathrm{pooled}, 1}$ decrease as the SNR $\lambda/\sigma$ increases, and moreover, when the SNR is sufficiently high (e.g., $\lambda/\sigma \geq p^{0.7}$ for $p=100$, $L=20$), the performance of $\hU_1$ becomes indistinguishable from that of the pooled estimator, verifying that our distributed algorithm achieves the optimal minimax rate as stated in Corollary \ref{cor:1}. Furthermore, we observe that increasing $L$ from 10 to 20 leads to a noticeable reduction in the estimation error of both $\hU_1$ and the pooled estimator, consistent with the $L^{-1/2}$ rate in Theorem \ref{thm:1} and Corollary \ref{cor:1}.

\begin{figure}[!t]
\centering
\begin{subfigure}[b]{0.45\textwidth}
\centering
\includegraphics[width=\textwidth]{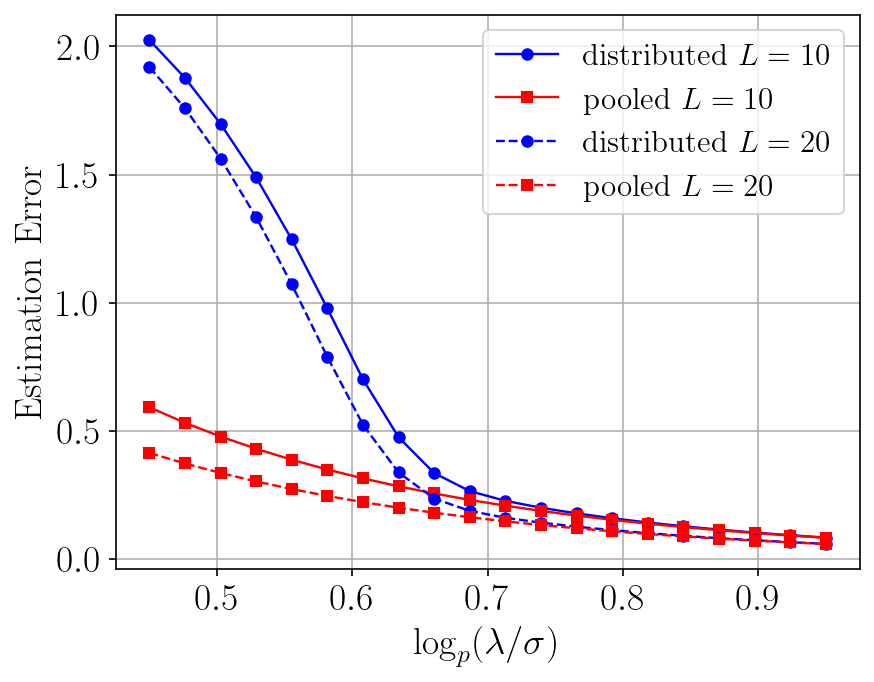}
\caption{$p=50$}
\label{fig:homo_p=1}
\end{subfigure}
\hfill
\begin{subfigure}[b]{0.45\textwidth}
\centering
\includegraphics[width=\textwidth]{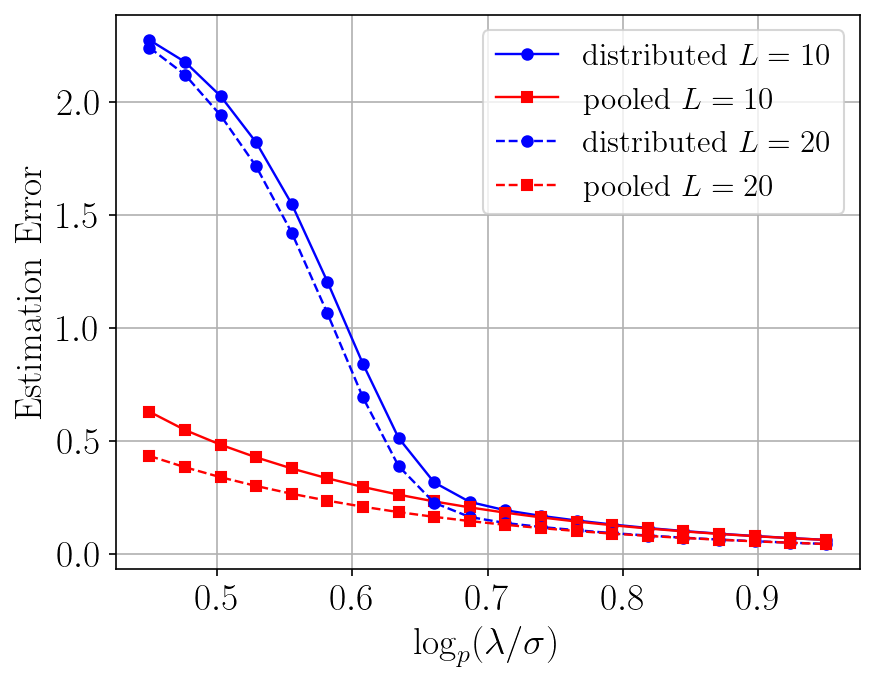}
\caption{$p=100$}
\label{fig:homo_p=10}
\end{subfigure}
\caption{The estimation errors of different methods under the homogeneous setting. }
\label{fig:homo}
\end{figure}

\begin{figure}[!t]
\centering
\begin{subfigure}[b]{0.45\textwidth}
\centering
\includegraphics[width=\textwidth]{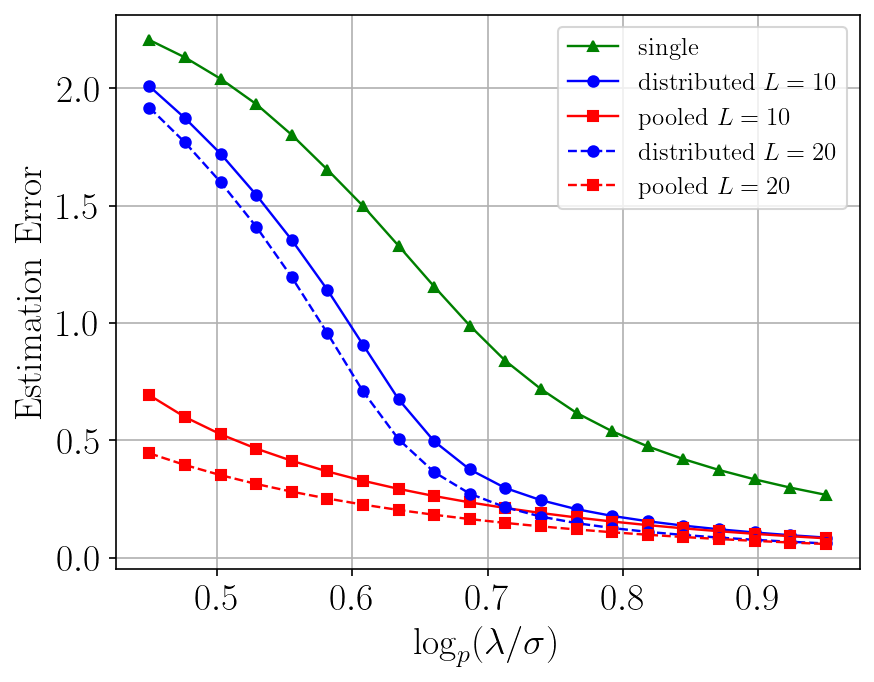}
\caption{$p=50$}
\label{fig:hetero_p=1}
\end{subfigure}
\hfill
\begin{subfigure}[b]{0.45\textwidth}
\centering
\includegraphics[width=\textwidth]{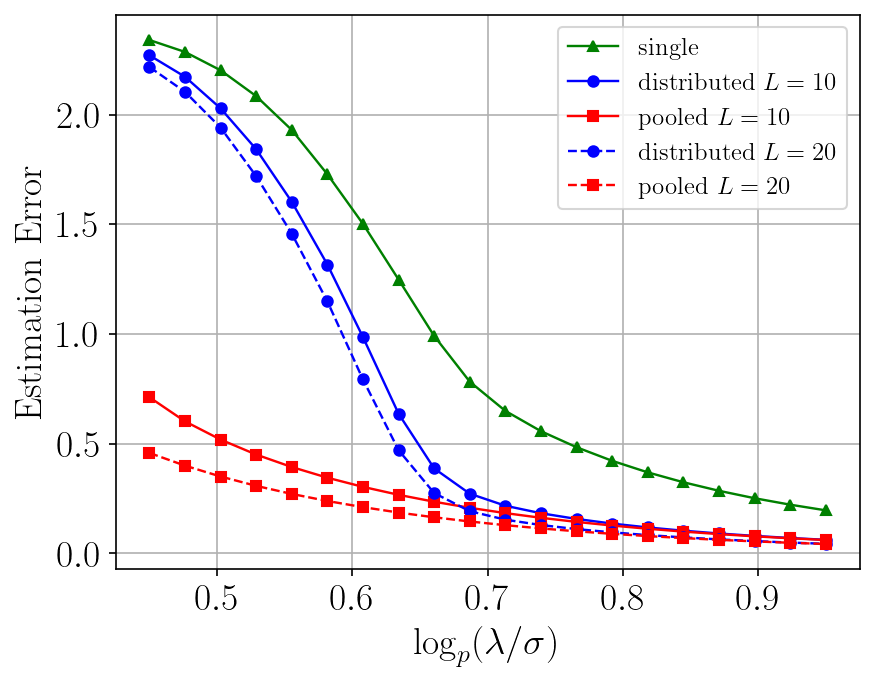}
\caption{$p=100$}
\label{fig:hetero_p=10}
\end{subfigure}
\caption{Estimation errors under heterogeneous settings where tensors share the same core. }
\label{fig:hetero-same-core}
\end{figure}

\paragraph{Estimation for Heterogeneous Tensors} We then conduct simulations for the distributed heterogeneous setting described in Section \ref{sec:hetero}, where 
\[
\calT_{\ell}^* = \calG_{\ell} \times_1 [U_1 \; V_{1,\ell}] \times_2 [U_2 \; V_{2, \ell}]  \times_3 [U_3 \; V_{3, \ell}], \quad U_j \in \R^{p_j \times r_{j,U}}, V_{j,\ell}\in\R^{p_j\times r_{j,V,\ell}}, \calG_{\ell} \in \R^{ r_{1,\ell} \times  r_{2,\ell}  \times  r_{3,\ell}}.
\]
We set the dimensions $p_j=p \in \{50, 100\}$ and the ranks $r_{j,U}=3$, $r_{j, V, \ell}=3$, and $r_{j,\ell}=r_{j, U}+r_{j, V, \ell}=6$,  for all $j=[3]$, $\ell \in [L]$. Similar to the homogeneous setting, the shared component $U_j$ is generated via QR decomposition on a matrix $\widetilde{U}_j \in \R^{p_j \times r_{j,U}}$ with i.i.d. $\calN(0, 1)$ entries. The individual component $V_{j,\ell}$ is generated via QR decomposition on $(I_p - U_jU_j^{\top})\widetilde{V}_{j,\ell}$, where $\widetilde{V}_{j,\ell} \in \R^{p_j \times r_{j,V,\ell}}$ has i.i.d. $\calN(0, 1)$ entries. The projection matrix $I_p - U_jU_j^{\top}$ ensures that $U_j^{\top}V_{j,\ell}=0$.  Moreover, the core tensors $\{\calG_{\ell}\}$ are generated as follows.

Given a pre-specified $\lambda \in \R_+$, we independently sample two tensors $\widetilde{\calG}_U \in \R^{3 \times 3 \times 3}$ and $\widetilde{\calG}_{V} \in \R^{3 \times 3 \times 3}$, both with i.i.d. $ \calN(0, 1)$ entries. Then we rescale them as $\calG_U = \lambda \cdot \widetilde{\calG}_U / \lambda_{\min}\big(\widetilde{\calG}_U\big)$ and $\calG_V = (\lambda/2) \cdot \widetilde{\calG}_V / \lambda_{\max}\big(\widetilde{\calG}_V\big)$, where $\lambda_{\min}(\lambda_{\max})(\mathcal{X})$ denotes the minimum (maximum) singular value over all matricizations $\calM_j(\mathcal{X})$. Finally, we generate $\calG$ as a ``block-diagonal'' tensor such that
\begin{equation}
\calG_{i_1, i_2, i_3} = \begin{cases}
(\calG_U)_{i_1, i_2, i_3} & \text{if } 1\leq i_1, i_2, i_3 \leq 3,\\
(\calG_V)_{i_1, i_2, i_3} & \text{if } 4\leq i_1, i_2, i_3 \leq 6,\\
0 & \text{otherwise}.
\end{cases}
\end{equation} 
In other words, only the top-left $3\times 3 \times 3$ block and the bottom-right $3 \times 3 \times 3$ block of $\calG$ are non-zero.  Furthermore, we consider two cases to generate the local core tensors $\{\calG_{\ell}\}$: 
\begin{itemize}
\item same core: After generating $\calG$, let $\calG_{\ell}=\calG$ for all $\ell \in [L]$;
\item different cores: For each $\ell$, independently generate $\calG_{\ell}$ using the same procedure of generating $\calG$ as described above.
\end{itemize}

The core tensors are constructed such that the minimum singular value gap $\Delta$ between the common and individual components equals $\lambda/2$, representing the signal strength for estimating $U_j$. 
We report the estimation errors of our proposed estimator $\hU_1$ in Algorithm \ref{alg:hetero}, along with the errors of the local estimator $\hU_{1, 1}$ in Algorithm \ref{alg:hetero} (referred to as ``single'')  and the pooled estimator $\hU_{\mathrm{pooled}, 1}$, for $p=50$, $100$ and $L=10$, $20$. 

The results are displayed in Figures \ref{fig:hetero-same-core} and \ref{fig:hetero-diff-core}. In Figure \ref{fig:hetero-same-core}, where all tensors share the same core tensor, our distributed estimator $\hU_1$ achieves a similar error rate as the pooled estimator when the SNR is sufficiently high, which verifies the theoretical results established in Theorem \ref{thm:hetero}. Meanwhile, the local estimator (``single'') exhibits a much higher error, highlighting the advantage of combining information across multiple tensors. In Figure \ref{fig:hetero-diff-core}, where the core tensors are different across machines, the performance of the pooled estimator deteriorates significantly, as the simple averaging of the tensors is invalidated due to the different cores. In contrast, our distributed estimator still achieves a decent error rate, outperforming both the pooled and local estimators. This further demonstrates the effectiveness of our method in learning the shared component in the presence of heterogeneity.

\begin{figure}[!t]
\centering
\begin{subfigure}[b]{0.45\textwidth}
\centering
\includegraphics[width=\textwidth]{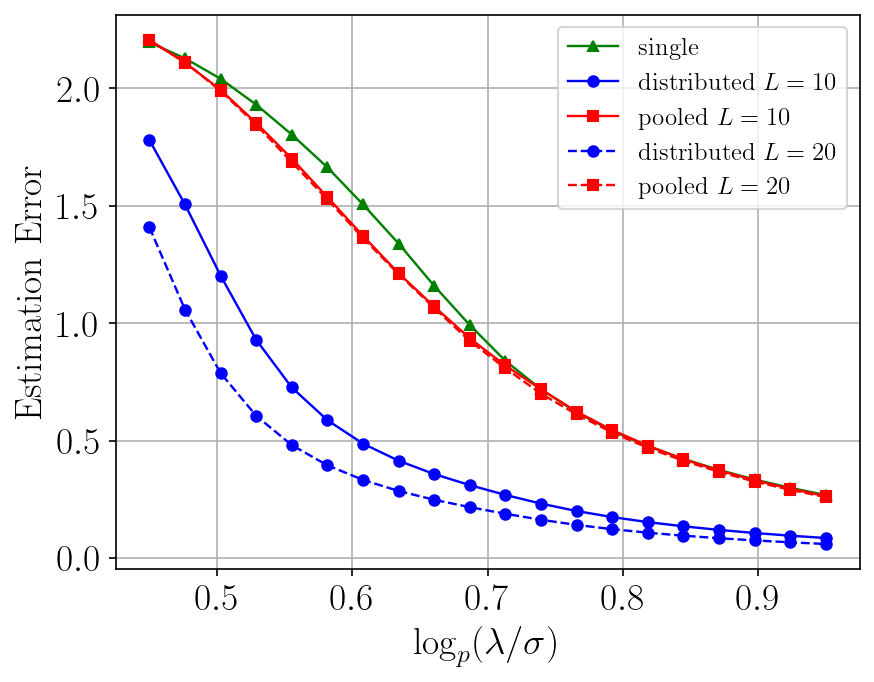}
\caption{$p=50$}
\label{fig:hetero_2_p=1}
\end{subfigure}
\hfill
\begin{subfigure}[b]{0.45\textwidth}
\centering
\includegraphics[width=\textwidth]{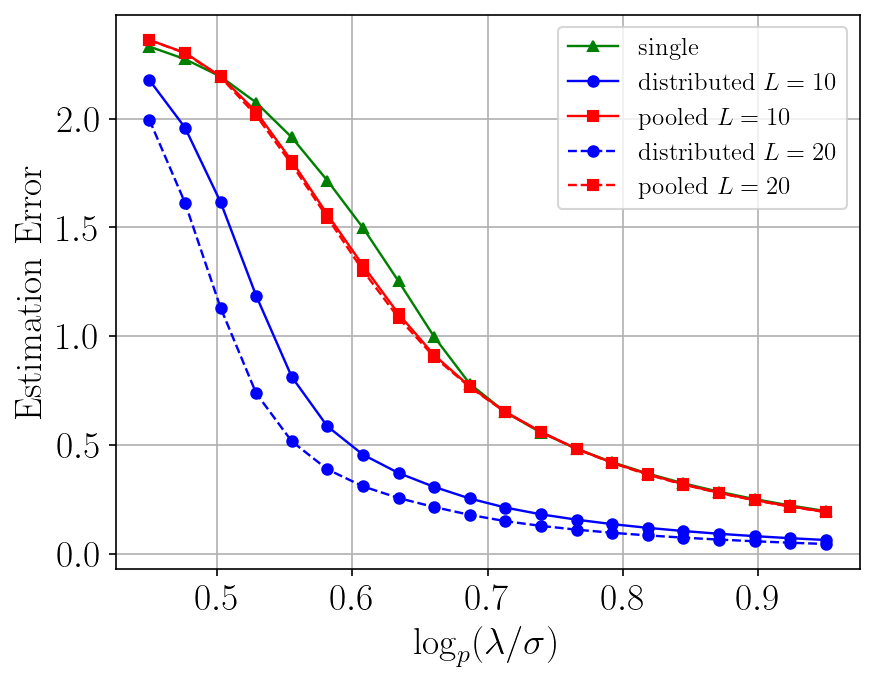}
\caption{$p=100$}
\label{fig:hetero_2_p=10}
\end{subfigure}

\caption{Estimation errors under heterogeneous settings where tensors have different cores. }
\label{fig:hetero-diff-core}
\end{figure}

\vspace{-.6em}
\subsection{Real Data Analysis}\label{sec:real}

\begin{figure}[!t]
\centering
\begin{subfigure}[b]{0.45\textwidth}
\centering
\includegraphics[width=\textwidth]{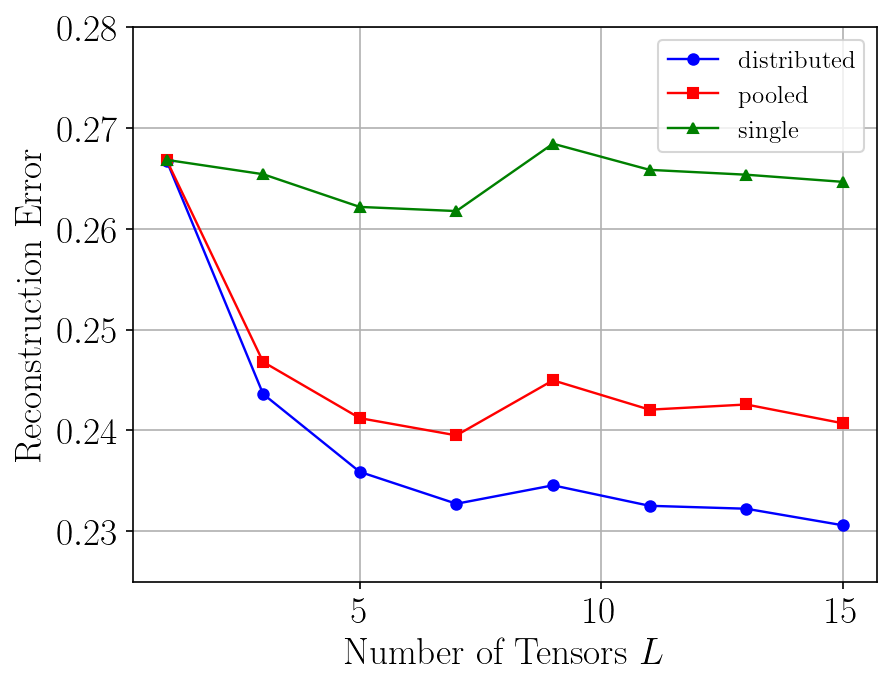}
\caption{PROTEINS}
\label{fig:real_small_r}
\end{subfigure}
\hfill
\begin{subfigure}[b]{0.45\textwidth}
\centering
\includegraphics[width=\textwidth]{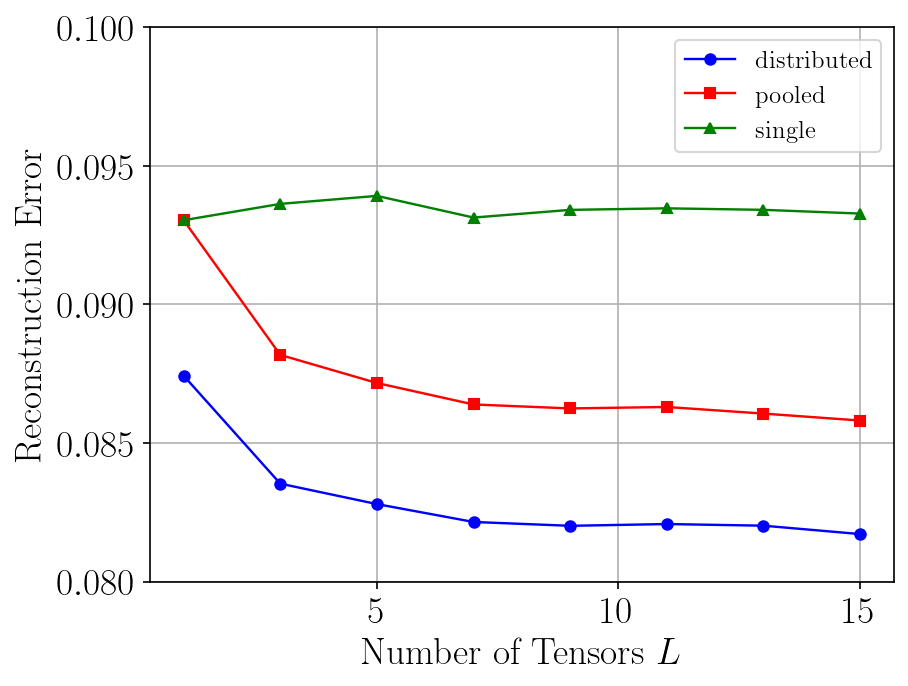}
\caption{PTC\_FM}
\label{fig:real_large_r}
\end{subfigure}
\caption{Comparison of the reconstruction errors within class 0 of the two datasets.}
\label{fig:real}
\end{figure}

{In this section, we illustrate the usefulness of our proposed methods on two real-world molecule datasets, PROTEINS \citep{borgwardt2005protein, morris2020tudataset} and PRC\_FM \citep{helma2001predictive, kriege2012subgraph}.  PROTEINS 
consists of 1,113 protein graphs labeled according to whether they are enzymes. 
Each graph represents the structure of a single protein, where the vertices represent the secondary structure elements (i.e., the helices, sheets, and turns) of the protein, and the edges connect nodes that are neighbors along the amino acid sequence or neighbors in space within the protein structure. PRC\_FM contains 349 compound graphs classified according to the carcinogenicity on female mice, where the vertices represent atoms, and the edges represent chemical bonds.  Following the procedure in \cite{wen2024tensor},  Topological Data Analysis (TDA) is employed to encode the topological and structural features of each graph into a three-mode tensor of dimensions $2\times 50 \times 50$ for PROTEINS and $5\times 50 \times 50$ for PTC\_FM, composed of $50 \times 50$ persistence images constructed by several filtration functions. }

Since obtaining the ground truth $U_j$ for real data is difficult, we evaluate the performance of different methods by the \textit{reconstruction error} of the estimators. Formally, given a tensor $\calT$, the reconstruction error of estimators $\big\{\hU_j\big\}_{j=1,2,3}$ on $\calT$ is defined as
\begin{equation*}
	\mathrm{RE}\big(\hU_1, \hU_2, \hU_3;\calT\big) = \frac{\norm{\calT-\calT \times_1\hU_1\hU_1^{\top}\times_2\hU_2\hU_2^{\top}\times_3\hU_3\hU_3^{\top}}_{\rm F}}{\norm{\calT}_{\rm F}},
\end{equation*}
which measures the difference between the original tensor and the tensor reconstructed from the estimated principal components, normalized by the Frobenius norm of the original tensor. 

Concretely, we randomly select $L$ tensors as the training samples, denoted by $\{\calT_{\ell}\}_{\ell \in \mathcal{I}}$, where $\mathcal{I}$ is an index set with cardinality $L$. The training samples $\{\calT_{\ell}\}_{\ell \in \mathcal{I}}$ are input into Algorithm \ref{alg:hetero} to obtain estimators $\big\{\hU_j\big\}_{j=1,2,3}$. Next, we randomly select $L'$ tensors other than the training samples as the test set, denoted by $\{\calT_{\ell}\}_{\ell \in \mathcal{I}'}$ ($\mathcal{I} \cap \mathcal{I}'=\emptyset$), and then compute the averaged reconstruction error of $\big\{\hU_j\big\}$ over the test samples, i.e., $\frac{1}{L'}\sum\limits_{\ell \in \mathcal{I}'}\mathrm{RE}\big(\hU_1, \hU_2, \hU_3;\calT_{\ell}\big)$. For comparison, we also record the reconstruction errors of two other methods: 
\begin{enumerate}
	\item[(1)] ``single'': the local estimators $\big\{\hU_{j, \ell}\big\}_{j=1,2,3}$ obtained using each training sample $\calT_{\ell}$, $\ell \in \mathcal{I}$;
	\item[(2)] ``pooled'':  the pooled estimators obtained by decomposing the averaged tensor $\frac{1}{L}\sum\limits_{\ell \in \mathcal{I}}\calT_{\ell}$. 
\end{enumerate}
Specifically, we fix $L'=100$ and let $L$ range from 1 to 100. 

{Figure \ref{fig:real} presents the reconstruction errors of the three methods for the two datasets, which are averaged over 200 independent repeats of random sample selection. For the ``single'' method, we report the average reconstruction error over all training samples $\calT_{\ell}$, that is, $\frac{1}{LL'}\sum\limits_{\ell \in \mathcal{I}}\sum\limits_{\ell' \in \mathcal{I}'}\mathrm{RE}\big(\hU_{1,\ell}, \hU_{2,\ell}, \hU_{3,\ell};\calT_{\ell'}\big)$. As shown in Figure \ref{fig:real}, our distributed method significantly reduces the reconstruction error compared to the ``single'' and ``pooled'' estimators, highlighting the advantage of our approach in correctly integrating different tensors in the presence of heterogeneity. 
This superior performance is attributed to the fundamental differences in how each method handles tensor heterogeneity. The ``pooled'' estimator performs direct averaging of the raw tensors, which can be inappropriate given the heterogeneity of core tensors across different tensors, as evidenced by the simulation results in Figure \ref{fig:hetero-diff-core}. 
In contrast, our distributed estimator identifies and aggregates the common principal components shared across tensors, which leads to better performance than direct tensor averaging. Notably, this finding suggests that even in scenarios where tensor pooling is technically feasible, a distributed estimator may still be preferred as it more appropriately extracts and integrates the common information. Additional real data experiments on transfer learning tasks are explored in Section F of the supplementary material.}

\vspace{-.8em}
\section{Conclusion}
\label{sec:conc}

This paper presents innovative distributed tensor PCA methods for both homogeneous and heterogeneous settings, addressing the challenges of analyzing high-dimensional tensor data stored across multiple locations. We develop and theoretically validate algorithms that efficiently aggregate shared low-rank subspaces and identify unique components, enhancing estimation accuracy under various data heterogeneity conditions. Our methods demonstrate significant improvements over traditional approaches in extensive numerical analyses. Future research will focus on expanding these techniques to more complex distributed frameworks, exploring adaptive algorithms that can dynamically adjust to varying data characteristics and network conditions. 

\spacingset{1.5}

\section*{Acknowledgment}
The authors gratefully thank the editor, the associate editor, and the anonymous referees for their helpful comments and suggestions. 

\section*{Disclosure Statement}
The authors report there are no competing interests to declare.

\section*{Funding}
Elynn Chen's research was supported in part by the National Science Foundation under Award ID 2412577. Xi Chen would like to thank the support from National Science Foundation via the grant IIS-1845444.

\bibliographystyle{chicago}
\bibliography{dist_tensor_pca}

\newpage
\spacingset{1.9}
\appendix

\begin{center}
{\large\bf SUPPLEMENTARY MATERIAL of \\
``\TITLE''}
\end{center}

\section{Notations} \label{append:notation}

For any integer $N$, let $[N]$ denote the set $\{1,2,\dots, N\}$.  For any positive sequences $\{a_n\}$ and $\{b_n\}$, we write $a_n \lesssim b_n$ if $a_n=O(b_n)$, and $a_n \asymp b_n$ if $a_n \lesssim b_n$ and $b_n \lesssim a_n$.
For a sequence of random variables $X_n$ and a sequence of real numbers $a_n$, we let $X_n=O_{\Prob}(a_n)$ denote that $\left\{X_n/a_n\right\}$ is bounded in probability and $X_n=o_{\Prob}(a_n)$ denote that $\left\{X_n/a_n\right\}$ converges to zero in probability. 

Define $\mathbb{O}^{p \times r}=\big\{U \in \mathbb{R}^{p \times r} \mid U^{\top}U=I_r\big\}$, where $I_r$ is the  $r \times r$ identity matrix.  For any matrix $M$, denote the top-$r$ left singular vectors of $M$ as $\mathrm{svd}_{r}(M)$, and let $\col(M)$ denote the linear space spanned by the columns of $M$.  Let $\norm{M}_2$ and $\norm{M}_{\rm F}$ be the spectral norm and Frobenius norm of $M$, respectively.

For any two matrices $M_1 \in \R^{p_1 \times q_1}$ and $M_2 \in \R^{p_2 \times q_2}$, define $M_1 \otimes M_2 \in \R^{p_1p_2 \times q_1q_2}$ to be their Kronecker product. For any tensor $\mathcal{X} \in \R^{p_1 \times p_2 \dots\times p_J}$, matrix $M \in \R^{q_j \times p_j}$, and $j \in [J]$, let $\mathcal X \times_j M$ denote the mode-$j$ matrix product of $\mathcal X$. That is, $(\mathcal X \times_j M) \in \R^{p_1 \times \dots \times q_j \times\dots \times p_J}$, and 
\[(\mathcal X \times_j M)_{i_1,i_2,\dots,k,\dots,i_J} = \sum_{i_j=1}^{p_{j}} \mathcal X_{i_1,i_2,\dots,i_j,\dots,i_J}M_{k,i_j}.\]
Moreover, let $\calM_j(\mathcal X) \in \R^{p_j \times (p_1p_2\cdots p_J/p_j)}$ denote the mode-$j$ matricization (unfolding) of $\mathcal X$, which reorders the mode-$j$ fibers of the tensor $\mathcal{X}$ to be the columns of the matrix $\calM_j(\mathcal X)$. Let $\lambda_{\min}(\mathcal X)$ denote the minimum singular value over all the matricizations $\calM_j(\mathcal{X})$, $j \in [J]$. Additionally, define $\norm{\mathcal X}_{\rm F} = \norm{\calM_1(\mathcal X)}_{\rm F}$.

\section{Discussions on the Second-Order Term}\label{sec:supp-lower}

In this section, we provide an example to show that the second-order term $pr\sigma^2\lambda_{\min}^{-2}$ in Theorem 2.1 is not improvable, which implies the necessity of the condition $\sqrt{prL}=o(\lambda_{\min}/\sigma)$ in Corollary 2.1 to achieve the optimal rate $\sqrt{pr/L}\sigma\lambda_{\min}^{-1}$.  For simplicity, we consider the case $J=3$, $p_1=p_2=p_3=p$,  and $r_1=r_2=r_3=1$. We first define $\mathcal{F}_{p, 1, \lambda_{\min}}:=\left\{\calT: \calT=\lambda_{\min}\times_1 U_1 \times_2 U_2 \times_3 U_3, U_{j} \in \mathbb{O}^{p \times 1}, j=1, 2, 3\right\}$ to be the class of target tensors $\calT^*$ satisfying model (1). Moreover, define the estimator class $\mathfrak{U}_{C_1, c_1, C_2}:=\Big\{\big(\hU_1, \hU_2, \hU_3\big): \Prob\big[\big\|\hU_j\hU_j^{\top}-U_jU_j^{\top}\big\|_2 \leq C_2\sqrt{p}\sigma\lambda_{\min}^{-1}\big] \geq 1- C_1e^{-c_1p}, \hU_j\in \mathbb{O}^{p\times 1}, j=1,2,3\Big\}$, which contains all the possible initial estimators satisfying the assumption in Theorem 2.1 for constants $C_1$, $c_1$, and $C_2$. Then we have the following result.
\begin{thm}\label{thm:second-lower}
	As assumed in Theorem 2.1, suppose that $\sqrt{p}\sigma/\lambda_{\min}=o(1)$ and $L =O (p^{c_3})$ for some $c_3>0$. There exist constants $C_1$, $c_1$, and $C_2>0$, such that, if $\lambda_{\min}/\sigma=o(\sqrt{pL})$, then
	\begin{equation}
		\sup_{\substack{\calT^* \in \mathcal{F}_{p, 1, \lambda_{\min}}\\ \big(\hU_{1, \ell}^{(0)}, \hU_{2, \ell}^{(0)}, \hU_{3, \ell}^{(0)}\big)\in\mathfrak{U}_{C_1, c_1, C_2} \\ \text{for all } \ell=1,2,\dots,L}} \norm{\hU_1\hU_1^{\top}-U_1U_1^{\top}}_{\rm F} \geq \widetilde{c}_2p\sigma^2\lambda_{\min}^{-2}, 
	\end{equation}
	with probability at least $1-\widetilde{C}_1 e^{-\widetilde{c}_1p}-\widetilde{C}_1p^{-L}$ for some constants $\widetilde{c}_1, \widetilde{C}_1, \widetilde{c}_2>0$, where $\hU_1$ is the estimator for $U_1$ obtained from Algorithm 1.
\end{thm}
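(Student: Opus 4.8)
The plan is to exhibit a single, explicit target tensor $\calT^*$ together with a carefully engineered collection of initial estimators $\{(\hU_{1,\ell}^{(0)},\hU_{2,\ell}^{(0)},\hU_{3,\ell}^{(0)})\}_{\ell=1}^L$ that lie in $\mathfrak{U}_{C_1,c_1,C_2}$ but for which the bias-type remainder term in the error decomposition does not cancel upon averaging, forcing the output $\hU_1$ of Algorithm 1 to incur an error of order $p\sigma^2\lambda_{\min}^{-2}$. Concretely, I would first recall the per-machine decomposition \eqref{eq:b-v-decomp}: for each $\ell$, $\hU_{1,\ell}\hU_{1,\ell}^\top - U_1U_1^\top$ equals a mean-zero linear-in-noise part of order $\sqrt{p}\sigma\lambda_{\min}^{-1}$ plus a quadratic remainder $R_{1,\ell}$ of order $p\sigma^2\lambda_{\min}^{-2}$. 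The key structural observation is that the dominant piece of $R_{1,\ell}$ is \emph{not} mean zero: expanding the local SVD perturbation to second order, the leading part of $R_{1,\ell}$ has the form $-U_1 U_1^\top Z_{1,\ell}(\cdot)(\cdot) Z_{1,\ell}^\top U_{1\perp}U_{1\perp}^\top + (\text{sym.}) + U_{1\perp}U_{1\perp}^\top \mathbb{E}[\cdots] U_{1\perp}U_{1\perp}^\top$, and crucially the component in the $U_{1\perp}U_{1\perp}^\top(\cdot)U_1 U_1^\top$ block has a nonzero expectation of order $p\sigma^2\lambda_{\min}^{-2}$ that is \emph{common across all $\ell$}. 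Averaging over $\ell$ kills the fluctuations in this block (they concentrate at rate $\sqrt{pr/L}\cdot p\sigma^2\lambda_{\min}^{-2}$, which is lower order since $L=O(p^{c_3})$) but leaves the common bias intact.

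The steps, in order: (1) Fix $U_1,U_2,U_3$ to be, say, the first coordinate vectors $e_1\in\R^p$, so $\calT^*=\lambda_{\min}\, e_1\otimes e_1\otimes e_1$, and take the initial estimators to be $\hU_{j,\ell}^{(0)} = \hU_{j,\ell}$, the local one-step estimators themselves (the theorem permits using the data-based estimators as initializers, as noted after Corollary 2.1); verify these lie in $\mathfrak{U}_{C_1,c_1,C_2}$ using Theorem 2.1's own guarantee applied with $L=1$, which gives the spectral bound $\|\hU_{j,\ell}\hU_{j,\ell}^\top - U_jU_j^\top\|_2 \lesssim \sqrt{p}\sigma\lambda_{\min}^{-1}$ on an event of probability $\geq 1-C_1 e^{-c_1 p}$. (2) Write out the second-order expansion of the local estimator and isolate the expectation $B := \mathbb{E}\big[U_{1\perp}U_{1\perp}^\top(\hU_{1,\ell}\hU_{1,\ell}^\top - U_1U_1^\top)U_1 U_1^\top\big]$; compute $\|B\|_{\rm F} \gtrsim p\sigma^2\lambda_{\min}^{-2}$ by a direct Gaussian moment calculation (this uses $r_1=r_2=r_3=1$ to keep the algebra transparent). (3) Show concentration: $\big\|\frac1L\sum_\ell U_{1\perp}U_{1\perp}^\top(\hU_{1,\ell}\hU_{1,\ell}^\top - U_1U_1^\top)U_1U_1^\top - B\big\|_{\rm F} = o_{\Prob}(p\sigma^2\lambda_{\min}^{-2})$, with the $p^{-L}$ tail coming from a union/martingale bound over the $L$ independent machines. (4) Argue that the averaging-and-SVD step of Algorithm 1 cannot remove this bias: since $\hU_1 = \mathrm{svd}_{r_1}$ of the averaged projection, a Davis–Kahan / $\sin\Theta$ argument in reverse shows that if the averaged matrix has an off-diagonal block ($U_{1\perp}$-to-$U_1$) of Frobenius norm $\gtrsim p\sigma^2\lambda_{\min}^{-2}$, then the leading singular subspace $\hU_1$ must be displaced from $U_1$ by the same order; this is where the lower bound is transferred from the averaged matrix to the actual output.

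The main obstacle I anticipate is Step (2)–(3): correctly identifying the \emph{non-vanishing, machine-independent} part of the second-order remainder and separating it from the many other $O(p\sigma^2\lambda_{\min}^{-2})$ terms that either are mean-zero (and hence shrink under averaging) or live in blocks ($U_1$-to-$U_1$ or $U_{1\perp}$-to-$U_{1\perp}$) that do not affect the subspace distance at this order. This requires a bookkeeping-heavy but standard expansion of the matrix $M_{1,\ell}M_{1,\ell}^\top$'s eigenprojector; the spiked-tensor structure with rank one makes each term a low-degree polynomial in Gaussian entries whose expectation is computable by Wick's formula, but care is needed to track which $O(p\sigma^2\lambda_{\min}^{-2})$ contributions are genuinely common across $\ell$. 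A secondary technical point is ensuring the chosen initializers both satisfy the hypothesis of Theorem 2.1 and are the same objects that appear in the remainder — this is handled by the ``no sample-splitting'' remark, but it must be invoked carefully so that the decomposition \eqref{eq:b-v-decomp} is valid with $\hU^{(0)}_{j,\ell}=\hU_{j,\ell}$.
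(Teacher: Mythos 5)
Your outline matches the paper's skeleton — pick a simple $\calT^*$, design initializers in $\mathfrak{U}_{C_1,c_1,C_2}$ whose off-diagonal contribution to $\overline{\frakS_{1,2}}$ does not cancel under averaging, and then push the averaged-projection error to $\hU_1$ via Davis--Kahan — and your Steps (3) and (4) are sound. The gap is the load-bearing claim in Step (2): that with the \emph{natural} one-step (or HOOI) initializer, the second-order remainder has a nonvanishing expectation of order $p\sigma^2\lambda_{\min}^{-2}$ in the $U_{1\perp}$-to-$U_1$ block. It does not. Take $U_j=e_1$ and $\hU^{(0)}_{j,\ell}$ the one-step estimate from the true projections: $\hU^{(0)}_{2,\ell}$ is a normalization of $\lambda_{\min}U_2+Z_{2,\ell}(U_1\otimes U_3)$, so its off-signal part is supported only on the noise fiber $\{(\calZ_\ell)_{1,i_2,1}:i_2\ge 2\}$, i.e.\ entries with $i_1=1$, $i_3=1$. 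The only bias-carrying term in $\overline{\frakS_{1,2}}$ is the piece $\mathfrak A_{1,\ell,1}\propto U_1 G_1 U_{-1}^\top\big[\hU^{(0)}_{-1,\ell}\hU^{(0)\top}_{-1,\ell}-U_{-1}U_{-1}^\top\big]Z_{1,\ell}^\top U_{1\perp}U_{1\perp}^\top$ (plus symmetric), which pairs the initializer's noise against $Z_{1,\ell}^\top U_{1\perp}$ — rows $i_1\ge 2$ of the mode-$1$ noise. These two index sets are disjoint, so Wick's formula gives $\E[\mathfrak A_{1,\ell,1}]=0$, and the remaining $\mathfrak A$'s are mean zero by independence of the $U_1$- and $U_{1\perp}$-projected noise. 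Everything then averages out at rate $L^{-1/2}$ and you never reach the $p\sigma^2\lambda_{\min}^{-2}$ floor, so your claimed $B$ is zero and the mechanism collapses.

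What the supremum over $\mathfrak{U}$ actually buys — and what the paper exploits — is the freedom to pick an \emph{adversarial} data-dependent initializer that is not the HOOI/one-step output. With $U_j=e_1$, the paper takes $\hU^{(0)}_{1,\ell}=U_1$, $\hU^{(0)}_{3,\ell}=U_3$, and $\hU^{(0)}_{2,\ell}=\big(Q_{1,\ell},\,\lambda_{\min}^{-1}Q_{1,\ell}^{-1}\bm z_\ell^\top\big)^\top$ with $\bm z_\ell=\big((\calZ_\ell)_{p,1,2},\dots,(\calZ_\ell)_{p,1,p}\big)$ — noise drawn from the $i_1=p$ fiber, precisely where $Z_{1,\ell}^\top U_{1\perp}$ lives and where the natural estimator has no support. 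This still sits in $\mathfrak{U}$ because $\|\bm z_\ell\|\asymp\sqrt p\,\sigma$, but now a diagonal collision appears in $\mathfrak A_{1,\ell,1}$: the $i_1=p$ row of $Z_{1,\ell}$ matched against $\bm z_\ell$ yields $\lambda_{\min}^{-1}\sum_k(\calZ_\ell)_{p,1,k+1}^2\asymp p\sigma^2/\lambda_{\min}$, a positive quantity on every machine that survives the average; the genuinely mean-zero quadratics $\mathfrak A_{1,\ell,2}+\mathfrak A_{1,\ell,3}$ are shown to be $o(p\sigma^2\lambda_{\min}^{-2})$ by Bernstein, which is where the $p^{-L}$ tail comes from, and the rest is your Step (4). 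A secondary caution: your bootstrap ``apply Theorem 2.1 with $L=1$ to certify $\hU_{j,\ell}\in\mathfrak{U}$'' presupposes a valid initializer for that local run, and HOOI supplies one only when $\lambda_{\min}/\sigma\gtrsim p^{3/4}$, which the regime $\sqrt p\ll\lambda_{\min}/\sigma\ll\sqrt{pL}$ does not guarantee; the paper verifies its explicit initializer's membership in $\mathfrak{U}$ directly by hand.
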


The proof of Theorem \ref{thm:second-lower} is provided in Section \ref{sec:supp_proof_homo}. Recall that Theorem 2.1 establishes that the estimation error $\big\|\hU_1\hU_1^{\top}-U_1U_1^{\top}\big\|_{\rm F}$ is bounded above by  $O\big( \sigma\lambda_{\min}^{-1}(pr)^{1/2}L^{-1/2}+ pr\sigma^2\lambda_{\min}^{-2}\big)$.  Under the condition $\sqrt{prL} \lesssim \lambda_{\min} / \sigma$, the first term $\sigma\lambda_{\min}^{-1}(pr)^{1/2}L^{-1/2}$ dominates the second term $pr\sigma^2\lambda_{\min}^{-2}$, leading to the minimax optimal error rate in Corollary 2.1. Theorem \ref{thm:second-lower} further provides a lower bound result: if the condition  $\sqrt{prL} \lesssim \lambda_{\min} / \sigma$ is violated, there exists a tensor $\calT^*$ and a set of initial estimators that satisfy all the assumptions in Theorem 2.1 but result in a sub-optimal rate $pr\sigma^2\lambda_{\min}^{-2}$. This result demonstrates that the error bound in Theorem 2.1 is sharp and thus the condition $\sqrt{prL} \lesssim \lambda_{\min} / \sigma$ is necessary for achieving the optimal convergence rate.

\section{Further Results for Knowledge Transfer}\label{sec:supp-transfer}

In this section, we generalize our knowledge transfer methodology and theoretical analysis in Section 4 from the two-machine setting into an arbitrary number of machines $L$. Furthermore, we present results not only for fixed weights  but also for data-adaptive weights. 

Suppose there are $L$ tensors $\calT_{\ell}=\calT_{\ell}^*+\calZ_{\ell}$, where $\ell=1$ denotes the target tensor and $\ell=2, \dots, L$ represent the $L-1$ source tensors. Analogous to Section 4, we assume that
\begin{equation*}
	\begin{aligned}
		\calT_{\ell}^* &= \calG_{\ell} \times_1 [U_1 \; V_{1,\ell}] \times_2 [U_2 \; V_{2, \ell}] \cdots \times_J [U_J \; V_{J, \ell}], \quad U_j \in \mathbb{O}^{p_j \times r_{j,U}}, V_{j,\ell}\in\mathbb{O}^{p_j\times r_{j,V,\ell}}, \calG_{\ell} \in \R^{r_{1,\ell} \times  r_{2,\ell} \cdots \times r_{J,\ell}},
	\end{aligned}
\end{equation*}
with $U_j^{\top}V_{j,\ell}=0$ and $ r_{j,\ell}=r_{j, U}+r_{j, V, \ell}$ for $j \in [J]$ and $\ell\in[L]$. Our objective is to estimate the singular subspaces spanned by $U_j$ and $V_{j, 1}$ of the target tensor $\calT_1$. The noise tensors $\calZ_{\ell}$ are assumed to have i.i.d. $\calN(0, \sigma_{\ell}^2)$ entries for $\ell\in [L]$, thus accommodating different noise levels across the tensors. 

\begin{algorithm}[!t] 
	\spacingset{1.4}
	\caption{Transferred Tensor PCA for $L$ Machines} 
	\label{alg:transfer-L} 
	\vspace*{0.08in} {\bf Input:}
	Target tensor $\calT_{1}$, source tensors $\calT_{2},\dots, \calT_{L}$, initial estimators $\big\{[\hU_{1,\ell}^{(0)} \; \hV^{(0)}_{1,\ell}],\dots,[\hU_{J,\ell}^{(0)} \; \hV^{(0)}_{J,\ell}]\big\}$, and weights $w_{\ell}$ for $\ell=1,2,\dots, L$ that satisfy $\sum_{\ell}w_\ell=1$;
	
	{\bf Output:}  Estimators $\big\{\hU_1, \hV_{1, 1}, \hU_2, \hV_{2, 1}, \dots, \hU_J, \hV_{J, 1}\big\}$.
	\begin{algorithmic}[1]  
		\For{$\ell=1,2,\dots, L$} \do \\
		\For{$j=1,2,\dots,J$}
		\State Compute a local estimator 
		$\hU_{j, \ell}=\mathrm{svd}_{r_{j, U}}\big(\widetilde M_{j,\ell}\big)$, where $\widetilde M_{j,\ell}$ is defined in \eqref{eq:local_hetero}.
		\State Send  $\hU_{j,\ell}$ to the target machine; 
		\EndFor
		\EndFor
		\For{$j=1,2,\dots, J$}
		\State On the target machine, compute  $\hU_j = \mathrm{svd}_{r_{j, U}}\Big[\sum_{\ell=1}^{L}w_\ell\hU_{j, \ell}\hU^{\top}_{j, \ell}\Big]$;
		\State Compute 
		$\hV_{j, 1}=\mathrm{svd}_{r_{j, V,1}}\Big[\big(I_{p_j}-\hU_{j}\hU_{j}^{\top}\big)\widetilde M_{j,1}\Big]$;
		\EndFor
	\end{algorithmic} 
\end{algorithm} 


To transfer knowledge from $\calT_{2},\dots, \calT_{L}$ to improve the learning performance on $\calT_1$, we first generalize Algorithm 3 to Algorithm \ref{alg:transfer-L}  and then provide the theoretical guarantee for the estimators. Similar to Section 4, for $j \in [J]$ and $\ell\in[L]$, let $\Lambda_{j,\ell}$ be the
$r_{j,\ell} \times  r_{j,\ell} $ singular value matrix of $\calM_j(\calG_{\ell})$.  Define $\lambda_{\max}$, $\lambda_{\min}$ to be the maximum and minimum singular value over all $\Lambda_{j,\ell}$, respectively, and let $\kappa_0 = \lambda_{\max} \lambda^{-1}_{\min}$.  Moreover, define $\Delta=\min_{j \in [J], \ell\in [L]}\{\lambda_{r_{j,U},j,\ell}-\lambda_{r_{j,U}+1,j,\ell}\}$ and $\kappa=\lambda_{\max}/\Delta$, where $\lambda_{r,j,\ell}$ denotes the $r$-th largest singular value of $\Lambda_{j,\ell}$. Additionally, let $r=\max_{j,\ell}  r_{j, \ell}$ and $r_V=\max_{j, \ell}r_{j, V, \ell}$.
\begin{thm}\label{thm:transfer-L}
	Assume that there exist constants $C_1, c_1, C_2$ such that, with probability at least $1-C_1e^{-c_1p}$, $\big\|\hU_{j,\ell}^{(0)}\hU_{j,\ell}^{(0)\top}+\hV_{j,\ell}^{(0)}\hV_{j,\ell}^{(0)\top}- U_jU_j^{\top}-V_{j, \ell}V_{j, \ell}^{\top}\big\|_{2}\leq C_2\sqrt{p}\sigma_{\ell}\lambda_{\min}^{-1}$ for all $j \in [J], \ell \in  [L]$. Assume $p_j \asymp p$ for all $j$,  $r^{J-1}=O(p)$, $\kappa_0=O(1)$, $\kappa=O(1)$, and $\sqrt{prL}=o(\min(\Delta, \lambda_{\min})/\max_{\ell} \sigma_\ell)$. Then there exist constants $\widetilde{C}_1, \widetilde{c}_1, \widetilde{C}_2$ such that
	\begin{align}\label{eq:transfer-U-L}
		\sup_j\norm{\hU_{j}\hU_{j}^{\top}-U_jU_j^{\top}}_{\rm F}&\leq \widetilde{C}_2 \frac{\sqrt{pr}\sqrt{\sum_{\ell=1}^{L}w_{\ell}^2\sigma_{\ell}^2}}{\Delta},
		\\
		\label{eq:transfer-V-L}
		\sup_{j}\norm{\hV_{j,1} \hV_{j,1}^{\top}-V_{j,1} V_{j,1}^{\top}}_{\rm F} 
		&\leq \widetilde{C}_2\left(\frac{\sqrt{pr_{V}}}{\lambda_{\min}}\Big(\sigma_{1}+\sqrt{r}\sqrt{\sum_{\ell=1}^{L}w_{\ell}^2\sigma_{\ell}^2}\Big)\right),
	\end{align}
	with probability at least $1-\widetilde{C}_1e^{-\widetilde c_1p}$, where $\hU_j, \hV_{j, 1}$'s are the outputs of Algorithm \ref{alg:transfer-L}. 
\end{thm}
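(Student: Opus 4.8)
\emph{Proof plan.} The plan is to run the same three--stage argument that underlies Theorems~\ref{thm:hetero} and~\ref{thm:transfer}, making two bookkeeping changes: the common noise level $\sigma$ is replaced, machine by machine, by $\sigma_\ell$, and the uniform average $\tfrac1L\sum_\ell(\cdot)$ appearing in the heterogeneous aggregation step is replaced by the weighted combination $\sum_\ell w_\ell(\cdot)$ with $\sum_\ell w_\ell=1$. The effect is that, after aggregation, the variance of the leading noise term is governed by $\sum_{\ell=1}^{L}w_\ell^2\sigma_\ell^2$ rather than by $\sigma^2/L$, and the signal condition $\sqrt{prL}=o\big(\min(\Delta,\lambda_{\min})/\max_\ell\sigma_\ell\big)$ is exactly what forces the quadratic--in--noise remainder, which does not shrink under weighted averaging, to be dominated by the aggregated leading term. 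The three stages are: (i) a per-machine error expansion of the local estimators $\hU_{j,\ell}$; (ii) weighted aggregation followed by a Davis--Kahan bound to obtain $\hU_j$; (iii) a residual perturbation argument for $\hV_{j,1}$ on the target machine. All probabilistic statements below hold with probability at least $1-\widetilde C_1e^{-\widetilde c_1p}$ after a union bound over the $JL$ per-machine events and the finitely many concentration events.

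\emph{Stage (i): per-machine decomposition.} For each $\ell\in[L]$ and $j\in[J]$, I would reuse the expansion established in the analysis of Theorem~\ref{thm:hetero}. By \eqref{eq:local_hetero}, $\widetilde M_{j,\ell}=[U_j\;V_{j,\ell}]\calM_j(\calG_\ell)+E_{j,\ell}$, where $E_{j,\ell}$ is the sum of a Gaussian-type noise matrix of operator norm $\lesssim\sqrt{p}\,\sigma_\ell$ and a deterministic bias controlled through the assumed initialization accuracy $C_2\sqrt{p}\,\sigma_\ell\lambda_{\min}^{-1}$. The identification condition \eqref{eq:identification} forces the top-$r_{j,U}$ left singular subspace of the signal part to be $\col(U_j)$, with governing singular-value gap at least $\Delta$. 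A $\sin\Theta$/resolvent expansion then gives
\begin{equation*}
\hU_{j,\ell}\hU_{j,\ell}^{\top}-U_jU_j^{\top}=\mathcal{L}_{j,\ell}(\calZ_\ell)+R_{j,\ell},
\end{equation*}
where $\mathcal{L}_{j,\ell}$ is a fixed linear map of the noise tensor (the leading bias--variance term) with $\big\|\mathcal{L}_{j,\ell}(\calZ_\ell)\big\|_{\rm F}\lesssim\sqrt{pr}\,\sigma_\ell\,\Delta^{-1}$ and $\big\|R_{j,\ell}\big\|_{\rm F}\lesssim pr\,\sigma_\ell^2\,\Delta^{-2}$. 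As in the homogeneous case (see the discussion following Corollary~\ref{cor:1}), no sample splitting between initialization and estimation is needed, since $\mathcal{L}_{j,\ell}$ does not involve the initial estimators and $R_{j,\ell}$ can be bounded uniformly over all admissible initializers.

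\emph{Stage (ii): weighted aggregation and the common component.} Summing with weights and using $\sum_\ell w_\ell=1$,
\begin{equation*}
\widehat S_j:=\sum_{\ell=1}^{L}w_\ell\,\hU_{j,\ell}\hU_{j,\ell}^{\top}=U_jU_j^{\top}+\sum_{\ell=1}^{L}w_\ell\,\mathcal{L}_{j,\ell}(\calZ_\ell)+\sum_{\ell=1}^{L}w_\ell\,R_{j,\ell}.
\end{equation*}
Because $\calZ_1,\dots,\calZ_L$ are independent and each $\mathcal{L}_{j,\ell}$ is linear, the middle term is a Gaussian matrix with $\E\big\|\sum_\ell w_\ell\mathcal{L}_{j,\ell}(\calZ_\ell)\big\|_{\rm F}^2=\sum_\ell w_\ell^2\,\E\big\|\mathcal{L}_{j,\ell}(\calZ_\ell)\big\|_{\rm F}^2\lesssim pr\,\Delta^{-2}\sum_\ell w_\ell^2\sigma_\ell^2$; Gaussian concentration of this Lipschitz functional then yields $\big\|\sum_\ell w_\ell\mathcal{L}_{j,\ell}(\calZ_\ell)\big\|_{\rm F}\lesssim\sqrt{pr}\,\Delta^{-1}\big(\sum_\ell w_\ell^2\sigma_\ell^2\big)^{1/2}$. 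For the remainder, $\big\|\sum_\ell w_\ell R_{j,\ell}\big\|_{\rm F}\le\sum_\ell w_\ell\big\|R_{j,\ell}\big\|_{\rm F}\lesssim pr\,\Delta^{-2}\sum_\ell w_\ell\sigma_\ell^2\le pr\,\Delta^{-2}\sqrt{L}\,(\max_\ell\sigma_\ell)\big(\sum_\ell w_\ell^2\sigma_\ell^2\big)^{1/2}$ by Cauchy--Schwarz, which is $o(1)$ times the previous bound precisely because $\sqrt{prL}\,\max_\ell\sigma_\ell=o(\Delta)$. Hence $\big\|\widehat S_j-U_jU_j^{\top}\big\|_{\rm F}\lesssim\sqrt{pr}\,\Delta^{-1}\big(\sum_\ell w_\ell^2\sigma_\ell^2\big)^{1/2}=o(1)$. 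Since $\widehat S_j$ is symmetric positive semidefinite, its top-$r_{j,U}$ left singular subspace equals its top-$r_{j,U}$ eigenspace and $U_jU_j^{\top}$ has eigenvalue gap $1$, so Davis--Kahan gives $\rho\big(\hU_j,U_j\big)\lesssim\big\|\widehat S_j-U_jU_j^{\top}\big\|_{\rm F}$, which is \eqref{eq:transfer-U-L}.

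\emph{Stage (iii): the individual component of the target, and the main obstacle.} Writing $(I_{p_j}-\hU_j\hU_j^{\top})\widetilde M_{j,1}=(I_{p_j}-U_jU_j^{\top})\widetilde M_{j,1}+(U_jU_j^{\top}-\hU_j\hU_j^{\top})\widetilde M_{j,1}$, the first term has top-$r_{j,V,1}$ left singular subspace $\col(V_{j,1})$ perturbed only by the target's own noise, which—as in Theorem~\ref{thm:hetero}—contributes $O\big(\sqrt{pr_V}\,\sigma_1\lambda_{\min}^{-1}\big)$ to the $\sin\Theta$ error; the second term has Frobenius norm at most $\big\|U_jU_j^{\top}-\hU_j\hU_j^{\top}\big\|_{\rm F}\,\big\|\widetilde M_{j,1}\big\|_{2}\lesssim\rho\big(\hU_j,U_j\big)\,\lambda_{\max}$, and since the $(r_{j,V,1})$-th singular value of $(I_{p_j}-U_jU_j^{\top})\widetilde M_{j,1}$ is $\gtrsim\lambda_{\min}$, Wedin's theorem together with $\kappa_0,\kappa=O(1)$ converts it into an additional $O\big(\sqrt{pr}\,(\sum_\ell w_\ell^2\sigma_\ell^2)^{1/2}\lambda_{\min}^{-1}\big)\le O\big(\sqrt{pr_V}\sqrt{r}\,(\sum_\ell w_\ell^2\sigma_\ell^2)^{1/2}\lambda_{\min}^{-1}\big)$ term; summing and absorbing higher-order contributions under the SNR condition gives \eqref{eq:transfer-V-L}. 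I expect the main obstacle to be Stage (i)—deriving the per-machine expansion in the presence of the individual components $V_{j,\ell}$ so that the controlling signal is the gap $\Delta$ (not $\lambda_{\min}$), while only assuming the \emph{combined}, unpartitioned initialization bound—which is precisely the content already developed for Theorems~\ref{thm:hetero} and~\ref{thm:transfer}. The one genuinely new point is the concentration step in Stage (ii): the leading terms $\mathcal{L}_{j,\ell}(\calZ_\ell)$ are independent but \emph{not} identically distributed (the cores $\calG_\ell$, factors $V_{j,\ell}$ and levels $\sigma_\ell$ all vary), so one must verify that their weighted sum concentrates with the exact variance proxy $\sum_\ell w_\ell^2\sigma_\ell^2$ and that the non-averaging remainder stays dominated—this is why the signal condition carries the extra factor $\sqrt{L}$ relative to Theorem~\ref{thm:transfer}.
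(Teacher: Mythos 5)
Your proposal is correct and follows essentially the same route as the paper's proof: per-machine error expansion via a linear leading term plus a quadratic remainder (the paper's $f_{j,\ell}[\zeta_{j,\ell,1}U_j]$ and its transpose versus your $\mathcal{L}_{j,\ell}(\calZ_\ell)$), then weighted aggregation where independence across machines yields the $\sum_\ell w_\ell^2\sigma_\ell^2$ variance proxy, Cauchy--Schwarz plus the strengthened signal condition $\sqrt{prL}\max_\ell\sigma_\ell=o(\Delta)$ to dominate the non-averaging remainder, Davis--Kahan for $\hU_j$, and a perturbation argument for $\hV_{j,1}$. The only differences are presentational: you invoke Gaussian Lipschitz concentration of $\|\cdot\|_{\rm F}$ where the paper vectorizes and bounds the spectral norm of the explicit covariance $\Sigma_j$, and you apply Wedin to the rectangular matrix $(I-\hU_j\hU_j^\top)\widetilde M_{j,1}$ where the paper applies Davis--Kahan to the symmetrized Gram matrix---both yield the same bounds.
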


Theorem \ref{thm:transfer-L} is a generalization of Theorem 4.1 from $L=2$ to arbitrary $L$. Furthermore, it also broadens the results for the heterogeneous setting established in Section 3  to accommodate varying noise levels across different tensors $\calT_{\ell}$.  If $\sigma_{\ell}=\sigma$ for all $\ell$ and $w_{\ell}=\frac{1}{L}$, the results of Theorem \ref{thm:transfer-L} reduce to those of Theorem 3.1.  The proof of Theorem \ref{thm:transfer-L} is provided in Section \ref{sec:supp_proof_transfer}. 

Based on the error rates established in \eqref{eq:transfer-U-L} and \eqref{eq:transfer-V-L}, we further give the optimal choice for $w_{\ell}$ by minimizing $\sum_{\ell=1}^{L}w_{\ell}^2\sigma^2_{\ell}$ under the constraint $\sum_{\ell=1}^L w_{\ell}=1$, which yields the optimal weights $w_{\ell}^* = \frac{\sigma_{\ell}^{-2}}{\sum_{\ell=1}^{L}\sigma_{\ell}^{-2}}$. The corresponding optimal error rates are
\begin{equation}\label{eq:transfer-rate-L}
	\sup_j\norm{\hU_{j}\hU_{j}^{\top}-U_jU_j^{\top}}_{\rm F}=O_{\Prob}\left(\frac{\sqrt{pr}\overline{\sigma}}{\Delta}\right) \, \mathrm{and} \, 	\sup_{j}\norm{\hV_{j,1} \hV_{j,1}^{\top}-V_{j,1} V_{j,1}^{\top}}_{\rm F} 
	=O_{\Prob}\left(\frac{\sqrt{pr_{V}}}{\lambda_{\min}}\big(\sigma_{1}+\sqrt{r}\overline{\sigma}\big)\right),
\end{equation} 
where $\overline{\sigma}^2=1/\sum_{\ell=1}^L\sigma_{\ell}^{-2}$. 

In practice, we can estimate $\sigma_{\ell}$ by 
\begin{equation}\label{eq:estimate-sigma-L}
	\widehat\sigma_{\ell} =\Big\|\calT_{\ell}-\calT_{\ell}\times_1\big[\hU_{1,\ell}\; \hV_{1,\ell}\big]\big[\hU_{1,\ell}\; \hV_{1,\ell}\big]^{\top} \times_2\cdots\times_J\big[\hU_{J,\ell}\; \hV_{J,\ell}\big]\big[\hU_{J,\ell}\; \hV_{J,\ell}\big]^{\top}\Big\|_{\rm F} / \sqrt{p_1p_2\cdots p_J}, \quad \ell\in [L],
\end{equation}
and choose $\widehat{w}_{\ell}=\frac{\widehat{\sigma}_{\ell}^{-2}}{\sum_{\ell=1}^{L}\widehat{\sigma}_{\ell}^{-2}}$. We further provide the theoretical guarantee for the estimators obtained in Algorithm \ref{alg:transfer-L} with data-adaptive weights $\widehat w_{\ell}$.

\begin{thm}\label{thm:transfer-adaptive}
	Let $\hU_j, \hV_{j, 1}$'s be the outputs of Algorithm \ref{alg:transfer-L} with weights $\widehat{w}_{\ell}$ defined above. Under the conditions of Theorem \ref{thm:transfer-L}, we have that
	\begin{equation}\label{eq:error-transfer-U-adaptive}
	\sup_j\norm{\hU_{j}\hU_{j}^{\top}-U_jU_j^{\top}}_{\rm F}\leq \widetilde{C}_2\bigg( \frac{\sqrt{pr}\overline{\sigma}}{\Delta}+\frac{r\sqrt{L}\overline{\sigma}}{ p^{J/2-1}\Delta}\bigg),
\end{equation}
	and
	\begin{equation}\label{eq:error-transfer-V-adaptive}
	\sup_{j}\norm{\hV_{j,1} \hV_{j,1}^{\top}-V_{j,1} V_{j,1}^{\top}}_{\rm F} \leq \widetilde{C}_2 \frac{\sqrt{pr_{V}}}{\lambda_{\min}}\bigg(\sigma_{1}+\sqrt{r}\overline{\sigma}+\frac{r\sqrt{L}\overline{\sigma}}{p^{(J-1)/2}}\bigg),
\end{equation}
	for some constants $ \widetilde{C}_1$ and $ \widetilde{C}_2$, with probability at least $1-\widetilde {C}_1p^{-1}$.  If we further assume that $rL/p^{J-1}=o(1)$, then, with probability at least $1-\widetilde {C}_1p^{-1}$,
		\[\sup_j\norm{\hU_{j}\hU_{j}^{\top}-U_jU_j^{\top}}_{\rm F}\leq \widetilde{C}_2 \frac{\sqrt{pr}\overline{\sigma}}{\Delta}, \quad 	\sup_{j}\norm{\hV_{j,1} \hV_{j,1}^{\top}-V_{j,1} V_{j,1}^{\top}}_{\rm F} \leq \widetilde{C}_2 \frac{\sqrt{pr_{V}}}{\lambda_{\min}}\big(\sigma_{1}+\sqrt{r}\overline{\sigma}\big),\]
		which matches the error rates established in \eqref{eq:transfer-rate-L}.
\end{thm}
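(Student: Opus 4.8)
\textbf{Proof proposal for Theorem~\ref{thm:transfer-adaptive}.}
The plan is to reduce the data-adaptive estimator to the oracle-weighted one of Theorem~\ref{thm:transfer-L} with the deterministic weights $w^\star_\ell=\sigma_\ell^{-2}/\sum_{\ell'}\sigma_{\ell'}^{-2}$ (for which $\sum_\ell w^\star_\ell=1$, $\sum_\ell (w^\star_\ell)^2\sigma_\ell^2=\overline\sigma^2$, and $\sum_\ell w^\star_\ell\sigma_\ell=\overline\sigma^2\sum_\ell\sigma_\ell^{-1}\le\sqrt L\,\overline\sigma$), and to control the discrepancy through the estimation error of the noise levels. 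There are three ingredients: a consistency bound for $\widehat\sigma_\ell$, its translation into a bound on $\|\widehat w-w^\star\|_1$, and a perturbation analysis of the SVD in the aggregation step.

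\textit{Step 1: consistency of $\widehat\sigma_\ell$.} Writing $P_\ell$ for the orthogonal Tucker projection $\mathcal X\mapsto\mathcal X\times_1[\hU_{1,\ell}\,\hV_{1,\ell}][\hU_{1,\ell}\,\hV_{1,\ell}]^\top\cdots\times_J[\hU_{J,\ell}\,\hV_{J,\ell}][\hU_{J,\ell}\,\hV_{J,\ell}]^\top$, definition \eqref{eq:estimate-sigma-L} and $\calT_\ell=\calT_\ell^*+\calZ_\ell$ give $p_1\cdots p_J\,\widehat\sigma_\ell^2=\|\calZ_\ell\|_{\rm F}^2-\|P_\ell\calZ_\ell\|_{\rm F}^2+2\angles{\calZ_\ell,(I-P_\ell)\calT_\ell^*}+\|(I-P_\ell)\calT_\ell^*\|_{\rm F}^2$. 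I would bound $\|\calZ_\ell\|_{\rm F}^2=\sigma_\ell^2\prod_jp_j\,(1+O_{\Prob}(p^{-J/2}))$ by a $\chi^2$ tail bound; the captured-noise term $\|P_\ell\calZ_\ell\|_{\rm F}^2$ via spectral bounds on $\calM_j(\calZ_\ell)$ restricted to the estimated mode-subspaces; the bias $\|(I-P_\ell)\calT_\ell^*\|_{\rm F}$ by telescoping over modes, each summand being $\lesssim\|\calM_j(\calT_\ell^*)\|_2\cdot\|[\hU_{j,\ell}\,\hV_{j,\ell}][\hU_{j,\ell}\,\hV_{j,\ell}]^\top-[U_j\,V_{j,\ell}][U_j\,V_{j,\ell}]^\top\|_2\lesssim\lambda_{\max}\sqrt p\,\sigma_\ell/\lambda_{\min}$ by the intermediate operator-norm guarantees inside the proof of Theorem~\ref{thm:transfer-L}; and the cross term by a Gaussian-process argument after restricting $\calZ_\ell$ to the low-dimensional subspace carrying $(I-P_\ell)\calT_\ell^*$. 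Collecting these gives $\max_{\ell\in[L]}|\widehat\sigma_\ell^2/\sigma_\ell^2-1|=O_{\Prob}(\delta)$ for an explicit $\delta$ which is at most $\sqrt{r/p^{J-1}}$; the fact that this control relies on polynomial-rate concentration is what degrades the overall failure probability to $1-\widetilde C_1 p^{-1}$.

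\textit{Steps 2--3: weight error and perturbation.} Since $w^\star_\ell$ is a smooth function of $(\sigma_1^{-2},\dots,\sigma_L^{-2})$ with $\sum_\ell w^\star_\ell=1$, a first-order expansion gives $\widehat w_\ell-w^\star_\ell=-w^\star_\ell(\epsilon_\ell-\bar\epsilon)+O(\max_\ell\epsilon_\ell^2)$ with $\epsilon_\ell:=\widehat\sigma_\ell^2/\sigma_\ell^2-1$ and $\bar\epsilon:=\sum_{\ell'}w^\star_{\ell'}\epsilon_{\ell'}$, hence $\sum_\ell(\widehat w_\ell-w^\star_\ell)=0$, $\|\widehat w-w^\star\|_1=O_{\Prob}(\max_\ell|\epsilon_\ell|)$, and $\sum_\ell|\widehat w_\ell-w^\star_\ell|\sigma_\ell\lesssim\max_\ell|\epsilon_\ell|\,\sqrt L\,\overline\sigma$. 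Then decompose the aggregated matrix as $\widehat A_j:=\sum_\ell\widehat w_\ell\hU_{j,\ell}\hU_{j,\ell}^\top=A_j^\star+E_j$, $A_j^\star:=\sum_\ell w^\star_\ell\hU_{j,\ell}\hU_{j,\ell}^\top$, where, using $\sum_\ell(\widehat w_\ell-w^\star_\ell)=0$,
\[
E_j=\sum_\ell(\widehat w_\ell-w^\star_\ell)\big(\hU_{j,\ell}\hU_{j,\ell}^\top-U_jU_j^\top\big),\qquad
\|E_j\|_{\rm F}\le\sum_\ell|\widehat w_\ell-w^\star_\ell|\,\norm{\hU_{j,\ell}\hU_{j,\ell}^\top-U_jU_j^\top}_{\rm F}\lesssim\max_\ell|\epsilon_\ell|\,\frac{\sqrt{prL}\,\overline\sigma}{\Delta},
\]
inserting the one-step local rate $\|\hU_{j,\ell}\hU_{j,\ell}^\top-U_jU_j^\top\|_{\rm F}\lesssim\sqrt{pr}\,\sigma_\ell/\Delta$ from the proof of Theorem~\ref{thm:transfer-L}. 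By Theorem~\ref{thm:transfer-L} applied with the deterministic weights $w^\star_\ell$, the top-$r_{j,U}$ left singular subspace of $A_j^\star$ is within $O(\sqrt{pr}\,\overline\sigma/\Delta)$ of $\col(U_j)$; moreover $\|A_j^\star-U_jU_j^\top\|_2\le\max_\ell\|\hU_{j,\ell}\hU_{j,\ell}^\top-U_jU_j^\top\|_2=o(1)$ under the SNR condition, so $A_j^\star$ has an $\Omega(1)$ eigengap at index $r_{j,U}$. A Frobenius Davis--Kahan/Wedin bound then gives $\|\hU_j\hU_j^\top-\text{(top-}r_{j,U}\text{ subspace of }A_j^\star)\|_{\rm F}\lesssim\|E_j\|_{\rm F}$, and the triangle inequality with $\delta\lesssim\sqrt{r/p^{J-1}}$ yields \eqref{eq:error-transfer-U-adaptive}. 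For $\hV_{j,1}=\mathrm{svd}_{r_{j,V,1}}\big[(I-\hU_j\hU_j^\top)\widetilde M_{j,1}\big]$ I would rerun the individual-component part of the proof of Theorem~\ref{thm:transfer-L}: the error of $\hU_j$ enters linearly via $(\hU_j\hU_j^\top-U_jU_j^\top)\widetilde M_{j,1}$, so replacing the oracle $\hU_j$-error by its perturbed counterpart gives \eqref{eq:error-transfer-V-adaptive}. The final display is immediate since $rL/p^{J-1}=o(1)$ forces $\|E_j\|_{\rm F}=o(\sqrt{pr}\,\overline\sigma/\Delta)$.

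\textit{Main obstacle.} The crux is Step~1: obtaining a sharp-enough, high-probability relative-error bound for $\widehat\sigma_\ell$ while $P_\ell$ is itself a complicated data-dependent function of $\calZ_\ell$, so that $\|P_\ell\calZ_\ell\|_{\rm F}^2$, the bias $\|(I-P_\ell)\calT_\ell^*\|_{\rm F}^2$, and the cross term must be controlled uniformly over the random estimated subspace --- and this uniform control over an adaptively chosen Tucker subspace is also exactly where the failure probability degrades to polynomial order. Given Step~1, Steps~2--3 are routine SVD-perturbation bookkeeping built on Theorem~\ref{thm:transfer-L}.
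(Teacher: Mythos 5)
Your proposal is correct and follows essentially the same route as the paper's proof. The skeleton is identical: control the discrepancy $\widehat w_\ell-w_\ell^*$ through $\widehat\sigma_\ell$-consistency, bound $\sum_\ell|\widehat w_\ell-w_\ell^*|\,\|\hU_{j,\ell}\hU_{j,\ell}^\top-U_jU_j^\top\|_{\rm F}$ using the per-machine rate $\sqrt{pr}\sigma_\ell/\Delta$, add the oracle-weight error from Theorem~\ref{thm:transfer-L}, and finish with a Davis--Kahan/triangle-inequality step. Your use of $\sum_\ell(\widehat w_\ell-w_\ell^*)=0$ to center the perturbation $E_j$ on $\hU_{j,\ell}\hU_{j,\ell}^\top-U_jU_j^\top$ (rather than $\hU_{j,\ell}\hU_{j,\ell}^\top$ itself) is exactly the manipulation the paper performs implicitly inside its triangle inequality, and your bound $\sum_\ell|\widehat w_\ell-w_\ell^*|\sigma_\ell\lesssim\max_\ell|\epsilon_\ell|\sqrt L\,\overline\sigma$ reproduces the paper's Cauchy--Schwarz step $\sum_\ell\sigma_\ell^{-1}\le\sqrt L\,(\sum_\ell\sigma_\ell^{-2})^{1/2}$.

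The one place you genuinely diverge is Step~1: you propose to re-derive the consistency of $\widehat\sigma_\ell$ from the residual decomposition $p_1\cdots p_J\widehat\sigma_\ell^2=\|\calZ_\ell\|_{\rm F}^2-\|P_\ell\calZ_\ell\|_{\rm F}^2+2\angles{\calZ_\ell,(I-P_\ell)\calT_\ell^*}+\|(I-P_\ell)\calT_\ell^*\|_{\rm F}^2$, whereas the paper simply invokes the proof of Lemma~1 of \cite{xia2022inference} to obtain $|\widehat\sigma_\ell\sqrt{p_1\cdots p_J}-\|\calZ_\ell\|_{\rm F}|\lesssim\sqrt{pr}\kappa_0\sigma_\ell$ w.h.p., and then pairs it with a Laurent--Massart $\chi^2$ tail bound on $\|\calZ_\ell\|_{\rm F}^2$. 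Your decomposition is precisely the one that underlies their lemma, so this buys self-containedness at the cost of redoing that analysis; the telescoping-bias bound should carry an extra $\sqrt r$ factor, i.e. $\|(I-P_\ell)\calT_\ell^*\|_{\rm F}\lesssim\kappa_0\sqrt{pr}\,\sigma_\ell$, but this does not change the final rate since the end target is $|\widehat\sigma_\ell-\sigma_\ell|\lesssim p^{-(J-1)/2}\sqrt r\,\kappa_0\sigma_\ell$, matching the paper. One small misattribution: the degradation of the failure probability to $1-\widetilde C_1p^{-1}$ in the paper comes from the Laurent--Massart $\chi^2$ deviation on $\|\calZ_\ell\|_{\rm F}^2$ (together with the union bound over $L\lesssim p^{c_3}$ machines), not from the uniform control over the data-adaptive Tucker subspace, which in the paper's chain still enjoys an exponential tail via the cited lemma.
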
 
The proof of Theorem \ref{thm:transfer-adaptive} is provided in Section \ref{sec:supp_proof_transfer}. Compared to \eqref{eq:transfer-rate-L}, the error rate of $\hU_j$ in \eqref{eq:error-transfer-U-adaptive} contains an additional term $r\sqrt{L}\overline{\sigma}/(p^{J/2-1}\Delta)$ due to the estimation error between $\widehat{w}_{\ell}$ and $w_{\ell}^*$, which is dominated by the first term $\sqrt{pr}\overline{\sigma}/\Delta$ under a mild condition that $rL/p^{J-1}=o(1)$. The same is true for the error rate of $\hV_{j, 1}$ in \eqref{eq:error-transfer-V-adaptive}.

\section{Technical Proof of the Theoretical Results}\label{sec:supp_theory}

\subsection{Proof of the Results for the Homogeneous Setting}
\label{sec:supp_proof_homo}
\begin{proof}[Proof for Theorem 2.1]
For $j \in [J]$, $\ell \in [L]$, define $Z_{j, \ell}=\calM_j(\calZ_{\ell})$, $T_{j}=\calM_j(\calT^*)$, and $G_{j}=\calM_j(\calG)$.
Let $U_{j\perp} \in \R^{p_j \times (p_j-r_j)}$ be the orthogonal complement of $U_j$, i.e., $[U_j \,\,  U_{j\perp}]$ is an orthogonal matrix in $\R^{p_j \times p_j}$. Denote the compact singular value decomposition of $G_j$ by $U_{G_j}\Lambda_jV^{\top}_{G_j}$, where $U_{G_j} \in \mathbb{O}^{r_j \times r_j}$, $\Lambda_j \in \R^{r_j \times r_j}$, and $V^{\top}_{G_j}V_{G_j}=I_{r_j}$. Note the model (1) is equivalent to the model with $\widetilde{\calG}=\calG \times_1 O^{\top}_{1} \times_2 O^{\top}_{2}\cdots \times_J O^{\top}_{J}$ and $\widetilde U_j = U_j O_j$, for any $O_j \in \mathbb{O}^{r_j \times r_j}$, $j\in[J]$. Taking $O_j=U_{G_j}$ leads to $\widetilde{G}_j=\Lambda_jV_{G_j}^{\top}$ and thus $\widetilde{G}_j\widetilde{G}_j^{\top}=\Lambda_j^2$, while the projection matrix $U_jO_jO_j^{\top}U_j^{\top}=U_jU_j^{\top}$ remains invariant. Therefore, without loss of generality, we assume $G_jG_j^{\top}=\Lambda_j^2$ in the sequel. 
Moreover, by the definition of $\lambda_{\min}$ and $\kappa_0$, it holds that
\begin{equation}
	\norm{\Lambda^{-1}_j}_2 \leq \lambda_{\min}^{-1}, \quad 	\norm{G_j}_2=\norm{\Lambda_j}_2 \leq \kappa_0\lambda_{\min}.
\end{equation}

Hereafter, for clear presentation, we define $p_{-j} = \prod_{k \in [J]}p_k / p_j$, $r_{-j} = \prod_{k \in [J]}r_k / r_j$, and 
\[U_{-j} = U_1 \otimes U_2 \otimes \cdots \otimes U_{j-1} \otimes U_{j+1} \otimes \cdots \otimes U_{J} \in \mathbb{O}^{p_{-j} \times r_{-j}}.\] 
For each $j\in [J]$, $\ell\in [L]$, define a ``locally-good" event
\begin{equation}\label{eq:good}
	\begin{aligned}
		&\quad E_{j,\ell}(C)\\
		&:=\bigg\{\norm{\hU_{j,\ell}^{(0)}\hU_{j,\ell}^{(0)\top}-U_{j}U_{j}^{\top}}_2 \leq C \sqrt{p}\sigma\lambda_{\min}^{-1}, \, 
		\norm{Z_{j,\ell}U_{-j}}_2 \leq C\sigma\sqrt{p}, \\ 
		&\quad \sup_{\substack{X_{j'} \in \R^{p_{j'} \times r_{j'}}\\ \norm{X_{j'}}_2 \leq 1, \forall j'\in [J]\setminus \{j\}}} \norm{Z_{j,\ell}(X_1 \otimes X_2 \otimes \cdots \otimes X_{j-1} \otimes X_{j+1}\otimes \cdots \otimes X_J)}_2 \leq C\sigma\sqrt{pr} \bigg\}.
	\end{aligned}
\end{equation}
Then define a ``globally-good'' event
\[	E(C):=\bigcap_{j=1}^J \bigg\{\Big\{\bigcap_{\ell=1}^LE_{j,\ell}(C) \Big\} \bigcap \Big\{	\norm{\overline{Z}_{j}U_{-j}}_2 \leq C\sigma\sqrt{p/L}\Big\}\bigg\},\]
where $\overline{Z}_{j}:=\frac{1}{L}\sum_{\ell=1}^{L}Z_{j, \ell}$.
We have the following lemma for the probability of $E(C)$.
\begin{lemma}\label{lem:1}
	Under the assumptions in Theorem 2.1, there exist constants $C_1^{\prime}$, $c_1^{\prime}$, $C_2^{\prime}$ such that  $\Prob[E(C_2^{\prime})] \geq 1-C_1^{\prime}Le^{-c_1^{\prime}p}$.
\end{lemma}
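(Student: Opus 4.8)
\textbf{Proof proposal for Lemma \ref{lem:1}.}

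The plan is to bound the probability of the complement of $E(C_2')$ by a union bound over the $O(JL)$ constituent events, each of which is a high-probability concentration statement for a Gaussian matrix or a given initial estimator. First, the event $\{\|\hU_{j,\ell}^{(0)}\hU_{j,\ell}^{(0)\top}-U_jU_j^{\top}\|_2 \le C\sqrt{p}\sigma\lambda_{\min}^{-1}\}$ holds with probability at least $1-C_1e^{-c_1p}$ by the assumption of Theorem \ref{thm:1}, for a suitable choice of $C \ge C_2$. The remaining events concern the Gaussian noise matrices $Z_{j,\ell}=\calM_j(\calZ_\ell)$ and their average $\overline Z_j$, and all follow from standard bounds on the operator norm of Gaussian matrices with fixed orthonormal projections.

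The key steps are as follows. (i) For the term $\|Z_{j,\ell}U_{-j}\|_2$: since $U_{-j}\in\mathbb{O}^{p_{-j}\times r_{-j}}$ has orthonormal columns, $Z_{j,\ell}U_{-j}$ is a $p_j\times r_{-j}$ matrix with i.i.d.\ $\calN(0,\sigma^2)$ entries, so by the standard non-asymptotic bound (e.g.\ Vershynin's theorem on Gaussian matrices) $\|Z_{j,\ell}U_{-j}\|_2 \le \sigma(\sqrt{p_j}+\sqrt{r_{-j}}+t)$ with probability at least $1-2e^{-t^2/2}$; taking $t\asymp\sqrt{p}$ and using $p_j\asymp p$, $r_{-j}=r_{-j}\le r^{J-1}=O(p)$ gives the claimed $C\sigma\sqrt p$ bound with exponentially small failure probability. (ii) For $\|\overline Z_j U_{-j}\|_2$: the matrix $\overline Z_j U_{-j}$ has i.i.d.\ $\calN(0,\sigma^2/L)$ entries, so the same bound gives $\|\overline Z_j U_{-j}\|_2 \le \sigma\sqrt{p/L}\cdot C$ with probability at least $1-e^{-c'p}$. (iii) For the supremum term $\sup_{\|X_{j'}\|_2\le 1}\|Z_{j,\ell}(X_1\otimes\cdots\otimes X_{j-1}\otimes X_{j+1}\otimes\cdots\otimes X_J)\|_2$: this is the place that requires the most care. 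One reduces to a covering-number argument --- each $X_{j'}$ ranges over the unit ball of $\R^{p_{j'}\times r_{j'}}$, and the Kronecker product is multilinear, so a standard $\epsilon$-net over each factor (of size $(C/\epsilon)^{p_{j'}r_{j'}}$) combined with a union bound controls the supremum; the total metric entropy is $O(\sum_{j'}p_{j'}r_{j'})=O(pr)$, and since for fixed $X_{j'}$'s the quantity $\|Z_{j,\ell}(X_1\otimes\cdots)\|_2$ concentrates around $\sigma\sqrt{p}$ (the Kronecker product of unit-norm matrices has spectral norm $\le 1$ and rank $\le r^{J-1}$), balancing the net cardinality $e^{O(pr)}$ against the Gaussian tail $e^{-cp(\cdot)^2}$ forces the extra $\sqrt r$ factor, yielding $C\sigma\sqrt{pr}$. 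Alternatively, this bound is available directly as a lemma in \cite{xia2022inference} or \cite{zhang2018tensor}, which I would cite to keep the argument short.

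Finally, assembling the pieces: for each $(j,\ell)$ the event $E_{j,\ell}(C)$ fails with probability at most $C_1e^{-c_1p}+C e^{-c'p}$, and the $J$ events $\{\|\overline Z_j U_{-j}\|_2\le C\sigma\sqrt{p/L}\}$ each fail with probability at most $e^{-c'p}$. A union bound over the $JL+J = O(L)$ events (absorbing the constant $J$ into a universal constant, and using $L=O(p^{c_3})$ so that the polynomial factor is dominated by the exponential) gives $\Prob[E(C_2')^c]\le C_1'Le^{-c_1'p}$ for appropriate constants $C_1'$, $c_1'$, and a sufficiently large choice $C_2'$ of the constant $C$. The main obstacle is the uniform (over all $X_{j'}$ in the unit ball) spectral-norm bound in step (iii), where the covering argument must be done carefully to produce exactly the $\sqrt{pr}$ scaling and not a worse power of $r$; I would handle this either by a clean multilinear $\epsilon$-net argument or by invoking the corresponding lemma from \cite{xia2022inference}.
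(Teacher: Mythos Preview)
Your proposal is correct and follows essentially the same approach as the paper's proof: the assumption of Theorem~\ref{thm:1} handles the initial-estimator event, the Vershynin-type Gaussian matrix bound handles $\|Z_{j,\ell}U_{-j}\|_2$ and $\|\overline Z_jU_{-j}\|_2$ (since both are Gaussian matrices with i.i.d.\ entries after projection), and the supremum term is controlled by exactly the multilinear $\varepsilon$-net argument you describe, with net cardinality $e^{O(pr)}$ balanced against the Gaussian tail to produce the $\sqrt{pr}$ scaling. The paper carries out step~(iii) explicitly rather than citing an external lemma, but the mechanics are identical to your sketch; the only minor detail you gloss over is that for fixed $(X_{j'})$ the projected matrix $Z_{j,\ell}(X_1\otimes\cdots)$ has rows with covariance $\Sigma_{-j}=\sigma^2\bigotimes_{j'\neq j}X_{j'}^\top X_{j'}$ rather than a scalar multiple of the identity, so one applies the sub-Gaussian matrix bound to $Z_{j,\ell}(X_1\otimes\cdots)\Sigma_{-j}^{-1/2}$ and uses $\|\Sigma_{-j}^{1/2}\|_2\le\sigma$.
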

The proof of Lemma \ref{lem:1} is provided in Section \ref{sec:supp-theory-lemmas}. Note that we assume that $L \lesssim p^{c_3}$ for some $c_3>0$, which implies that $E(C_2')$ has probability approaching one. For the homogeneous setting, we follow \cite{xia2022inference} to decompose $\hU_j\hU_j^{\top}-U_jU_j^{\top}$. By definition, the local estimator $\hU_{j,\ell}$ is composed of the first $r_j$ left singular vectors of \[ \calM_j\big(\calT_{\ell} \times_{1} \hU_{1,\ell}^{(0)\top} \times_{2} \hU_{2,\ell}^{(0)\top} \cdots \times_{j-1}\hU_{j-1,\ell}^{(0)\top} \times_{j+1}\hU_{j+1,\ell}^{(0)\top}\cdots \times_{J}\hU_{J,\ell}^{(0)\top}\big)=	(T_j + Z_{j, \ell})\hU^{(0)}_{-j, \ell},\]  
which are also the eigenvectors of the symmetric matrix
\begin{equation}
	\quad (T_j + Z_{j, \ell}) \hU^{(0)}_{-j, \ell}\hU_{-j, \ell}^{(0)\top}(T^{\top}_j + Z^{\top}_{j, \ell})= T_jU_{-j}U_{-j}^{\top} T_j^{\top} +  \frakE_{j, \ell}= U_j\Lambda_j^2U_j^{\top} +  \frakE_{j, \ell},
	\label{eq:decomp_before_SVD-1}
\end{equation}
where 
\[\hU^{(0)}_{-j, \ell} = \hU^{(0)}_{1, \ell} \otimes \hU^{(0)}_{2,\ell} \otimes \cdots \otimes \hU^{(0)}_{j-1, \ell} \otimes \hU^{(0)}_{j+1, \ell} \otimes \cdots \otimes \hU^{(0)}_{J, \ell},\]
and 
$\frakE_{j, \ell}$ is a remainder term defined by
\begin{equation}
	\begin{aligned}
		\frakE_{j, \ell}&:=\zeta_{j,\ell,1}+\zeta^{\top}_{j,\ell,1}+\zeta_{j,\ell,2}+\zeta^{\top}_{j,\ell,2}+\zeta_{j,\ell,3}+\zeta_{j,\ell,4}+\zeta_{j,\ell,5},\\
		\zeta_{j,\ell,1}&:=T_{j} U_{-j}U_{-j}^{\top}Z^{\top}_{j,\ell}, \\
		\zeta_{j,\ell,2}&:= T_{j} \left[\hU^{(0)}_{-j,\ell}\hU_{-j,\ell}^{(0)\top}-U_{-j}U_{-j}^{\top}\right]Z^{\top}_{j,\ell},\\
		\zeta_{j,\ell,3}&:= Z_{j,\ell}U_{-j}U_{-j}^{\top}Z_{j,\ell}^{\top},\\
		\zeta_{j,\ell,4}&:= Z_{j,\ell}\left[\hU^{(0)}_{-j,\ell}\hU_{-j,\ell}^{(0)\top}-U_{-j}U_{-j}^{\top}\right]Z_{j,\ell}^{\top},\\
		\zeta_{j,\ell,5}&:=T_j \left[\hU^{(0)}_{-j,\ell}\hU_{-j,\ell}^{(0)\top}-U_{-j}U_{-j}^{\top}\right] T_j^{\top}.
	\end{aligned}
	\label{eq:frakE_decomp-1}
\end{equation} 
The last equality of \eqref{eq:decomp_before_SVD-1} follows from the facts that $T_j=U_jG_jU_{-j}^{\top}$ and $G_jG_j^{\top}=\Lambda_j^2$. We use the following lemma to provide upper bounds for each term in \eqref{eq:frakE_decomp-1}. 

\begin{lemma}\label{lem:2} Under the event $E(C_2')$ and the assumptions in Theorem 2.1, 	there exists some absolute constant $C_3>0$ such that 
	\begin{equation}
		\begin{aligned}
			&\norm{\zeta_{j,\ell,1}}_2 \leq C_3 \kappa_0\lambda_{\min}\sigma\sqrt{p}, \quad  \norm{\zeta_{j,\ell,2}}_2\leq C_3 \kappa_0p\sigma^2\sqrt{r},\\ &\norm{\zeta_{j,\ell,3}}_2 \leq C_3p\sigma^2, \quad \norm{\zeta_{j,\ell, 4}}_2 \leq C_3 p^{3/2}\sqrt{r}\sigma^3\lambda_{\min}^{-1}, \quad  \norm{\zeta_{j,\ell, 5}}_2 \leq C_3 \kappa_0^2p\sigma^2,
		\end{aligned}
		\label{eq:frakE_bound-1} 
	\end{equation}
\end{lemma}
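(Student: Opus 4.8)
The plan is to carry out the entire argument on the ``globally-good'' event $E(C_2')$ from Lemma~\ref{lem:1}, where every concentration estimate we need is in force, and to bound $\zeta_{j,\ell,1},\dots,\zeta_{j,\ell,5}$ one at a time; the transposed copies $\zeta_{j,\ell,1}^\top,\zeta_{j,\ell,2}^\top$ in \eqref{eq:frakE_decomp-1} then satisfy the same bounds since the spectral norm is transpose-invariant. Three structural observations drive everything. First, by the mixed-product property, $\hU^{(0)}_{-j,\ell}\hU_{-j,\ell}^{(0)\top}=\bigotimes_{j'\neq j}\hU^{(0)}_{j',\ell}\hU^{(0)\top}_{j',\ell}$ and $U_{-j}U_{-j}^{\top}=\bigotimes_{j'\neq j}U_{j'}U_{j'}^{\top}$, so their difference telescopes into a sum of $J-1$ Kronecker products, each with exactly one factor of the form $\hU^{(0)}_{j',\ell}\hU^{(0)\top}_{j',\ell}-U_{j'}U_{j'}^{\top}$ (operator norm $\le C_2'\sqrt{p}\sigma\lambda_{\min}^{-1}=:\delta$ on $E(C_2')$) and all other factors single rank-$r_{j''}$ projections. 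Second, $T_j=U_jG_jU_{-j}^{\top}$ with $\norm{T_j}_2=\norm{G_j}_2\le\kappa_0\lambda_{\min}$ and $T_jU_{-j}U_{-j}^{\top}=T_j$, which lets us always keep a factor $U_{-j}$ next to a noise block so that the event bounds $\norm{Z_{j,\ell}U_{-j}}_2\le C_2'\sigma\sqrt{p}$ and $\sup_{\norm{X_{j''}}_2\le1}\norm{Z_{j,\ell}(\bigotimes_{j''\neq j}X_{j''})}_2\le C_2'\sigma\sqrt{pr}$ apply. Third, a ``quadratic improvement'': if $P,Q$ are rank-$(r_1\cdots r_J/r_j)$ orthogonal projections and $W$ has orthonormal columns lying in $\col(Q)$, then $I-W^{\top}PW=((I-P)W)^{\top}((I-P)W)\succeq0$ and $\norm{(I-P)W}_2=\norm{(I-P)Q}_2\le\norm{P-Q}_2$, hence $\norm{W^{\top}(P-Q)W}_2\le\norm{P-Q}_2^{2}$.

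Given these, I would argue as follows, writing $P=\hU^{(0)}_{-j,\ell}\hU_{-j,\ell}^{(0)\top}$ and $Q=U_{-j}U_{-j}^{\top}$. For $\zeta_{j,\ell,1}=(T_jU_{-j})(Z_{j,\ell}U_{-j})^{\top}$, so $\norm{\zeta_{j,\ell,1}}_2\le\norm{G_j}_2\norm{Z_{j,\ell}U_{-j}}_2\le C_3\kappa_0\lambda_{\min}\sigma\sqrt{p}$; for $\zeta_{j,\ell,3}=(Z_{j,\ell}U_{-j})(Z_{j,\ell}U_{-j})^{\top}$, $\norm{\zeta_{j,\ell,3}}_2=\norm{Z_{j,\ell}U_{-j}}_2^{2}\le C_3 p\sigma^2$; for $\zeta_{j,\ell,5}=U_jG_j\,[\,U_{-j}^{\top}(P-Q)U_{-j}\,]\,G_j^{\top}U_j^{\top}$, the quadratic improvement bounds the bracket by $\delta^2\lesssim p\sigma^2\lambda_{\min}^{-2}$, and multiplying by $\norm{G_j}_2^{2}\le\kappa_0^2\lambda_{\min}^2$ gives $\norm{\zeta_{j,\ell,5}}_2\le C_3\kappa_0^2 p\sigma^2$. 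For $\zeta_{j,\ell,2}=U_jG_j\,[\,U_{-j}^{\top}(P-Q)\,]\,Z_{j,\ell}^{\top}$, I would substitute the telescoping identity into $U_{-j}^{\top}(P-Q)$; in each summand the changed factor is $U_{j'}^{\top}(\hU^{(0)}_{j',\ell}\hU^{(0)\top}_{j',\ell}-U_{j'}U_{j'}^{\top})$ of norm $\le\delta$, which I pull out, and the remaining Kronecker factors (all orthonormal-column matrices, after splitting each rank-$\le 2r_{j''}$ block into two rank-$r_{j''}$ blocks) contract against $Z_{j,\ell}^{\top}$ at cost $C_2'\sigma\sqrt{pr}$, yielding $\norm{\zeta_{j,\ell,2}}_2\lesssim\norm{G_j}_2\,\delta\,\sigma\sqrt{pr}\lesssim\kappa_0 p\sqrt{r}\sigma^2$. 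For $\zeta_{j,\ell,4}=Z_{j,\ell}(P-Q)Z_{j,\ell}^{\top}$, write $P-Q=(I-P)Q-P(I-Q)$; the first piece equals $(Z_{j,\ell}(I-P)U_{-j})(Z_{j,\ell}U_{-j})^{\top}$, and since $(I-P)U_{-j}$ has norm $\le\delta$ and column space inside $\col(U_{-j})+\col(\hU^{(0)}_{-j,\ell})$, which lies in a Kronecker product of rank-$O(r)$ subspaces, the sup bound gives $\norm{Z_{j,\ell}(I-P)U_{-j}}_2\lesssim\delta\sigma\sqrt{pr}$ and hence this piece is $\lesssim\delta\sigma\sqrt{pr}\cdot\sigma\sqrt{p}\lesssim p^{3/2}\sqrt{r}\sigma^3\lambda_{\min}^{-1}$; the second piece $P(I-Q)=\hU^{(0)}_{-j,\ell}(\hU_{-j,\ell}^{(0)\top}(I-Q))$ with $\norm{\hU_{-j,\ell}^{(0)\top}(I-Q)}_2\le\delta$ is handled by the same device, pairing $Z_{j,\ell}\hU^{(0)}_{-j,\ell}$ against the sup bound and absorbing the small factor. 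Collecting the five estimates yields \eqref{eq:frakE_bound-1}.

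The main obstacle will be the two ``noise$\,\times\,$noise$\,\times\,$initialization-error'' terms $\zeta_{j,\ell,2}$ and $\zeta_{j,\ell,4}$: the naive bound $\norm{(P-Q)Z_{j,\ell}^{\top}}_2\le\delta\norm{Z_{j,\ell}}_2$ loses the full $\sqrt{p_1\cdots p_J/p_j}\asymp p^{(J-1)/2}$ operator norm of a noise unfolding and is hopelessly weak. The fix is precisely the interplay of the first two structural facts — the Kronecker/telescoping form of $P-Q$ localizes the perturbation to a single mode while keeping the surviving directions inside a Kronecker product of $O(r)$-dimensional subspaces, so the noise only interacts with those directions and is controlled by the $\sigma\sqrt{pr}$ tail bound rather than its full spectral norm. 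Likewise, for $\zeta_{j,\ell,5}$ the linear bound $\norm{U_{-j}^{\top}(P-Q)U_{-j}}_2\le\delta$ is not enough (it would leave a stray factor $\lambda_{\min}/(\sqrt{p}\sigma)\gg1$), and one genuinely needs the quadratic gain $\delta^2$ coming from $I-U_{-j}^{\top}PU_{-j}\succeq0$. A minor bookkeeping point is that $E(C_2')$ only controls $\norm{Z_{j,\ell}(\bigotimes X_{j''})}_2$ for $X_{j''}$ with $r_{j''}$ columns, so the rank-$\le 2r_{j''}$ matrices arising from $\col(U_{j'})+\col(\hU^{(0)}_{j',\ell})$ must first be split into two rank-$r_{j''}$ orthonormal blocks, costing only an absolute constant that is absorbed into $C_3$.
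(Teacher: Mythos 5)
Your plan is correct in substance and lands on the same five bounds, but for two of the five terms you take routes that differ recognizably from the paper's own proof, so a brief comparison is warranted.

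For $\zeta_{j,\ell,1}$ and $\zeta_{j,\ell,3}$ your argument is identical to the paper's. For $\zeta_{j,\ell,2}$ the paper telescopes $P-Q$ mode by mode and then aligns each local estimator with the truth via the SVD rotation $B_{j',\ell}A_{j',\ell}^{\top}$ so that the ``small'' residual factor $\hU^{(0)}_{j',\ell}B_{j',\ell}A_{j',\ell}^{\top}-U_{j'}$ automatically lives in a $p_{j'}\times r_{j'}$ matrix of norm $\lesssim\delta$, after which the uniform bound on $Z_{j,\ell}(X_1\otimes\cdots)$ applies directly; your device of splitting each rank-$\le 2r_{j''}$ block into two rank-$r_{j''}$ blocks achieves the same thing at the cost of a factor $2^{J-1}$ absorbed into the constant. (In fact, for $\zeta_{j,\ell,2}$ the splitting is unnecessary: after hitting the telescoping summand with $U_{-j}^{\top}$ on the left, each Kronecker factor of the result already has $r_{j''}$ rows, so each factor transposed is already in $\R^{p_{j''}\times r_{j''}}$ with operator norm at most $1$, which is exactly the form the event's sup bound handles.) For $\zeta_{j,\ell,5}$, the paper extracts the quadratic gain mode by mode — the single changed factor $U_{j'}^{\top}\hU^{(0)}_{j',\ell,\perp}\hU^{(0)\top}_{j',\ell,\perp}U_{j'}$ has norm $\le\delta^2$ — whereas you get the same gain globally by writing $U_{-j}^{\top}(P-Q)U_{-j}=U_{-j}^{\top}PU_{-j}-I$ and applying $\|I-W^{\top}PW\|_2=\|(I-P)W\|_2^2\le\|P-Q\|_2^2$; both yield $\kappa_0^2 p\sigma^2$. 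For $\zeta_{j,\ell,4}$ your decomposition is genuinely different: the paper reuses the telescoping plus $B_{j',\ell}A_{j',\ell}^{\top}$ factorization and applies the uniform noise bound twice, whereas you instead write $P-Q=(I-P)Q-P(I-Q)$ (note this equals $Q-P$, but the sign is immaterial for norms) and factor each piece through the orthonormal Kronecker basis of $\bigotimes_{j'\neq j}\bigl[\col(U_{j'})+\col(\hU^{(0)}_{j',\ell})\bigr]$. This reduces $2(J-1)$ telescoping summands to two clean pieces and is a pleasant simplification.

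One small gap, which you share with the paper's own write-up. The lemma states $\|\zeta_{j,\ell,4}\|_2\le C_3\,p^{3/2}\sqrt{r}\,\sigma^3\lambda_{\min}^{-1}$, but both the paper's proof and your argument as written only establish $p^{3/2}r\,\sigma^3\lambda_{\min}^{-1}$. You correctly get $p^{3/2}\sqrt{r}$ from your first piece because $\|Z_{j,\ell}(I-P)U_{-j}\|_2\lesssim\delta\sigma\sqrt{pr}$ is paired against $\|Z_{j,\ell}U_{-j}\|_2\le C_2'\sigma\sqrt{p}$; but for the second piece you pair $\|Z_{j,\ell}\hU^{(0)}_{-j,\ell}\|_2$ against the uniform bound, and since $\hU^{(0)}_{-j,\ell}$ depends on $Z_{j,\ell}$ the event only gives $\|Z_{j,\ell}\hU^{(0)}_{-j,\ell}\|_2\lesssim\sigma\sqrt{pr}$, yielding $\sigma\sqrt{pr}\cdot\delta\sigma\sqrt{pr}\asymp p^{3/2}r\,\sigma^3\lambda_{\min}^{-1}$. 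You describe this as ``the same device'' without flagging the asymmetry. To actually reach the stated $\sqrt{r}$ one must additionally note that
\[
\|Z_{j,\ell}\hU^{(0)}_{-j,\ell}\|_2\le\|Z_{j,\ell}U_{-j}\|_2\|U_{-j}^{\top}\hU^{(0)}_{-j,\ell}\|_2+\|Z_{j,\ell}(I-Q)\hU^{(0)}_{-j,\ell}\|_2\lesssim\sigma\sqrt{p}+\delta\sigma\sqrt{pr}\lesssim\sigma\sqrt{p},
\]
the last step using $\sqrt{pr}=o(\lambda_{\min}/\sigma)$ so that $\delta\sqrt{r}\lesssim1$. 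Since the extra factor of $\sqrt{r}$ is harmless for the downstream Theorem~2.1 bound, this is more an internal inconsistency of the paper than an error you introduced, but it is worth knowing the discrepancy is there and how to repair it.
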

By Lemma \ref{lem:2},  under the event $E(C_2')$, $\norm{\frakE_{j, \ell}}_2 \lesssim \kappa_0\lambda_{\min}\sigma\sqrt{p} < \lambda^2_{\min}/2$.  Applying Theorem 1 in \cite{xia2021normal} yields
\begin{equation}
	\begin{aligned}
		\hU_{j,\ell}\hU^{\top}_{j,\ell} - U_jU_j^{\top} &= U_j\Lambda_{j}^{-2}U_j^{\top}\frakE_{j, \ell}U_{j\perp}U_{j\perp}^{\top} + U_{j\perp}U_{j\perp}^{\top}\frakE_{j, \ell}U_j\Lambda_{j}^{-2}U_j^{\top} \\
		&\quad-U_j\Lambda_{j}^{-2}U_j^{\top}\frakE_{j,\ell}U_{j\perp}U_{j\perp}^{\top}\frakE_{j, \ell}U_j\Lambda_{j}^{-2}U_j^{\top} 
		-U_j\Lambda_{j}^{-2}U_j^{\top}\frakE_{j,\ell}U_j\Lambda_{j}^{-2}U_j^{\top}\frakE_{j, \ell}U_{j\perp}U_{j\perp}^{\top}\\
		&\quad-U_{j\perp}U_{j\perp}^{\top}\frakE_{j,\ell}U_j\Lambda_{j}^{-2}U_j^{\top}\frakE_{j, \ell}U_j\Lambda_{j}^{-2}U_j^{\top}+U_j\Lambda_{j}^{-4}U_j^{\top}\frakE_{j,\ell}U_{j\perp}U_{j\perp}^{\top}\frakE_{j,\ell}U_{j\perp}U_{j\perp}^{\top}\\
		&\quad+U_{j\perp}U_{j\perp}^{\top}\frakE_{j,\ell}U_{j\perp}U_{j\perp}^{\top}\frakE_{j,\ell}U_j\Lambda_{j}^{-4}U_j^{\top}+U_{j\perp}U_{j\perp}^{\top}\frakE_{j,\ell}U_j\Lambda_{j}^{-4}U_j^{\top}\frakE_{j,\ell}U_{j\perp}U_{j\perp}^{\top}+\frakR_{j, \ell},
	\end{aligned}
	\label{eq:rho_decomp_thm-1}
\end{equation}
where $\norm{\frakR_{j, \ell}}_2 \lesssim \kappa_0^3\sigma^3p^{3/2}/\lambda_{\min}^3$.
Then plugging \eqref{eq:frakE_decomp-1} into \eqref{eq:rho_decomp_thm-1} and using the fact that $U^{\top}_{j\perp}T_j=0$, we obtain
\begin{equation}
	\hU_{j,\ell}\hU^{\top}_{j,\ell} - U_jU_j^{\top}
	=\frakS_{j, \ell, 1}+\frakS_{j, \ell, 2}+\frakS_{j, \ell, 3},
	\label{eq:rho_decomp_plugin-1}
\end{equation}
where for $j\in [J]$ and $\ell \in [L]$,
\begin{equation}
	\label{eq:def-Sjlt12-1}
	\begin{aligned}
		\frakS_{j, \ell, 1}&:=U_j\Lambda_{j}^{-2}U_j^{\top}\zeta_{j, \ell,1}U_{j\perp}U_{j\perp}^{\top} +U_{j\perp}U_{j\perp}^{\top}\zeta^{\top}_{j, \ell,1}U_j\Lambda_{j}^{-2}U_j^{\top}\\
		&= U_j\Lambda_{j}^{-2}G_j U_{-j}^{\top}Z^{\top}_{j, \ell}U_{j\perp}U_{j\perp}^{\top} +  U_{j\perp}U_{j\perp}^{\top}Z_{j, \ell} U_{-j}G_j^{\top}\Lambda_{j}^{-2}U_j^{\top},\\
		\frakS_{j, \ell, 2}&:= U_j\Lambda_{j}^{-2}U_j^{\top}\zeta_{j, \ell, 2}U_{j\perp}U_{j\perp}^{\top}+U_{j\perp}U_{j\perp}^{\top}\zeta^{\top}_{j, \ell, 2}U_j\Lambda_{j}^{-2}U_j^{\top}\\
		&\quad+U_j\Lambda_{j}^{-2}U_j^{\top}\zeta_{j, \ell, 3}U_{j\perp}U_{j\perp}^{\top}+U_{j\perp}U_{j\perp}^{\top}\zeta_{j, \ell, 3}U_j\Lambda_{j}^{-2}U_j^{\top}\\
		&\quad-U_j\Lambda_{j}^{-2}U_j^{\top}\zeta_{j, \ell, 1}U_{j\perp}U_{j\perp}^{\top}\zeta^{\top}_{j, \ell, 1}U_j\Lambda_{j}^{-2}U_j^{\top}\\
		&\quad-U_j\Lambda_{j}^{-2}U_j^{\top}(\zeta_{j, \ell, 1}+\zeta^{\top}_{j, \ell, 1})U_j\Lambda_{j}^{-2}U_j^{\top}\zeta_{j,\ell, 1}U_{j\perp}U_{j\perp}^{\top}\\
		&\quad-U_{j\perp}U_{j\perp}^{\top}\zeta^{\top}_{j, \ell, 1}U_j\Lambda_{j}^{-2}U_j^{\top}(\zeta_{j, \ell, 1}+\zeta^{\top}_{j, \ell, 1})U_j\Lambda_{j}^{-2}U_j^{\top}\\
		&\quad +U_{j\perp}U_{j\perp}^{\top}\zeta^{\top}_{j, \ell, 1}U_j\Lambda_{j}^{-4}U_j^{\top}\zeta_{j, \ell, 1}U_{j\perp}U_{j\perp}^{\top},
	\end{aligned}
\end{equation}
and $\frakS_{j, \ell,3}$ is a remainder term determined by the above equations.

Under $E(C'_2)$, it holds that $\norm{Z_{j,\ell}U_{-j}}_2 \leq C'_2 \sigma\sqrt{p}$. Using the fact that $\norm{\Lambda_{j}^{-1}G_j}_2=1$ and $\norm{\Lambda_{j}^{-1}}_2 \leq \lambda_{\min}^{-1}$, we have
\begin{equation}
	\norm{\frakS_{j,\ell,1}}_2=\norm{U_j\Lambda_{j}^{-2}G_j U_{-j}^{\top}Z_{j,\ell}^{\top}U_{j\perp}U_{j\perp}^{\top} +  U_{j\perp}U_{j\perp}^{\top}Z_{j,\ell} U_{-j}G_j^{\top}\Lambda_{j}^{-2}U_j^{\top}}_2 \lesssim \lambda^{-1}_{\min}\sigma\sqrt{p}.
	\label{eq:bound-1-S1}
\end{equation} 
By \eqref{eq:frakE_bound-1}, it holds that
\begin{equation}
	\label{eq:bound-1-S23-1}
	\norm{\frakS_{j, \ell, 2}}_2 \leq C_3 \kappa_0^2pr^{1/2}\sigma^2\lambda_{\min}^{-2}, \quad \norm{\frakS_{j, \ell, 3}}_2 \leq C_3 \kappa_0^3p^{3/2}r^{1/2}\sigma^3\lambda_{\min}^{-3},
\end{equation}
for some absolute constant $C_3>0$. 
Then
\begin{equation}
	\begin{aligned}
		&\quad \frac{1}{L}\sum_{\ell=1}^L\hU_{j,\ell}\hU^{\top}_{j,\ell} - U_jU_j^{\top}\\ &=\overline{\frakS_{j, 1}}+ \overline{\frakS_{j, 2}} + \overline{\frakS_{j, 3}}\\
		&= U_j\Lambda_{j}^{-2}G_j U_{-j}^{\top}\overline{Z}^{\top}_{j}U_{j\perp}U_{j\perp}^{\top} +  U_{j\perp}U_{j\perp}^{\top}\overline{Z}_{j} U_{-j}G_j^{\top}\Lambda_{j}^{-2}U_j^{\top}+ \overline{\frakS_{j, 2}} + \overline{\frakS_{j, 3}},
	\end{aligned}
\end{equation}
where $\overline{Z}_j:=(1/L)\sum_{\ell=1}^L Z_{j, \ell}$ and $\overline{\frakS_{j,k}}:=(1/L)\sum_{\ell=1}^L \frakS_{j, \ell, k}$, $k \in \{1,2,3\}$. Note that \[	\norm{\overline{\frakS_{j,2}}}_2 \lesssim \kappa_0^2pr^{1/2}\sigma^2\lambda_{\min}^{-2}, \quad \norm{\overline{\frakS_{j,3}}}_2 \lesssim \kappa_0^3p^{3/2}r^{1/2}\sigma^3\lambda_{\min}^{-3}.\] 

Under $E(C_2')$, it holds that $\norm{\overline{Z}_{j}U_{-j}}_2 \lesssim \sigma\sqrt{p/L}$ and thus
\begin{equation}\label{eq:Sj1} \norm{\overline{\frakS_{j,1}}}_2=\norm{U_j\Lambda_{j}^{-2}G_j U_{-j}^{\top}\overline{Z}^{\top}_jU_{j\perp}U_{j\perp}^{\top} +  U_{j\perp}U_{j\perp}^{\top}\overline{Z}_{j} U_{-j}G_j^{\top}\Lambda_{j}^{-2}U_j^{\top}}_2 \lesssim \lambda^{-1}_{\min}\sigma\sqrt{p/L}.
\end{equation}

Therefore, we obtain that
\begin{equation}
	\label{eq:rate-hU2-1}
	\frac{1}{L}\sum_{\ell=1}^L\hU_{j,\ell}\hU^{\top}_{j,\ell} - U_jU_j^{\top}=\overline{\frakS_{j, 1}}+ \overline{\frakS_{j, 2}} + \overline{\frakS_{j, 3}},
\end{equation}
with
\begin{equation}
	\label{eq:bound-2-S123-1}
	\norm{\overline{\frakS_{j,1}}}_2 \lesssim \lambda^{-1}_{\min}\sigma\sqrt{p/L}, \quad
	\norm{\overline{\frakS_{j,2}}}_2 \lesssim \kappa_0^2pr^{1/2}\sigma^2\lambda_{\min}^{-2}, \quad \norm{\overline{\frakS_{j,3}}}_2 \lesssim \kappa_0^3p^{3/2}r^{1/2}\sigma^3\lambda_{\min}^{-3},
\end{equation}
with probability at least $1-C'_1Le^{-c'_1p}$ for some absolute constant $C_1', c_1'>0$.

Since the columns of $\hU_j$ are the first $r_j$ eigenvectors of $\frac{1}{L}\sum_{\ell=1}^L\hU_{j,\ell}\hU^{\top}_{j,\ell}$, similar to  \eqref{eq:rho_decomp_thm-1},
\begin{equation}
	\label{eq:decomp-after-SVD-thm-1}
	\begin{aligned}
		\hU_j\hU_j^{\top}-U_jU_j^{\top}&=U_jU_j^{\top}\left(\overline{\frakS_{j, 1}}+ \overline{\frakS_{j, 2}}\right)U_{j\perp}U_{j\perp}^{\top}+ 
		U_{j\perp}U_{j\perp}^{\top}\left(\overline{\frakS_{j, 1}}+ \overline{\frakS_{j, 2}}\right)U_jU_j^{\top} \\ 
		&\quad -U_jU_j^{\top}\overline{\frakS_{j, 1}}U_{j\perp}U_{j\perp}^{\top} \overline{\frakS_{j, 1}}U_jU_j^{\top} + U_{j\perp}U_{j\perp}^{\top}\overline{\frakS_{j, 1}}U_jU_j^{\top}\overline{\frakS_{j, 1}}U_{j\perp}U_{j\perp}^{\top}
		+ \widetilde{\frakR}_j,
	\end{aligned}
\end{equation}
where $\norm{\widetilde{\frakR}_j}_2 \lesssim \kappa_0^3p^{3/2}r^{1/2}\sigma^3\lambda_{\min}^{-3}$. By \eqref{eq:bound-2-S123-1}, we obtain that
\begin{equation}
	\label{eq:S1-terms-rate}
	\begin{aligned}
			&\norm{U_jU_j^{\top}\overline{\frakS_{j, 1}}U_{j\perp}U_{j\perp}^{\top}+ 
			U_{j\perp}U_{j\perp}^{\top}\overline{\frakS_{j, 1}}U_jU_j^{\top}}_2 \lesssim \lambda^{-1}_{\min}\sigma\sqrt{p/L},\\
			&\norm{U_jU_j^{\top}\overline{\frakS_{j, 2}}U_{j\perp}U_{j\perp}^{\top}+ 
			U_{j\perp}U_{j\perp}^{\top}\overline{\frakS_{j, 2}}U_jU_j^{\top}}_2 \lesssim \kappa_0^2pr^{1/2}\sigma^2\lambda_{\min}^{-2},\\
			& \norm{U_jU_j^{\top}\overline{\frakS_{j, 1}}U_{j\perp}U_{j\perp}^{\top} \overline{\frakS_{j, 1}}U_jU_j^{\top} - U_{j\perp}U_{j\perp}^{\top}\overline{\frakS_{j, 1}}U_jU_j^{\top}\overline{\frakS_{j, 1}}U_{j\perp}U_{j\perp}^{\top}}_2 \lesssim \lambda^{-2}_{\min}\sigma^2p/L.
	\end{aligned}
\end{equation}
Therefore, with probability at least $1-C_1'Le^{-c'_1p}$, 
\[\norm{\hU_j\hU_j^{\top}-U_jU_j^{\top}}_{\rm F} \leq \sqrt{2r}\norm{\hU_j\hU_j^{\top}-U_jU_j^{\top}}_2 \lesssim \lambda^{-1}_{\min}\sigma\sqrt{pr/L}+ \kappa_0^2pr\sigma^2\lambda_{\min}^{-2}.\]
Since $L \lesssim p^{c_3}$, there exist constants $\widetilde C_1$, $\widetilde c_1$ such that $1-C_1'Le^{-c'_1p} \geq 1-\widetilde C_1e^{-\widetilde c_1p}$, 
which proves (3).
\end{proof}

\begin{proof}[Proof of Theorem \ref{thm:second-lower}]

By \eqref{eq:decomp-after-SVD-thm-1} and \eqref{eq:S1-terms-rate}, it suffices to show that, for some constants $C_1$, $c_1$, $C_2$, there exist $\calT^* \in \mathcal{F}_{p, 1, \lambda_{\min}}$ and $\big(\hU_{1, \ell}^{(0)}, \hU_{2, \ell}^{(0)}, \hU_{3, \ell}^{(0)}\big)\in\mathfrak{U}_{C_1, c_1, C_2}$ for all $\ell=1,2,\dots, L$, such that 
\[\norm{U_jU_j^{\top}\overline{\frakS_{j, 2}}U_{j\perp}U_{j\perp}^{\top}+ 
	U_{j\perp}U_{j\perp}^{\top}\overline{\frakS_{j, 2}}U_jU_j^{\top}}_2 \gtrsim p\sigma^2\lambda_{\min}^{-2},\]
with probability at least $1-CLe^{-cp}-Cp^{-L}$ for some constants $c, C$.
	
By definition, we have
\begin{equation}
\label{eq:S2-terms}
\begin{aligned}
	&\quad U_jU_j^{\top}\overline{\frakS_{j, 2}}U_{j\perp}U_{j\perp}^{\top}+ 
	U_{j\perp}U_{j\perp}^{\top}\overline{\frakS_{j, 2}}U_jU_j^{\top}\\
	&=\frac{1}{L}\sum_{\ell=1}^{L}\bigg[U_j\Lambda_{j}^{-2}U_j^{\top}\zeta_{j, \ell, 2}U_{j\perp}U_{j\perp}^{\top}+U_{j\perp}U_{j\perp}^{\top}\zeta_{j, \ell, 2}^{\top}U_j\Lambda_{j}^{-2}U_j^{\top}\\
	&\quad +U_j\Lambda_{j}^{-2}U_j^{\top}\zeta_{j, \ell, 3}U_{j\perp}U_{j\perp}^{\top}+U_{j\perp}U_{j\perp}^{\top}\zeta_{j, \ell, 3}^{\top}U_j\Lambda_{j}^{-2}U_j^{\top}\\
	&\quad -U_j\Lambda_{j}^{-2}U_j^{\top}(\zeta_{j, \ell, 1}+\zeta^{\top}_{j, \ell, 1})U_j\Lambda_{j}^{-2}U_j^{\top}\zeta_{j,\ell, 1}U_{j\perp}U_{j\perp}^{\top}\\
	&\quad-U_{j\perp}U_{j\perp}^{\top}\zeta^{\top}_{j, \ell, 1}U_j\Lambda_{j}^{-2}U_j^{\top}(\zeta_{j, \ell, 1}+\zeta^{\top}_{j, \ell, 1})U_j\Lambda_{j}^{-2}U_j^{\top}\bigg]\\
	&=\frac{1}{L}\sum_{\ell=1}^{L}[\mathfrak{A}_{j, \ell, 1}+\mathfrak{A}_{j, \ell, 2}+\mathfrak{A}_{j, \ell, 3}],
\end{aligned}
\end{equation}
where 
\begin{align*}
	\mathfrak{A}_{j, \ell, 1}&:=U_j\Lambda_{j}^{-2}G_jU_{-j}^{\top}\left[\hU^{(0)}_{-j,\ell}\hU_{-j,\ell}^{(0)\top}-U_{-j}U_{-j}^{\top}\right]Z_{j,\ell}^{\top}U_{j\perp}U_{j\perp}^{\top}\\
	&\quad+U_{j\perp}U_{j\perp}^{\top}{Z_{j, \ell}\left[\hU^{(0)}_{-j,\ell}\hU_{-j,\ell}^{(0)\top}-U_{-j}U_{-j}^{\top}\right]}G^{\top}_j\Lambda_{j}^{-2}U_j^{\top},\\
	\mathfrak{A}_{j, \ell, 2}&:=U_j\Lambda_{j}^{-2}U_j^{\top}Z_{j, \ell}U_{-j}U_{-j}^{\top}Z_{j, \ell}^{\top}U_{j\perp}U_{j\perp}^{\top}+U_{j\perp}U_{j\perp}^{\top}{Z_{j, \ell}U_{-j}U_{-j}^{\top}Z_{j, \ell}^{\top}}U_j\Lambda_{j}^{-2}U_j^{\top}\\
	\mathfrak{A}_{j, \ell, 3}&:=-U_j\Lambda_{j}^{-2}U_j^{\top}(\zeta_{j, \ell, 1}+\zeta^{\top}_{j, \ell, 1})U_j\Lambda_{j}^{-2}U_j^{\top}\zeta_{j,\ell, 1}U_{j\perp}U_{j\perp}^{\top}\\
	&\quad-U_{j\perp}U_{j\perp}^{\top}\zeta^{\top}_{j, \ell, 1}U_j\Lambda_{j}^{-2}U_j^{\top}(\zeta_{j, \ell, 1}+\zeta^{\top}_{j, \ell, 1})U_j\Lambda_{j}^{-2}U_j^{\top}.
\end{align*}

Now consider the following setting: Let $J=3$, $r_1=r_2=r_3=1$, $p_1=p_2=p_3=p$, $\calG=\lambda_{\min}$, and $U_1=U_2=U_3=(1, \bm{0}^{\top}_{p-1, 1})^{\top}$, where $\bm{0}_{n_1, n_2}$ denotes a zero matrix of size $n_1 \times n_2$. For machine $\ell$, let the initial $\hU^{(0)}_{1,\ell}=U_1$, $\hU^{(0)}_{3,\ell}=U_3$, and $\hU^{(0)}_{2,\ell}=\left(Q_{1, \ell}, \frac{1}{\lambda_{\min}Q_{1, \ell}}\bm{z}_{\ell}^{\top} \right)^{\top}$, where $\bm{z}_{\ell}^{\top}:=(Z_{p, p+1, \ell}, Z_{p, 2p+1, \ell}, \dots, Z_{p, p(p-1)+1, \ell})$, $Q_{1,\ell}:=\sqrt{\frac{1+\sqrt{1-4\norm{\bm{z}_{\ell}}_2^2\lambda_{\min}^{-2}}}{2}}$, and $Z_{i_1, i_2,\ell}$ denotes the $(i_1, i_2)$-th entry in $Z_{1, \ell} \in \R^{p \times p^2}$. Using inequality that, for all $0<t<1$ and $Z_k \sim_{i.i.d.} \calN(0, 1)$, 
\begin{equation}\label{eq:chisq-concentration}
	\Prob\left[\abs{\frac{1}{n}\sum_{k=1}^{n}Z_{k}^2-1}\geq t\right] \leq 2e^{-nt^2/8},
\end{equation}
we have, with probability at least $1-2e^{-p/32}$,
\begin{equation}\label{eq:sum-Z2}
\abs{\sum_{k=1}^{p-1}Z_{p, kp+1, \ell}^2-(p-1)\sigma^2}\geq \frac{1}{2}(p-1)\sigma^2,
\end{equation}
which ensures that $1-4\norm{\bm{z}_{\ell}}_2^2\lambda_{\min}^{-2}>1-6p\sigma^2\lambda_{\min}^{-2}>\frac{1}{2}$, using the assumption that $\sqrt{p}\sigma/\lambda_{\min}=o(1)$. Then the definition of $Q_{1,\ell}$ forces that $\hU^{(0)\top}_{2,\ell}\hU^{(0)}_{2,\ell}=Q_{1,\ell}^2+\frac{\norm{\bm{z}_{\ell}}_2^2}{\lambda_{\min}^2Q^2_{1,\ell}}=1$. Now we show that this initial satisfies the assumption in Theorem 2.1. Note that
\begin{align*}
	\norm{\hU^{(0)}_{2}\hU^{(0)\top}_{2}-U_2U_2^{\top}}_2 &=\norm{\begin{pmatrix}
			Q_{1,\ell}^2-1 & \bm{z}_{\ell}^{\top} /\lambda_{\min}\\\
			\bm{z}_{\ell} /\lambda_{\min}\ & \bm{z}_{\ell}\bm{z}_{\ell}^{\top} /(Q_{1, \ell}^2\lambda^2_{\min})
	\end{pmatrix}}_2 \\
&\leq \norm{\begin{pmatrix}
		Q_{1,\ell}^2-1 & \bm{z}_{\ell}^{\top} /\lambda_{\min}\\\
		\bm{z}_{\ell} /\lambda_{\min}\ & \bm{z}_{\ell}\bm{z}_{\ell}^{\top} /(Q_{1, \ell}^2\lambda^2_{\min})
		\end{pmatrix}}_{\rm F} \\
&=\sqrt{(Q_{1,\ell}^2-1)^2+2\norm{\bm{z}_{\ell}}_2^2\lambda_{\min}^{-2}+\norm{\bm{z}_{\ell}}_2^4\lambda_{\min}^{-4}Q_{1,\ell}^{-4}}\\
& \leq \sqrt{21}\sqrt{p}\sigma/\lambda_{\min},
\end{align*}
with probability at least $1-2e^{-p/32}$, where we use \eqref{eq:sum-Z2} and that \begin{equation}\label{eq:V2-1-bound}
	\abs{Q_{1,\ell}^{2}-1}=\frac{2\norm{\bm{z}_{\ell}}_2^2\lambda_{\min}^{-2}}{1+\sqrt{1-4\norm{\bm{z}_{\ell}}_2^2\lambda_{\min}^{-2}}} \leq 3 p\sigma^2\lambda_{\min}^{-2}.
\end{equation}

Now we calculate the terms in \eqref{eq:S2-terms} for $j=1$ under the above setting. Note that, $U_{-1}=U_2\otimes U_3=(1, \bm{0}_{p^2-1, 1}^{\top})^{\top}$, $U_{1\perp}U_{1\perp}^{\top}=\begin{pmatrix}0& \bm{0}^{\top}_{p-1, 1}\\\bm{0}_{p-1, 1}&I_{p-1}\end{pmatrix}$, and $G_1=\Lambda_1=\lambda_{\min}$ under this specific setting. Moreover, 
\begin{align*}
	\hU^{(0)}_{-1,\ell}\hU_{-1,\ell}^{(0)\top}-U_{-1}U_{-1}^{\top}&=\left(\hU^{(0)}_{2}\hU^{(0)\top}_{2}-U_2U_2^{\top}\right)\otimes(U_3U_3^{\top})\\
	&= \begin{pmatrix}
	Q_{1,\ell}^2-1 & \bm{z}_{\ell}^{\top} /\lambda_{\min}\\\
	\bm{z}_{\ell} /\lambda_{\min}\ & \bm{z}_{\ell}\bm{z}_{\ell}^{\top} /(Q_{1, \ell}^2\lambda^2_{\min})
	\end{pmatrix}\otimes \begin{pmatrix}
	1 & \bm{0}_{p-1, 1}^{\top}\\
	\bm{0}_{p-1, 1} & \bm{0}_{p-1, p-1}
	\end{pmatrix}
\end{align*}

Therefore
\begin{align*}
	&\quad U_1\Lambda_{1}^{-2}G_1U_{-1}^{\top}\left[\hU^{(0)}_{-1,\ell}\hU_{-1,\ell}^{(0)\top}-U_{-1}U_{-1}^{\top}\right]Z_{1,\ell}^{\top}U_{1\perp}U_{1\perp}^{\top}\\
	&= \lambda_{\min}^{-1}\begin{pmatrix}
		1 & \bm{0}_{1, p^2-1}\\
		\bm{0}_{p-1, 1} & \bm{0}_{p-1, p^2-1}
	\end{pmatrix} \left[\begin{pmatrix}
	Q_{1,\ell}^2-1 & \bm{z}_{\ell}^{\top} /\lambda_{\min}\\\
	\bm{z}_{\ell} /\lambda_{\min}\ & \bm{z}_{\ell}\bm{z}_{\ell}^{\top} /(Q_{1, \ell}^2\lambda^2_{\min})
	\end{pmatrix}\otimes \begin{pmatrix}
		1 & \bm{0}_{p-1, 1}^{\top}\\
		\bm{0}_{p-1, 1} & \bm{0}_{p-1, p-1}
	\end{pmatrix}\right]Z_{1,\ell}^{\top}\begin{pmatrix}0& \bm{0}^{\top}_{p-1, 1}\\\bm{0}_{p-1, 1}&I_{p-1}\end{pmatrix}\\
	&= \lambda_{\min}^{-1}\begin{pmatrix}
		0& \widetilde{\bm{z}}_{\ell}^{\top}\\
		\bm{0}_{p-1, 1} & \bm{0}_{p-1, p-1}
		\end{pmatrix},
\end{align*}
where the $i$-th entry of $\widetilde{\bm{z}}_{\ell} \in \R^{p-1}$ is $(Q_{1,\ell}^2-1)Z_{i+1, 1, \ell}+\lambda_{\min}^{-1}\sum_{k=1}^{p-1}Z_{i+1,kp+1,\ell}Z_{p,kp+1,\ell}$.
In particular, the $(p-1)$-th entry of $\widetilde{\bm{z}}_{\ell}$ is $(Q_{1,\ell}^2-1)Z_{p, 1, \ell}+\lambda_{\min}^{-1}\sum_{k=1}^{p-1}Z^2_{p,kp+1,\ell}$.
Hence,
\begin{equation}\label{eq:eigenvalue}
	\begin{aligned}
			&\quad \frac{1}{L}\sum_{\ell=1}^{L}\mathfrak{A}_{j, \ell, 1}=\frac{1}{L} \lambda_{\min}^{-1}\begin{pmatrix}
			0& \sum_{\ell=1}^{L}\widetilde{\bm{z}}_{\ell}^{\top}\\
			\sum_{\ell=1}^{L}\widetilde{\bm{z}}_{\ell} & \bm{0}_{p-1, p-1}
		\end{pmatrix}\\
		&= \lambda_{\min}^{-1}\norm{ \frac{1}{L}\sum_{\ell=1}^{L}\widetilde{\bm{z}}_{\ell}}_2\begin{pmatrix}\frac{\sqrt{2}}{2}\\\frac{\sqrt{2}}{2} \frac{ \frac{1}{L}\sum_{\ell=1}^{L}\widetilde{\bm{z}}_{\ell}}{\norm{ \frac{1}{L}\sum_{\ell=1}^{L}\widetilde{\bm{z}}_{\ell}}_2}\end{pmatrix}\begin{pmatrix}\frac{\sqrt{2}}{2}\\\frac{\sqrt{2}}{2} \frac{ \frac{1}{L}\sum_{\ell=1}^{L}\widetilde{\bm{z}}_{\ell}}{\norm{ \frac{1}{L}\sum_{\ell=1}^{L}\widetilde{\bm{z}}_{\ell}}_2}\end{pmatrix}^{\top}\\
		&\quad- \lambda_{\min}^{-1}\norm{ \frac{1}{L}\sum_{\ell=1}^{L}\widetilde{\bm{z}}_{\ell}}_2\begin{pmatrix}\frac{\sqrt{2}}{2}\\-\frac{\sqrt{2}}{2} \frac{ \frac{1}{L}\sum_{\ell=1}^{L}\widetilde{\bm{z}}_{\ell}}{\norm{ \frac{1}{L}\sum_{\ell=1}^{L}\widetilde{\bm{z}}_{\ell}}_2}\end{pmatrix}\begin{pmatrix}\frac{\sqrt{2}}{2}\\-\frac{\sqrt{2}}{2} \frac{ \frac{1}{L}\sum_{\ell=1}^{L}\widetilde{\bm{z}}_{\ell}}{\norm{ \frac{1}{L}\sum_{\ell=1}^{L}\widetilde{\bm{z}}_{\ell}}_2}\end{pmatrix}^{\top},
	\end{aligned}
\end{equation}
which yields that 
\begin{align}
	&\quad\norm{ \frac{1}{L}\sum_{\ell=1}^{L}\mathfrak{A}_{j, \ell, 1}}_2= \lambda_{\min}^{-1}\norm{ \frac{1}{L}\sum_{\ell=1}^{L}\widetilde{\bm{z}}_{\ell}}_2 \geq  \frac{1}{L} \lambda_{\min}^{-2}\sum_{\ell=1}^{L}\sum_{k=1}^{p-1}Z^2_{p,kp+1,\ell}-\frac{1}{L}\lambda_{\min}^{-1}\abs{\sum_{\ell=1}^{L}(Q_{1, \ell}^2-1)Z_{p, 1, \ell}}.
\end{align}
By \eqref{eq:sum-Z2} and \eqref{eq:V2-1-bound}, with probability at least $1-2Le^{-p/32}$, $\frac{1}{2}(p-1)\sigma^2<\abs{\sum_{k=1}^{p-1}Z_{p, kp+1, \ell}^2}<\frac{3}{2}(p-1)\sigma^2$ and $\abs{Q_{1,\ell}^2-1} \leq 3p\sigma^2\lambda_{\min}^{-2}$ for all $\ell$, which further leads to
\[ \frac{1}{L} \lambda_{\min}^{-2}\sum_{\ell=1}^{L}\sum_{k=1}^{p-1}Z^2_{p,kp+1,\ell} \geq \frac{1}{2}(p-1)\sigma^2\lambda_{\min}^{-2}.\]
Also, since $\abs{Z_{p, 1, \ell}}$ is $\sigma^2$-sub-Gaussian, and $\E[\abs{Z_{p, 1, \ell}}]=\sigma\sqrt{\frac{2}{\pi}}$,
we have
\[\Prob\left(\abs{\frac{Z_{p, 1, \ell}}{\sigma}-\sqrt{\frac{2}{\pi}}} > t \right)\leq 2e^{-t^2/2}.\]
Hence, with probability at least $1-2Le^{-p/2}$,
\[\frac{1}{L}\sum_{\ell=1}^{L}\abs{Z_{p, 1, \ell}} \leq 2\sigma\sqrt{p}.\]
Therefore,
\[\frac{1}{L}\lambda_{\min}^{-1}\abs{\sum_{\ell=1}^{L}(Q_{1, \ell}^2-1)Z_{p, 1, \ell}} \leq 6p\sigma^2\lambda_{\min}^{-2}(\sigma\lambda_{\min}^{-1}\sqrt{p})=o(p\sigma^2\lambda_{\min}^{-2}),\] 
and thus
\begin{align}\label{eq:frakA-1-bound}
	&\quad\norm{ \frac{1}{L}\sum_{\ell=1}^{L}\mathfrak{A}_{j, \ell, 1}}_2 \gtrsim p\sigma^2\lambda_{\min}^{-2},
\end{align}
with probability at least $1-4Le^{-p/32}$.

For $\mathfrak{A}_{j, \ell, 2}$ and $\mathfrak{A}_{j, \ell, 3}$, it is straightforward to calculate that
\[\mathfrak{A}_{j, \ell, 2}+\mathfrak{A}_{j, \ell, 3}=-\lambda_{\min}^{-2}\begin{pmatrix}
	0 & Z_{1, 1, \ell}Z_{2, 1, \ell} & \cdots & Z_{1, 1, \ell}Z_{p, 1, \ell}\\
	Z_{1, 1, \ell}Z_{2, 1, \ell} &&&\\
	\vdots &&\bm{0}_{p-1,p-1}& \\\
	Z_{1,1,\ell}Z_{p,1,\ell}&&&
\end{pmatrix}.\]
Similar to  \eqref{eq:eigenvalue}, it holds that 
\[\norm{ \frac{1}{L}\sum_{\ell=1}^{L}(\mathfrak{A}_{j, \ell, 2}+\mathfrak{A}_{j, \ell, 3})}_2=\lambda_{\min}^{-2} \sqrt{\sum_{k=2}^{p}\left(\frac{1}{L}\sum_{\ell=1}^{L}Z_{1,1,\ell}Z_{k,1,\ell}\right)^2}.
\]
Since $\norm{Z_{1,1,\ell}Z_{k,1,\ell}}_{\psi_1} \leq \norm{Z_{1,1,\ell}}_{\psi_2}\norm{Z_{k,1,\ell}}_{\psi_2} \leq C_{\sigma}\sigma^2$ for some constant $C_{\sigma}$, applying Bernstein's inequality leads to 
\begin{equation}\label{eq:z1z2-tail}
	\Prob\left(\abs{\sum_{\ell=1}^{L}Z_{1,1,\ell}Z_{k,1,\ell}} \geq CL\sigma^2\log p\right) \leq 2p^{-2L},
\end{equation}
for some constant $C>0$. Therefore,
\begin{equation}\label{eq:frakA-23-bound}
\norm{  \frac{1}{L}\sum_{\ell=1}^{L}(\mathfrak{A}_{j, \ell, 2}+\mathfrak{A}_{j, \ell, 3})}_2 \leq C\sigma^2\lambda_{\min}^{-2}\sqrt{p}\log p \ll p\sigma^2\lambda_{\min}^{-2}, 
\end{equation}
with probability at least $1-2p^{-2L+1} \geq 1-2p^{-L}$.

Combining \eqref{eq:S2-terms}, \eqref{eq:frakA-1-bound}, and \eqref{eq:frakA-23-bound}, we have that, with probability at least $1-4Le^{-p/32}-2p^{-L}$, 
\[\norm{U_jU_j^{\top}\overline{\frakS_{j, 2}}U_{j\perp}U_{j\perp}^{\top}+ 
	U_{j\perp}U_{j\perp}^{\top}\overline{\frakS_{j, 2}}U_jU_j^{\top}}_2 \geq \norm{  \frac{1}{L}\sum_{\ell=1}^{L}\mathfrak{A}_{j, \ell, 1}}_2-\norm{  \frac{1}{L}\sum_{\ell=1}^{L}(\mathfrak{A}_{j, \ell, 2}+\mathfrak{A}_{j, \ell, 3})}_2 \gtrsim p\lambda_{\min}^{-2}\sigma^2.\]
\end{proof}

\subsection{Proof of the Results for the Heterogeneous Setting}\label{sec:supp_proof_hetero}
\begin{proof}[Proof of Theorem 3.1]  Similar to the proof of Theorem 2.1, define $T^*_{j, \ell}=\calM_j(\calT^*_{\ell})$, and
\[[U \, V]_{-j, \ell} = [U_1  \, V_{1, \ell}] \otimes [U_2  \, V_{2, \ell}] \otimes \cdots \otimes [U_{j-1}  \, V_{j-1, \ell}] \otimes [U_{j+1}  \, V_{j+1, \ell}] \otimes \cdots \otimes [U_J  \, V_{J, \ell}].\] 

Recall that $\Lambda_{j, \ell}$ denotes the singular value matrix of $\calM_j(\calG_{\ell})=:G_{j,\ell}$. By (8), $G_{j,\ell}G_{j,\ell}^{\top}=\Lambda_{j, \ell}^2$. Let $\Lambda_{j,U}$ denote the top-left $r_{j,U} \times r_{j,U}$ diagonal block of $\Lambda_{j,\ell}$ and  $\Lambda_{j,V_{\ell}}$ be the remaining block, i.e., $\Lambda_{j,\ell}=\begin{pmatrix}
	\Lambda_{j,U} & 0\\0& 	\Lambda_{j,V_{\ell}}
\end{pmatrix}$. By (8) and \eqref{eq:decomp_before_SVD-1}, 
the columns of $\hU_{j,\ell}$ are the first $r_{j,U}$ eigenvectors  of 
\begin{equation}
	U_j\Lambda_{j, U}^2U_j^{\top} +  V_{j,\ell}\Lambda_{j, V_{\ell}}^2V_{j,\ell}^{\top} + \frakE_{j, \ell},
	\label{eq:decomp_before_SVD_hetero}
\end{equation}
where $ \frakE_{j, \ell}$ is a remainder term satisfying
\begin{equation}
	\sup_{j,\ell}\norm{\frakE_{j, \ell}}_2\leq C'_2\lambda_{\min}\kappa_0\sigma\sqrt{p},
	\label{eq:frakE_1_hetero}
\end{equation} 
\begin{equation}
	\sup_{j,\ell}\norm{\frakE_{j, \ell}-\zeta_{j,\ell,1}+\zeta^{\top}_{j,\ell,1}}_2\leq C'_2\kappa_0^2p\sigma^2\sqrt{r},
	\label{eq:frakE_2_hetero}
\end{equation} 
for
\[\zeta_{j,\ell,1}=T^*_{j,\ell} [U \, V]_{-j, \ell} [U \, V]_{-j, \ell} ^{\top}Z^{\top}_{j,\ell}=[U_j \, V_{j,\ell}]G_{j, \ell}[U \, V]_{-j, \ell}^{\top}Z_{j,\ell}^{\top},\] and some absolute constant $C'_2$, with probability at least $1-C_1'Le^{-c_1'p}$.  

Let the $(\bu_{1, j, \ell}, \dots,\bu_{p_j, j, \ell})$ be the eigenvectors of $U_j\Lambda_{j, U}^2U_j^{\top} +  V_{j,\ell}\Lambda_{j, V_{\ell}}^2V_{j,\ell}^{\top}$ and the corresponding eigenvalues be $\lambda_{1,j,\ell}^2>\dots>\lambda_{p_j,j,\ell}^2$. For $k=1,\dots, r_{j, U}$,  let $O_{k,j,\ell}=\sum_{k'=r_{j, U}+1}^{p_j}(\lambda^2_{k,j,\ell}-\lambda^2_{k',j,\ell})^{-1}\bu_{k'}\bu_{k'}^{\top}$, and define
\[
f_{j,\ell}:\R^{p_j \times r_{j, U}} \to \R^{p_j \times r_{j, U}}, (\bm w_1,\dots,\bm w_{r_{j, U}})\to(O_{1,j,\ell}\bm w_1,\dots, O_{r_{j, U},j,\ell}\bm w_{r_{j,U}}).
\] 
Since, with probability at least $1-C_1'Le^{-c_1'p}$, 
\[\sup_{j, \ell}\norm{\frakE_{j, \ell}}_2 \lesssim \kappa_0\lambda_{\min}\sigma\sqrt{p} \ll \lambda_{\min}\Delta \leq \lambda_{r_{j, U},j,\ell}^2-\lambda_{r_{j, U}+1,j,\ell}^2,\] (using $(\Delta/\sigma) \gg \sqrt{p}$ and $\kappa_0=O(1)$), we can apply Lemma 2 in \cite{fan2019distributed}  and obtain that
\begin{equation}\label{eq:decomp-hetero}
\hU_{j, \ell}\hU_{j, \ell}^{\top}=U_jU_j^{\top}+f_{j,\ell}(\frakE_{j, \ell}U_j)U_j^{\top}+U_jf_{j, \ell}(\frakE_{j, \ell}U_j)^{\top}+\frakR_{j, \ell},
\end{equation}
where the remainder term satisfies 
\begin{equation}\label{eq:second-1-hetero}
	\sup_{j, \ell}\norm{\frakR_{j, \ell}}_{\rm F} \lesssim \frac{\sqrt{r_{j, U}}\norm{\frakE_{j, \ell}}_2^2}{\Delta^2\lambda_{\min}^2}\lesssim \frac{\sqrt{r}p\kappa_0^2\sigma^2}{\Delta^2}.\
\end{equation}

By the linearity of $f_{j,\ell}$, we further have
\[f_{j,\ell}(\frakE_{j, \ell}U_j)=f_{j,\ell}\big[\zeta_{j, \ell, 1}U_j\big]+f_{j,\ell}\big[\zeta_{j, \ell, 1}^{\top}U_j\big]+f_{j,\ell}\big[(\frakE_{j, \ell}-\zeta_{j, \ell, 1}-\zeta_{j, \ell, 1}^{\top})U_j\big].\]
Note that $\norm{f_{j, \ell}(X)}_{\rm F} \leq \frac{1}{\inf_{j, \ell} (\lambda^2_{r_{j, U}, j, \ell}-\lambda^2_{r_{j, U}+1, j, \ell})} \norm{X}_{\rm F}$ and $\lambda^2_{r_{U}, j, \ell}-\lambda^2_{r_{U}+1, j, \ell} \geq \Delta^2$. Hence
\begin{equation}\label{eq:second-2-hetero}
	\sup_{j, \ell}\norm{f_{j,\ell}\big[(\frakE_{j, \ell}-\zeta_{j, \ell, 1}-\zeta_{j, \ell, 1}^{\top})U_j\big]}_{\rm F}\leq \Delta^{-2}\sup_{j, \ell} \norm{(\frakE_{j, \ell}-\zeta_{j, \ell, 1}-\zeta_{j, \ell, 1}^{\top})U_j}_{\rm F} \lesssim \kappa_0^2pr\sigma^2\Delta^{-2}. 
\end{equation}

Now we bound $\frac{1}{L}\sum_{\ell=1}^{L}f_{j,\ell}\big[\zeta_{j, \ell, 1}U_j\big]$ and $\frac{1}{L}\sum_{\ell=1}^{L}f_{j,\ell}\big[\zeta_{j, \ell, 1}^{\top}U_j\big]$.  Since $\ve(G_{j, \ell}[U\, V]_{-j, \ell}^{\top}Z_{j,\ell}^{\top}U_j)=(U_j^{\top} \otimes G_{j, \ell}[U \, V]_{-j, \ell}^{\top})\ve(Z_{j,\ell}^{\top})$, we have that \[\ve(G_{j, \ell}[U \, V]_{-j, \ell}^{\top}Z_{j,\ell}^{\top}U_j) \sim \calN(0, \sigma^2(I_{r_{j, U}} \otimes \Lambda_{j, \ell}^2)),\]
which shows that $G_{j, \ell}[U \, V]_{-j, \ell}^{\top}Z_{j,\ell}^{\top}U_j$ has independent normal entries with variance at most $\lambda_{\max}^2\sigma^2$. Let 
$G_{j, \ell}[U \, V]_{-j, \ell}^{\top}Z_{j,\ell}^{\top}U_j=(\bm w_{1,j,\ell},\dots,\bm w_{r_{j, U}, j, \ell})$. Since \[\frac{1}{L}\sum_{\ell=1}^{L}f_{j,\ell}\big[\zeta_{j, \ell, 1}U_j\big]=\frac{1}{L}\sum_{\ell=1}^{L}f_{j, \ell}([U_j \, V_{j,\ell}]G_{j, \ell}[U \, V]_{-j, \ell}^{\top}Z_{j,\ell}^{\top}U_j),\] we can write
 \[\frac{1}{L}\sum_{\ell=1}^{L}f_{j,\ell}\big[\zeta_{j, \ell, 1}U_j\big]=\left(\frac{1}{L}\sum_{\ell=1}^{L}O_{1,j,\ell}[U_{j} \, V_{j,\ell}]\bm w_{1,j,\ell},\dots, \frac{1}{L}\sum_{\ell=1}^{L}O_{r_{j, U},j,\ell}[U_{j} \, V_{j,\ell}]\bm w_{r_{j, U},j,\ell}\right).\]
 
 In the above matrix, each column $\frac{1}{L}\sum_{\ell=1}^{L}O_{k,j,\ell}[U_{j} V_{j,\ell}]\bm w_{k,j,\ell}$ is independent among different $k$ and has norm distribution $\calN\left(0, \Sigma_{j, k}\right)$, where 
 \[\Sigma_{j, k} =\frac{\sigma^2}{L^2}\sum_{\ell=1}^{L}O_{1,j,\ell}[U_j V_{j,\ell}]\Lambda_{j,\ell}^2[U_j V_{j,\ell}]^{\top}O_{1,j,\ell}^{\top}.\]
 Therefore, $\ve\left(\frac{1}{L}\sum_{\ell=1}^{L}f_{j,\ell}\big[\zeta_{j, \ell, 1}U_j\big]\right) \sim \calN(0, \Sigma_j)$, where $\Sigma_j=\mathrm{diag}(\Sigma_{j, 1}, \Sigma_{j, 2},\dots, \Sigma_{j, r_{j, U}})$. Since
 \[\norm{\Sigma_j}_2 \leq \max_k \norm{\Sigma_{j, k}}_2 \leq L^{-1}\sigma^2 \lambda^2_{\max}\lambda_{\min}^{-2}\Delta^{-2},\]
 we obtain that 
 \begin{equation} \label{eq:first-1-hetero}
 	\begin{aligned}
 		\norm{\frac{1}{L}\sum_{\ell=1}^{L}f_{j,\ell}\big[\zeta_{j, \ell, 1}U_j\big]U_j^{\top}}_{\rm F}
 		&=\norm{\frac{1}{L}\sum_{\ell=1}^{L}f_{j,\ell}\big[\zeta_{j, \ell, 1}U_j\big]}_{\rm F}\\
 		&=\norm{\ve\left(\frac{1}{L}\sum_{\ell=1}^{L}f_{j,\ell}\big[\zeta_{j, \ell, 1}U_j\big]\right)}_{2}\\
 		&\leq \norm{\Sigma_j^{1/2}}_2\norm{\Sigma_j^{-1/2}\ve\left(\frac{1}{L}\sum_{\ell=1}^{L}f_{j,\ell}\big[\zeta_{j, \ell, 1}U_j\big]\right)}_2\\
 		&\leq  \sqrt{2pr/L}\kappa_0\sigma\Delta^{-1},
 	\end{aligned}
 \end{equation}
 with probability at least $1-2e^{-pr/8}$, where we apply \eqref{eq:sum-Z2} to bound the norm of the vector $\Sigma_j^{-1/2}\ve\left(\frac{1}{L}\sum_{\ell=1}^{L}f_{j,\ell}\big[\zeta_{j, \ell, 1}U_j\big]\right) \sim \calN(0, I_{p_jr_{j, U}})$.
 
 A similar argument can be used to bound \[\frac{1}{L}\sum_{\ell=1}^{L}f_{j,\ell}\big[\zeta_{j, \ell, 1}^{\top}U_j\big]=\frac{1}{L}\sum_{\ell=1}^{L}f_{j, \ell}(Z_{j,\ell}[UV]_{-j, \ell}G_{j, \ell}^{\top}[U_j V_{j,\ell}]^{\top}U_j),\]
 since $Z_{j,\ell}[UV]_{-j, \ell}G_{j, \ell}^{\top}[U_j V_{j,\ell}]^{\top}U_j = Z_{j,\ell}[UV]_{-j, \ell}G_{j, \ell}^{\top} \begin{pmatrix}
 	I_{r_{j, U}}\\ \bm{0}_{r_{j, U}, r_{j,V,\ell}}
 \end{pmatrix}$, and $Z_{j,\ell}[UV]_{-j, \ell}G_{j, \ell}^{\top} $ has independent normal entries with variance at most $\lambda_{\max}^2\sigma^2$. Then repeating the previous procedure leads to 	
 \begin{equation}\label{eq:first-2-hetero}
 	\norm{\frac{1}{L}\sum_{\ell=1}^{L}f_{j,\ell}\big[\zeta_{j, \ell, 1}^{\top}U_j\big]U_j^{\top}}_{\rm F} \leq  \sqrt{2pr/L}\kappa_0\sigma\Delta^{-1},
 \end{equation}
  with probability at least $1-2e^{-pr/8}$.

Combing \eqref{eq:decomp-hetero}, \eqref{eq:second-1-hetero}, \eqref{eq:second-2-hetero}, \eqref{eq:first-1-hetero}, and \eqref{eq:first-2-hetero}, we obtain that 
\[\norm{\frac{1}{L}\sum_{\ell=1}^L\hU_{j, \ell}\hU_{j, \ell}-U_jU_j^{\top}}_{\rm F} \lesssim  \sqrt{\frac{pr}{L}}\frac{\kappa_0\sigma}{\Delta} +  \frac{\kappa_0^2pr\sigma^2}{\Delta^{2}},\]
with probability at least $1-C_1''Le^{c_1''p}$ for some constants $c_1''$ and $C_1''$.

In the following proof, the Davis-Kahan Theorem refers to the variant provided in \cite{yu2015useful}. By Davis-Kahan Theorem, it holds that 
\[\norm{\hU_j\hU_j^{\top}-U_jU_j^{\top}}_{\rm F} \leq 2\norm{\frac{1}{L}\sum_{\ell=1}^L\hU_{j, \ell}\hU_{j, \ell}-U_jU_j^{\top}}_{\rm F}.\]
Therefore, we conclude that 
\[
\norm{\hU_j\hU_j^{\top}-U_jU_j^{\top}}_{\rm F} \leq \widetilde C_2 \left(\sqrt{\frac{pr}{L}}\frac{\kappa_0\sigma}{\Delta}+\frac{pr\kappa_0^2\sigma^2}{\Delta^2}\right),
\]
with probability $1- C_1''Le^{-c_1''p}\geq 1-\widetilde C_1e^{-\widetilde c_1p}$, for some constants $\widetilde C_1, \widetilde c_1, \widetilde C_2$.

For $\hV_{j, \ell}$, note that its columns are the eigenvectors of 
\begin{align*}
	&\quad \left(I_{p_j}-\hU_j\hU_j^{\top}\right)\left(U_j\Lambda_{j, U}^2U_j^{\top} +  V_{j,\ell}\Lambda_{j, V_{\ell}}^2V_{j,\ell}^{\top} + \frakE_{j, \ell}\right)\left(I_{p_j}-\hU_j\hU_j^{\top}\right)\\
	&=\left(I_{p_j}-U_jU_j^{\top}+U_jU_j^{\top}-\hU_j\hU_j^{\top}\right)\left(U_j\Lambda_{j, U}^2U_j^{\top} +  V_{j,\ell}\Lambda_{j, V_{\ell}}^2V_{j,\ell}^{\top} + \frakE_{j, \ell}\right)\left(I_{p_j}-U_jU_j^{\top}+U_jU_j^{\top}-\hU_j\hU_j^{\top}\right)\\
	&=V_{j,\ell}\Lambda_{j, V_{\ell}}^2V_{j,\ell}^{\top} + V_{j,\ell}\Lambda_{j, V_{\ell}}^2V_{j,\ell}^{\top}\left(U_jU_j^{\top}-\hU_j\hU_j^{\top}\right)+\left(U_jU_j^{\top}-\hU_j\hU_j^{\top}\right)V_{j,\ell}\Lambda_{j, V_{\ell}}^2V_{j,\ell}^{\top}\\
	&\quad +  \left(U_jU_j^{\top}-\hU_j\hU_j^{\top}\right)\left(U_j\Lambda_{j, U}^2U_j^{\top} +  V_{j,\ell}\Lambda_{j, V_{\ell}}^2V_{j,\ell}^{\top}\right)\left(U_jU_j^{\top}-\hU_j\hU_j^{\top}\right)+ \left(I_{p_j}-\hU_j\hU_j^{\top}\right)\frakE_{j, \ell}\left(I_{p_j}-\hU_j\hU_j^{\top}\right).
\end{align*}
Since, with probability at least $1-\widetilde C_1e^{-\widetilde c_1p}$,  
\[
\sup_{j, \ell}\norm{V_{j,\ell}\Lambda_{j, V_{\ell}}^2V_{j,\ell}^{\top}\left(U_jU_j^{\top}-\hU_j\hU_j^{\top}\right)}_2 \leq \widetilde C_2\lambda_{\max}^2 \left(\sqrt{\frac{pr}{L}}\frac{\kappa_0\sigma}{\Delta}+\frac{pr\kappa_0^2\sigma^2}{\Delta^2}\right),
\]
\begin{align*}
	&\quad \sup_{j, \ell}\norm{\left(U_jU_j^{\top}-\hU_j\hU_j^{\top}\right)\left(U_j\Lambda_{j, U}^2U_j^{\top} +  V_{j,\ell}\Lambda_{j, V_{\ell}}^2V_{j,\ell}^{\top}\right)\left(U_jU_j^{\top}-\hU_j\hU_j^{\top}\right)}_2\\ 
	&\lesssim \frac{pr}{L}\frac{\lambda^2_{\max}\kappa_0^2\sigma^2}{\Delta^2}+\frac{p^2r^2\lambda_{\max}^2\kappa_0^4\sigma^4}{\Delta^4}\\
	&\ll \sqrt{\frac{pr}{L}}\frac{\lambda^2_{\max}\kappa_0\sigma}{\Delta}+\frac{pr\lambda_{\max}^2\kappa_0^2\sigma^2}{\Delta^2},
\end{align*}
and
\[
\sup_{j, \ell}\norm{\left(I_{p_j}-\hU_j\hU_j^{\top}\right)\frakE_{j, \ell}\left(I_{p_j}-\hU_j\hU_j^{\top}\right)}_2 \lesssim \sqrt{p}\lambda_{\max}\sigma.
\]
By Davis-Kahan Theorem, we obtain that 
\begin{align*}
	\sup_{j, \ell}\norm{\hV_{j,\ell} \hV_{j,\ell}^{\top}-V_{j,\ell} V_{j,\ell}^{\top}}_{\rm F} &\lesssim \frac{\sqrt{pr_{V}}\lambda_{\max}\sigma+\sqrt{prr_{V}/L}\lambda_{\max}^2\kappa_0\sigma/\Delta+\frac{\sqrt{r_{V}}pr\lambda_{\max}^2\kappa_0^2\sigma^2}{\Delta^2}}{\lambda_{\min}^2}\\
	&\lesssim \frac{\sqrt{pr_{V}}\kappa_0\sigma(1+\kappa\kappa_0\sqrt{r/L})}{\lambda_{\min}}+\frac{\sqrt{r_V}pr\kappa^2\kappa_0^2\sigma^2}{\lambda_{\min}^2},
\end{align*}
with probability at least $1-\widetilde C_1e^{-\widetilde c_1p}$.
\end{proof}

	\begin{proof}[Proof for Theorem 3.2]
	By the proof of Theorem 2.1, with probability at least $1-C_1'Le^{-c_1'p} \geq 1-\widetilde{C}_1e^{-\widetilde c_1p}$, we have that 
	\begin{equation}\label{eq:rank-W-error}
		\norm{\overline{W}_{j}-U_jU_j^{\top}-\frac{1}{L}\sum_{\ell=1}^{L}V_{j, \ell}V_{j, \ell}^{\top}}_{\rm F} \leq 2C_2'\lambda_{\min}^{-1}\sigma\sqrt{pr}.
	\end{equation}
	Let $\widecheck{V}_{j}\Sigma_{j, V}\widecheck{V}_{j}^{\top}$ be the eigen-decomposition of $\overline{V}_{j}:= \frac{1}{L}\sum_{\ell=1}^{L}V_{j, \ell}V_{j, \ell}^{\top}$. Since $U_{j}^{\top}\overline{V}_{j}=0$, we have $U_{j}^{\top}\widecheck{V}_{j}=0$. We will show that 
	\[\lambda_{\max}(\overline{V}_{j})=\lambda_{\max}(\Sigma_{j, V}) \leq 1- \frac{\eta_{V}}{3}.\]
	
	First, rewrite $\overline{V}_{j}=\frac{1}{L}\widetilde{V}_{j}\widetilde{V}_{j}^{\top}$, where $\widetilde{V}_{j}:=[V_{j,1} \, V_{j,2} \dots V_{j, L}] \in \R^{p_j \times \widetilde{r}_{j, V}}$ and $\widetilde{r}_{j, V}:=\sum_{\ell=1}^{L}r_{j,V,\ell}$.
	Note that 
	\[I_{\widetilde{r}_{j, V}}-\frac{1}{L}\widetilde{V}_{j}^{\top}\widetilde{V}_{j}=\begin{pmatrix}
		\frac{L-1}{L}I_{r_{j, V, 1}}& -\frac{1}{L}V_{j,1}^{\top}V_{j, 2} & \cdots & -\frac{1}{L}V_{j,1}^{\top}V_{j, L}\\
	-\frac{1}{L}V_{j,2}^{\top}V_{j, 1}	&\frac{L-1}{L}I_{r_{j, V, 2}} & \cdots & -\frac{1}{L}V_{j,2}^{\top}V_{j, L}\\
	&\cdots&\cdots&\\
		-\frac{1}{L}V_{j,L}^{\top}V_{j, 1}	&	-\frac{1}{L}V_{j,L}^{\top}V_{j, 2}& \cdots & \frac{L-1}{L}I_{r_{j, V, L}} \\
	\end{pmatrix}.\]
Assume that $\lambda$ is an eigenvalue of $I_{\widetilde{r}_{j, V}}-\frac{1}{L}\widetilde{V}_{j}^{\top}\widetilde{V}_{j}$, and corresponding eigenvector is $\bm v$. Write $\bm v = (\bm v_1^{\top}, \bm v_2^{\top}, \dots, \bm v_L^{\top})^{\top}$, where $\bm v_\ell \in \R^{r_{j, V, \ell}}$. Let $\widetilde\ell=\argmax_{\ell} \norm{\bm v_{\ell}}_2$. Since $(I_{\widetilde{r}_{j, V}}-\frac{1}{L}\widetilde{V}_{j}^{\top}\widetilde{V}_{j})\bm v=\lambda\bm v$, we have
\[\norm{\left(\lambda-\frac{L-1}{L}\right)\bm v_{\widetilde\ell}}_2 =\frac{1}{L}\norm{ \sum_{\ell' \neq \widetilde\ell} V_{j, \widetilde\ell}^{\top}V_{j, \ell'}\bm 	v_{\ell'}}_2 \leq  \frac{1}{L}\sum_{\ell' \neq \widetilde\ell} \norm{V_{j, \widetilde\ell}^{\top}V_{j, \ell'}}_2 \norm{\bm 	v_{\ell'}}_2 \leq  \frac{(L-1)(1-\eta_{V})}{L}\norm{\bm 	v_{\widetilde\ell}}_2,\]
where the last inequality follows from the definition that $\eta_{V}=\min_{j, \ell}\sum_{\ell'\neq \ell}(1-\norm{V_{j, \ell}^{\top} V_{j, \ell'}}_2)/ (L-1)$ and thus $\sum_{\ell'\neq \widetilde\ell}\norm{V_{j, \widetilde\ell}^{\top} V_{j, \ell'} }_2\leq (L-1)(1-\eta_{V})$. Therefore, 
we obtain that $\lambda \geq \frac{\eta_V(L-1)}{L}$, and this is true for all of the eigenvalues of $I_{\widetilde{r}_{j, V}}-\frac{1}{L}\widetilde{V}_{j}^{\top}\widetilde{V}_{j}$.
	
By Woodbury matrix identity, it holds that
	\[(I_{p_j}-\overline{V}_{j})^{-1}=I_{p_j}+\frac{1}{L}\widetilde{V}_{j}\left(I_{\widetilde{r}_{j, V}}-\frac{1}{L}\widetilde{V}_{j}^{\top}\widetilde{V}_{j}\right)^{-1}\widetilde{V}_{j}^{\top}.\]
Since $\norm{\widetilde{V}_{j}}_2 \leq \sqrt{\sum_{\ell=1}^{L}\norm{V_{j, \ell}}_2^2}\leq\sqrt{L}$, we obtain
	\[\norm{(I_{p_j}-\overline{V}_{j})^{-1}}_2 \leq 1+\frac{L}{\eta_V(L-1)},\]
	yielding that 
	\[\norm{\overline{V}_{j}}_2 =1- \lambda_{\min}(I_{p_j}-\overline{V}_{j}) = 1-\norm{(I_{p_j}-\overline{V}_{j})^{-1}}_2^{-1} \leq 1-\frac{\eta_{V}}{3}.\]
	
As a result, we establish that $U_jU_j^{\top}+\frac{1}{L}\sum_{\ell=1}^{L}V_{j, \ell}V_{j, \ell}^{\top}=[U_j \, \widecheck{V}_{j}]\mathrm{diag}(I_{r_{j, U}}, \Sigma_{j, V})[U_j \, \widecheck{V}_{j}]^{\top}$. By Weyl's theorem and \eqref{eq:rank-W-error}, we obtain 
\[\sigma_{k}(\overline{W}) \geq 1-2C_2'\frac{\sigma\sqrt{pr}}{\lambda_{\min}}, \forall k \leq r_{j, U},\]
and 
\[\sigma_{k}(\overline{W}) \leq 1-\frac{\eta_{V}}{3}+2C_2'\frac{\sigma\sqrt{pr}}{\lambda_{\min}}, \forall k  \geq r_{j, U}+1.\]
Therefore, since $\frac{\sigma\sqrt{pr}}{\lambda_{\min}}=o(1)$and $\eta_V\geq \eta_0$,  for any $\delta_0 \in (0, \eta_{0}/3)$, we have $\Prob(\widehat{r}_{j, U}=r_{j, U}) \geq 1-\widetilde C_1e^{-\widetilde c_1p}$.
	
\end{proof}

\subsection{Proof of the Results for the Knowledge Transfer Setting} \label{sec:supp_proof_transfer}

\begin{proof}[Proof of Theorems 4.1 and \ref{thm:transfer-L}]
	It suffices to show Theorem \ref{thm:transfer-L} since Theorem 4.1 is a special case with $L=2$. Analogous to \eqref{eq:decomp-hetero}, \eqref{eq:second-1-hetero}, and  \eqref{eq:second-2-hetero}, we have that
	\begin{equation}\label{eq:decomp-weighted}
		\begin{aligned}
				&\sup_{j,\ell} \frac{1}{\sigma_{\ell}^2}\norm{\hU_{j, \ell}\hU_{j, \ell}^{\top}-U_jU_j^{\top}-f_{j,\ell}\big[\zeta_{j, \ell, 1}U_j\big]U_j^{\top}-f_{j,\ell}\big[\zeta_{j, \ell, 1}^{\top}U_j\big]U_j^{\top}-U_jf_{j,\ell}\big[\zeta_{j, \ell, 1}U_j\big]^{\top}-U_jf_{j,\ell}\big[\zeta_{j, \ell, 1}^{\top}U_j\big]^{\top}}_{\rm F} \\
				&\lesssim \kappa_0^2pr\Delta^{-2},
		\end{aligned}
	\end{equation} 
	with probability at least $1-C_1'Le^{-c_1'p}$. 
	
	Now we bound $\sum_{\ell=1}^{L}w_{\ell}f_{j,\ell}\big[\zeta_{j, \ell, 1}U_j\big]U_j^{\top}$. The other three terms can be bounded similarly.  Since $\ve(G_{j, \ell}[UV]_{-j, \ell}^{\top}Z_{j,\ell}^{\top}U_j)=(U_j^{\top} \otimes G_{j, \ell}[UV]_{-j, \ell}^{\top})\ve(Z_{j,\ell}^{\top})$, we have that \[\ve(G_{j, \ell}[UV]_{-j, \ell}^{\top}Z_{j,\ell}^{\top}U_j) \sim \calN(0, \sigma_{\ell}^2(I_{r_{j, U}} \otimes \Lambda_{j, \ell}^2)),\]
	which shows that $G_{j, \ell}[UV]_{-j, \ell}^{\top}Z_{j,\ell}^{\top}U_j$ has independent normal entries with variance at most $\lambda_{\max}^2\sigma_{\ell}^2$. Let 
	$G_{j, \ell}[UV]_{-j, \ell}^{\top}Z_{j,\ell}^{\top}U_j=(\bm w_{1,j,\ell},\dots,\bm w_{r_{j, U}, j, \ell})$. Since \[\sum_{\ell=1}^{L}w_{\ell}f_{j,\ell}\big[\zeta_{j, \ell, 1}U_j\big]=\sum_{\ell=1}^{L}w_{\ell}f_{j, \ell}([U_j V_{j,\ell}]G_{j, \ell}[UV]_{-j, \ell}^{\top}Z_{j,\ell}^{\top}U_j),\] we can write
	\[\sum_{\ell=1}^{L}w_{\ell}f_{j,\ell}\big[\zeta_{j, \ell, 1}U_j\big]=\left(\sum_{\ell=1}^{L}w_{\ell}O_{1,j,\ell}[U_{j} V_{j,\ell}]\bm w_{1,j,\ell},\dots, \sum_{\ell=1}^{L}w_{\ell}O_{r_{j, U},j,\ell}[U_{j} V_{j,\ell}]\bm w_{r_{j, U},j,\ell}\right).\]
	
	In the above matrix, each column $\sum_{\ell=1}^{L}w_{\ell}O_{k,j,\ell}[U_{j} V_{j,\ell}]\bm w_{k,j,\ell}$ is independent among different $k$ and has norm distribution $\calN\left(0, \Sigma_{j, k}\right)$, where 
	\[\Sigma_{j, k} =\sum_{\ell=1}^{L}w_{\ell}^2\sigma_{\ell}^2O_{1,j,\ell}[U_j V_{j,\ell}]\Lambda_{j,\ell}^2[U_j V_{j,\ell}]^{\top}O_{1,j,\ell}^{\top}.\]
	Therefore, $\ve\left(\sum_{\ell=1}^{L}w_{\ell}f_{j,\ell}\big[\zeta_{j, \ell, 1}U_j\big]\right) \sim \calN(0, \Sigma_j)$, where $\Sigma_j=\mathrm{diag}(\Sigma_{j, 1}, \Sigma_{j, 2},\dots, \Sigma_{j, r_{j, U}})$. Since
	\[\norm{\Sigma_j}_2 \leq \max_k \norm{\Sigma_{j, k}}_2 \leq \sum_{\ell=1}^{L}w^2_{\ell}\sigma_{\ell}^2 \lambda^2_{\max}\lambda_{\min}^{-2}\Delta^{-2},\]
	we obtain that 
	\begin{equation} \label{eq:first-1-weighted}
		\begin{aligned}
			\norm{\sum_{\ell=1}^{L}w_{\ell}f_{j,\ell}\big[\zeta_{j, \ell, 1}U_j\big]U_j^{\top}}_{\rm F}
			&=\norm{\sum_{\ell=1}^{L}w_{\ell}f_{j,\ell}\big[\zeta_{j, \ell, 1}U_j\big]}_{\rm F}\\
			&=\norm{\ve\left(\sum_{\ell=1}^{L}w_{\ell}f_{j,\ell}\big[\zeta_{j, \ell, 1}U_j\big]\right)}_{2}\\
			&\leq \norm{\Sigma_j^{1/2}}_2\norm{\Sigma_j^{-1/2}\ve\left(\sum_{\ell=1}^{L}w_{\ell}f_{j,\ell}\big[\zeta_{j, \ell, 1}U_j\big]\right)}_2\\
			&\leq  \sqrt{2pr}\kappa_0\sqrt{\sum_{\ell=1}^{L}w_{\ell}^2\sigma_{\ell}^2}\Delta^{-1},
		\end{aligned}
	\end{equation}
	with probability at least $1-2e^{-pr/8}$, where we apply \eqref{eq:sum-Z2} to bound the norm of the vector $\Sigma_j^{-1/2}\ve\left(\sum_{\ell=1}^{L}w_{\ell}f_{j,\ell}\big[\zeta_{j, \ell, 1}U_j\big]\right) \sim \calN(0, I_{p_jr_{j, U}})$.

Therefore, by Davis-Kahan Theorem, we obtain
\begin{align*}
	&\quad \sup_j\norm{\hU_{j}\hU_{j}^{\top}-U_jU_j^{\top}}_{\rm F}\\
	& \leq 2\norm{\sum_{\ell=1}^Lw_{\ell}\hU_{j,\ell}\hU_{j,\ell}^{\top}-U_jU_j^{\top}}_{\rm F}\\
	&\leq \widetilde C_2\left(\frac{\sqrt{pr}\kappa_0\sqrt{\sum_{\ell=1}^{L}w_{\ell}^2\sigma_{\ell}^2}}{\Delta}+\frac{pr\kappa_0^2\sum_{\ell=1}^{L}w_{\ell}\sigma_{\ell}^2}{\Delta^2}\right),
\end{align*}
with probability at least $1-\widetilde{C}_1e^{-\widetilde c_1p}$. 

Using the same argument in the proof of Theorem 3.1, for any $\ell$, we have
\begin{align*}
	&\quad \sup_{j}\norm{\hV_{j,\ell} \hV_{j,\ell}^{\top}-V_{j,\ell} V_{j,\ell}^{\top}}_{\rm F} \\
	&\lesssim \frac{\sqrt{pr_{V}}\kappa_0}{\lambda_{\min}}\left(\sigma_{\ell}+\kappa\kappa_0\sqrt{r}\sqrt{\sum_{\ell=1}^{L}w_{\ell}^2\sigma_{\ell}^2}\right)+\frac{\sqrt{r_V}pr\kappa^2\kappa_0^2\sum_{\ell=1}^{L}w_{\ell}\sigma_{\ell}^2}{\lambda_{\min}^2}.
\end{align*}
with probability at least $1-\widetilde{C}_1e^{-\widetilde c_1p}$. 

Since $\kappa_0, \kappa=O(1)$, 
$\Delta / \max_{\ell}\sigma_{\ell} \gg \sqrt{prL}$ 
and \[\sum_{\ell=1}^{L}w_{\ell}\sigma_{\ell}^2 \leq \sqrt{L}\max_{\ell}\sigma_{\ell} \sqrt{\sum_{\ell=1}^{L}w_{\ell}^2\sigma_{\ell}^2},\]
we have 
\[\frac{pr\kappa_0^2\sum_{\ell=1}^{L}w_{\ell}\sigma_{\ell}^2}{\Delta^2} \lesssim  \frac{\sqrt{pr}\kappa_0\sqrt{\sum_{\ell=1}^{L}w_{\ell}^2\sigma_{\ell}^2}}{\Delta},\]
and thus, with probability at least $1-\widetilde{C}_1e^{-\widetilde c_1p}$,
\[\sup_j\norm{\hU_{j}\hU_{j}^{\top}-U_jU_j^{\top}}_{\rm F}\lesssim \frac{\sqrt{pr}\kappa_0\sqrt{\sum_{\ell=1}^{L}w_{\ell}^2\sigma_{\ell}^2}}{\Delta}.\]
Similarly, it holds that
\[\frac{\sqrt{r_V}pr\kappa^2\kappa_0^2\sum_{\ell=1}^{L}w_{\ell}\sigma_{\ell}^2}{\lambda_{\min}^2}\lesssim \frac{\sqrt{pr_{V}}\kappa_0}{\lambda_{\min}}\kappa\kappa_0\sqrt{r}\sqrt{\sum_{\ell=1}^{L}w_{\ell}^2\sigma_{\ell}^2},\]
which yields that, with probability at least $1-\widetilde{C}_1e^{-\widetilde c_1p}$,
\[
\sup_{j}\norm{\hV_{j,\ell} \hV_{j,\ell}^{\top}-V_{j,\ell} V_{j,\ell}^{\top}}_{\rm F} 
\lesssim \frac{\sqrt{pr_{V}}\kappa_0}{\lambda_{\min}}\Big(\sigma_{\ell}+\kappa\kappa_0\sqrt{r}\sqrt{\sum_{\ell=1}^{L}w_{\ell}^2\sigma_{\ell}^2}\Big).
\]

Plugging in $w_{\ell}=\frac{\sigma_{\ell}^{-2}}{\sum_{\ell=1}^{L}\sigma_{\ell}^{-2}}$ leads to 
\[\sup_j\norm{\hU_{j}\hU_{j}^{\top}-U_jU_j^{\top}}_{\rm F}\lesssim \frac{\sqrt{pr}\kappa_0}{\Delta\sqrt{\sum_{\ell=1}^L\sigma_{\ell}^{-2}}},\]
and
\[
\sup_{j}\norm{\hV_{j,\ell} \hV_{j,\ell}^{\top}-V_{j,\ell} V_{j,\ell}^{\top}}_{\rm F} \lesssim \frac{\sqrt{pr_{V}}\kappa_0}{\lambda_{\min}}\Big(\sigma_{\ell}+\frac{\kappa\kappa_0\sqrt{r}}{\sqrt{\sum_{\ell=1}^L\sigma_{\ell}^{-2}}}\Big).
\]

\end{proof}
\begin{proof}[Proof of Theorem \ref{thm:transfer-adaptive}]
	With adaptive weights $\widehat{w}_{\ell}$, we have, by Davis-Khan Theorem,
	\begin{align*}
		&\quad \sup_j\norm{\hU_{j}\hU_{j}^{\top}-U_jU_j^{\top}}_{\rm F}\\
		& \leq  2\sup_j\norm{\sum_{\ell=1}^L\widehat{w}_{\ell}\hU_{j,\ell}\hU_{j,\ell}^{\top}-U_jU_j^{\top}}_{\rm F}\\
		&\leq 2 \sup_j\sum_{\ell=1}^L\abs{\widehat{w}_{\ell}-w_{\ell}^*}\norm{\hU_{j,\ell}\hU_{j,\ell}^{\top}-U_jU_j^{\top}}_{\rm F}\\
		&\quad+2 \sup_j\norm{\sum_{\ell=1}^Lw_{\ell}^*\hU_{j,\ell}\hU_{j,\ell}^{\top}-U_jU_j^{\top}}_{\rm F}.
	\end{align*}
	By the proof of Theorem \ref{thm:transfer-L}, with probability at least $1-\widetilde{C}_1e^{-\widetilde c_1p}$, 
	\[\sup_j\norm{\sum_{\ell=1}^{L}w_{\ell}^*\hU_{j,\ell}\hU_{j,\ell}^{\top}-U_jU_j^{\top}}_{\rm F}\lesssim \frac{\sqrt{pr}\kappa_0}{\Delta\sqrt{\sum_{\ell=1}^L\sigma_{\ell}^{-2}}}.\]
Also, by \eqref{eq:frakE_1_hetero} and Davis-Khan Theorem, with probability at least $1-C_1'Le^{-c_1'p}$,
	\[\sup_{j,\ell}\frac{1}{\sigma_{\ell}}\norm{\hU_{j,\ell}\hU_{j,\ell}^{\top}-U_jU_j^{\top}}_{\rm F} \lesssim \frac{\sqrt{pr}\kappa_0}{\Delta}.\]
	
	Now we bound $\abs{\widehat{w}_{\ell}-w_{\ell}^*}$. Recall that
	\[   \widehat\sigma_{\ell} = \Big\|\calT_{\ell}-\calT_{\ell} \times_1 \big[\hU_{1,\ell}\; \hV_{1,\ell}\big]\big[\hU_{1,\ell}\; \hV_{1,\ell}\big]^{\top} \times_2\cdots\times_J\big[\hU_{J,\ell}\; \hV_{J,\ell}\big]\big[\hU_{J,\ell}\; \hV_{J,\ell}\big]^{\top}\Big\|_{\rm F} / \sqrt{p_1p_2\cdots p_J}.\]
	By the proof of Lemma 1 in \cite{xia2022inference}, we have that,
	\[\abs{\widehat\sigma_{\ell}\sqrt{p_1p_2\cdots p_J}-\norm{\calZ_{\ell}}_{\rm F}} \leq C'_2 \sqrt{pr}\kappa_0\sigma_{\ell}, \quad \forall \ell,\]
	with probability $1-C_1'Le^{-c_1'p}$.
	
	By Lemma 1 of \cite{laurent2000adaptive}, it holds that
	\[\Prob\left(\abs{\frac{\norm{\calZ_\ell}_{\rm F}^2}{\sigma_{\ell}^2}-p_1p_2\cdots p_J} \geq 2\sqrt{p_1p_2\cdots p_J}\sqrt{c_3+1}\sqrt{\log p}+2(c_3+1)\log p\right) \leq 2p^{-c_3-1},\]
	and thus, since $L=O(p^{c_3})$,
	\[\sup_{\ell}\abs{\frac{\norm{\calZ_\ell}_{\rm F}^2}{\sigma_{\ell}^2}-p_1p_2\cdots p_J} \lesssim p^{J/2}\sqrt{\log p}, \]
	with probability at least $1-\widetilde{C}_1p^{-1}$ for some constant $\widetilde{C}_1$. Hence,
	\[\abs{\norm{\calZ_\ell}_{\rm F}-\sigma_{\ell}\sqrt{p_1p_2\cdots p_J}}\leq \frac{\abs{\norm{\calZ_\ell}_{\rm F}^2-\sigma_{\ell}^2p_1p_2\cdots p_J}}{\norm{\calZ_\ell}_{\rm F}+\sigma_{\ell}\sqrt{p_1p_2\cdots p_J}}\lesssim \sigma_{\ell}\sqrt{\log p},\]
	and thus
	\[\abs{\widehat{\sigma}_{\ell}-\sigma_{\ell}} \leq \widetilde{C}_2 p^{-(J-1)/2}\sqrt{r}\kappa_0\sigma_{\ell},\]
	with probability at least $1-\widetilde {C}_1p^{-1}$ for some constants $\widetilde {C}_1$ and $\widetilde {C}_2$. This further leads to 
	\[\frac{(1-\widetilde{C}_2p^{-(J-1)/2}\kappa_0\sqrt{r})^2}{(1+\widetilde{C}_2p^{-(J-1)/2}\kappa_0\sqrt{r})^2}\frac{\sigma_{\ell}^{-2}}{\sum_{\ell=1}^{L}\sigma_{\ell}^{-2}}\leq \widehat{w}_{\ell}=\frac{\widehat{\sigma}_{\ell}^{-2}}{\sum_{\ell=1}^{L}\widehat{\sigma}_{\ell}^{-2}} \leq \frac{(1+\widetilde{C}_2p^{-(J-1)/2}\kappa_0\sqrt{r})^2}{(1-\widetilde{C}_2p^{-(J-1)/2}\kappa_0\sqrt{r})^2}\frac{\sigma_{\ell}^{-2}}{\sum_{\ell=1}^{L}\sigma_{\ell}^{-2}},\]
	and thus
	\[\abs{\widehat{w}_{\ell}-w^*_{\ell}} \leq C_Wp^{-(J-1)/2}\kappa_0\sqrt{r}\frac{\sigma_{\ell}^{-2}}{\sum_{\ell=1}^{L}\sigma_{\ell}^{-2}}, \forall \ell, \]
	for some constant $C_W$, with probability at least $1-\widetilde {C}_1p^{-1}$. Therefore,
	\[\sup_j\sum_{\ell=1}^L\abs{\widehat{w}_{\ell}-w_{\ell}^*}\norm{\hU_{j,\ell}\hU_{j,\ell}^{\top}-U_jU_j^{\top}}_{\rm F} \lesssim p^{-(J-1)/2}\kappa_0\sqrt{r}\frac{\sum_{\ell=1}^{L}\sigma_{\ell}^{-1}}{\sum_{\ell=1}^{L}\sigma_{\ell}^{-2}} \frac{\sqrt{pr}\kappa_0}{\Delta} \leq \sqrt{\frac{rL}{p^{J-1}}} \frac{\sqrt{pr}\kappa_0}{\Delta\sqrt{\sum_{\ell=1}^{L}\sigma_{\ell}^{-2}}},\]
	 with probability at least $1-\widetilde {C}_1p^{-1}$. Then we obtain that, with probability at least $1-\widetilde {C}_1p^{-1}$,
	 \[\sup_j\norm{\hU_{j}\hU_{j}^{\top}-U_jU_j^{\top}}_{\rm F}\lesssim \frac{\sqrt{pr}\kappa_0}{\Delta\sqrt{\sum_{\ell=1}^L\sigma_{\ell}^{-2}}}+\frac{r\kappa_0\sqrt{L}}{\Delta\sqrt{p^{J-2}\sum_{\ell=1}^L\sigma_{\ell}^{-2}}},\]
	 and
	 \[
	 \sup_{j}\norm{\hV_{j,\ell} \hV_{j,\ell}^{\top}-V_{j,\ell} V_{j,\ell}^{\top}}_{\rm F} \lesssim \frac{\sqrt{pr_{V}}\kappa_0}{\lambda_{\min}}\Big(\sigma_{\ell}+\frac{\kappa\kappa_0\sqrt{r}}{\sqrt{\sum_{\ell=1}^L\sigma_{\ell}^{-2}}}+\frac{\kappa\kappa_0r\sqrt{L}}{\sqrt{p^{J-1}\sum_{\ell=1}^L\sigma_{\ell}^{-2}}}\Big).
	 \]
	\end{proof}

\subsection{Proof of Technical Lemmas}\label{sec:supp-theory-lemmas}
\begin{proof}[Proof of Lemma \ref{lem:1}] 
We first show the high-probability bound for a fix $(j, \ell)$. It is assumed in Theorem 2.1 that there exist $C_1, c_1, C_2$ such that
\begin{equation}\label{eq:good-1}
	\Prob\left[\norm{\hU_{j,\ell}^{(0)}\hU_{j,\ell}^{(0)\top}-U_{j}U_{j}^{\top}}_2 \leq C_2 \sqrt{p}\sigma\lambda_{\min}^{-1}\right] \geq 1-C_1e^{-c_1p}.
\end{equation} 

Since $Z_{j, \ell}$ has i.i.d. $\calN(0, \sigma^2)$ entries and $U_{-j}^{\top}U_{-j}=I_{r_{-j}}$, the matrix  $Z_{j,\ell}U_{-j}$ also has i.i.d. $\calN(0, \sigma^2)$ entries. Then we need the following lemma:
\begin{lemma}\label{lem:3}[Theorem 5.39 in \cite{vershynin2010introduction}]
	Let $M \in \R^{p_1 \times p_2}$ whose rows $M_1, \dots ,M_{p_1}$ are  independent sub-Gaussian random vectors with $\E[M_iM_i^{\top}] = I_{p_2}$. Then for every $t \geq 0$, there exist constants $c, C>0$ such that 
	\[\Prob\left[\norm{M}_2 \leq \sqrt{p_1}+C\sqrt{p_2}+t\right]\geq 1-2e^{-ct^2},\]
	for any $t\geq 0$. 
\end{lemma}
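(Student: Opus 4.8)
The plan is to prove this via the standard $\varepsilon$-net (covering) argument, which is how Theorem 5.39 of \cite{vershynin2010introduction} is established; I sketch the steps. Write $K=\max_{i}\norm{M_i}_{\psi_2}$ for the common sub-Gaussian norm of the rows, and note that by the isotropy assumption $\E[M_iM_i^{\top}]=I_{p_2}$, for any fixed unit vector $x$ the projection $\langle M_i,x\rangle$ satisfies $\E\langle M_i,x\rangle^{2}=1$ and $\norm{\langle M_i,x\rangle}_{\psi_2}\leq K$. The first move is to reduce the operator norm to a supremum over a finite net: work with the symmetric matrix $\tfrac1{p_1}M^{\top}M-I_{p_2}$ and pick a $1/4$-net $\mathcal N$ of the sphere $S^{p_2-1}$ with $\abs{\mathcal N}\leq 9^{p_2}$ (volumetric bound). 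The standard approximation lemma for symmetric operators gives
\[
\Big\|\tfrac1{p_1}M^{\top}M-I_{p_2}\Big\|_2\ \leq\ 2\max_{x\in\mathcal N}\Big|\tfrac1{p_1}\norm{Mx}_2^{2}-1\Big| ,
\]
so it suffices to control $\tfrac1{p_1}\norm{Mx}_2^{2}-1=\tfrac1{p_1}\sum_{i=1}^{p_1}\big(\langle M_i,x\rangle^{2}-1\big)$ uniformly over $x\in\mathcal N$.

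For a fixed $x\in S^{p_2-1}$, the summands $\langle M_i,x\rangle^{2}-1$ are independent, mean-zero, and sub-exponential with $\psi_1$-norm $\lesssim K^{2}$ (using $\norm{Z^{2}}_{\psi_1}=\norm{Z}_{\psi_2}^{2}$ together with centering). Bernstein's inequality for sums of independent sub-exponentials then yields, for every $\varepsilon\geq 0$,
\[
\Prob\Big[\,\Big|\tfrac1{p_1}\textstyle\sum_{i=1}^{p_1}(\langle M_i,x\rangle^{2}-1)\Big|\geq\varepsilon\,\Big]\ \leq\ 2\exp\!\big(-c_K\,p_1\min(\varepsilon^{2},\varepsilon)\big),
\]
with $c_K>0$ depending only on $K$. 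Now set $\delta:=C_0\big(\sqrt{p_2/p_1}+t/\sqrt{p_1}\big)$ for a large constant $C_0=C_0(K)$ and $\varepsilon:=\max(\delta,\delta^{2})$, so that $\min(\varepsilon^{2},\varepsilon)\geq\delta^{2}\gtrsim (p_2+t^{2})/p_1$; a union bound over $\mathcal N$ then gives
\[
\Prob\Big[\max_{x\in\mathcal N}\Big|\tfrac1{p_1}\norm{Mx}_2^{2}-1\Big|\geq\varepsilon\Big]\ \leq\ 2\cdot 9^{p_2}\exp\!\big(-c_K p_1\delta^{2}\big)\ \leq\ 2\exp(-c'_K t^{2}),
\]
where $C_0$ is chosen large enough that the covering-entropy term $p_2\log 9$ is absorbed into $c_Kp_1\delta^{2}$ with a residual $c'_Kt^{2}$ left over.

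On the complementary event we obtain $\big\|\tfrac1{p_1}M^{\top}M-I_{p_2}\big\|_2\leq 2\varepsilon$, hence $\norm{M}_2^{2}=\norm{M^{\top}M}_2\leq p_1(1+2\varepsilon)$ and therefore $\norm{M}_2\leq\sqrt{p_1}\,(1+\varepsilon)=\sqrt{p_1}+\sqrt{p_1}\,\varepsilon$. In the regime $\delta\leq 1$ one has $\sqrt{p_1}\,\varepsilon=\sqrt{p_1}\,\delta=C_0\sqrt{p_2}+C_0t$, so after relabelling the constants (replace $C_0t$ by $t$, which merely rescales $c'_K$ by $C_0^{-2}$) we get $\norm{M}_2\leq\sqrt{p_1}+C\sqrt{p_2}+t$ with the claimed probability; the complementary regime $\delta>1$ forces $C\sqrt{p_2}+t\gtrsim\sqrt{p_1}$ and is disposed of by the same estimate via a routine case split (it is essentially vacuous once $t$ is large). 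The only place any real work happens is the pointwise sub-exponential concentration of $\norm{Mx}_2^{2}$ combined with the calibration of $\varepsilon$ against the entropy $9^{p_2}$ so that the surviving exponent is exactly $c'_Kt^{2}$; everything else is the off-the-shelf net machinery, and for the present paper this lemma is simply quoted from \cite{vershynin2010introduction}.
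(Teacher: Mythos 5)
Your reconstruction is correct and faithfully reproduces the standard $\varepsilon$-net argument that Vershynin uses to prove Theorem 5.39. The paper itself does not prove this lemma but simply cites it from \cite{vershynin2010introduction}; since your sketch is precisely that cited proof (net reduction of $\|\tfrac{1}{p_1}M^\top M - I\|_2$, pointwise Bernstein for the sub-exponential summands $\langle M_i,x\rangle^2-1$, union bound over a $1/4$-net calibrated so the entropy $p_2\log 9$ is absorbed into $c_K p_1\delta^2$, then $\sqrt{1+2\varepsilon}\le 1+\varepsilon$), there is nothing to compare beyond noting you have supplied the details the paper delegated to its reference.
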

By Lemma \ref{lem:3}, for any $t>0$, there exist $c, C$ such that \[\norm{\frac{1}{\sigma}Z_{j, \ell}U_{-j}}_2 \leq \sqrt{p_j}+C\sqrt{r_{-j}}+t, \] 
with probability at least $1-2e^{-ct^2}$. Using the assumptions $p \asymp p_j$ and $p \gtrsim r^{J-1} \geq r_{-j}$ and choosing $t = \sqrt{p}$, we obtain that 
\begin{equation}\label{eq:good-2}
	\Prob\left[\norm{Z_{j, \ell}U_{-j}}_2 \leq C_{Z}\sigma\sqrt{p}\right] \geq 1-2e^{-c_{Z}p},
\end{equation} 
for some $c_{Z}, C_{Z}>0$. The same argument leads to 
\begin{equation}\label{eq:good-2-ave}
	\Prob\left[\norm{\overline{Z}_{j}U_{-j}}_2 \leq C_{Z}\sigma\sqrt{p/L}\right] \geq 1-2e^{-c_{Z}p},
\end{equation} 
since $\overline{Z}_{j}=\frac{1}{L}\sum_{\ell=1}^{L}Z_{j,\ell} \in \calN(0, \sigma^2/L)$.

To show the third inequality in \eqref{eq:good}, we first define the ball  $\mathcal{B}^{p \times r}(X_0, \varepsilon):=\{X \,|\, X \in \R^{p\times r}, \norm{X-X_0}_2 \leq \varepsilon\}$. By Lemma 7 in \cite{zhang2018tensor}, for any $\varepsilon>0$, there exist $\mathcal{C}^{p, r}:=\big\{\overline{X}_{1},\dots,\overline{X}_{N}\big\}$ such that $\norm{\overline{X}_i}_2 \leq 1$ and $\mathcal{B}^{p \times r}(\bm{0}, 1) \subset \bigcup_{i=1}^N \mathcal{B}^{p \times r}\big(\overline{X}_i, \varepsilon\big)$, with $N \leq (1+2/\varepsilon)^{pr}$. In particular, let $\varepsilon=1/(2J)$, and  we have that, for any $(X_1, \dots, X_J)$ satisfying $X_j \in \mathcal{B}^{p_j \times r_j}(\bm{0}, 1)$, there exists $\big(\widetilde{X}_{1}, \dots, \widetilde{X}_{J}\big) \in \mathcal{C}^{p_1, r_1} \times \mathcal{C}^{p_2, r_2} \times \dots \times \mathcal{C}^{p_J, r_J}$ such that $\norm{X_j-\widetilde{X}_j}_2 \leq 1/(2J)$ for all $j$. Since  
\begin{align*}
	&\norm{Z_{j,\ell}(X_1 \otimes X_2 \otimes \cdots \otimes X_{j-1} \otimes X_{j+1}\otimes \cdots \otimes X_J)}_2 -	\norm{Z_{j,\ell}(\widetilde{X}_1 \otimes \widetilde{X}_2 \otimes \cdots \otimes \widetilde{X}_{j-1} \otimes \widetilde{X}_{j+1}\otimes \cdots \otimes \widetilde{X}_J)}_2\\
	&\leq \sum_{j'\in [J]\setminus \{j\}}\norm{Z_{j,\ell}\big(\widetilde{X}_1 \otimes \widetilde{X}_2 \otimes \cdots \otimes(X_{j'} -\widetilde{X}_{j'} )\otimes\cdots \otimes X_{j-1} \otimes X_{j+1}\otimes \cdots \otimes X_J\big)}_2\\
	&\leq \sum_{j'\in [J]\setminus \{j\}}\norm{X_{j'} -\widetilde{X}_{j'} }_2\sup_{\substack{X_{j'} \in \R^{p_{j'} \times r_{j'}}\\ \norm{X_{j'}}_2 \leq 1, \forall j' \in [J] \setminus\{j\}}} \norm{Z_{j,\ell}\big(X_1 \otimes X_2 \otimes \cdots \otimes X_{j-1} \otimes X_{j+1}\otimes \cdots \otimes X_J\big)}_2\\
	&\leq \frac{1}{2} \sup_{\substack{X_{j'} \in \R^{p_{j'} \times r_{j'}}\\ \norm{X_{j'}}_2 \leq 1, \forall j' \in [J] \setminus\{j\}}} \norm{Z_{j,\ell}\big(X_1 \otimes X_2 \otimes \cdots \otimes X_{j-1} \otimes X_{j+1}\otimes \cdots \otimes X_J\big)}_2,
\end{align*}
which implies
\begin{equation}\label{eq:cover}
	\begin{aligned}
		&\sup_{\substack{X_{j'} \in \R^{p_{j'} \times r_{j'}}\\ \norm{X_{j'}}_2 \leq 1, \forall j' \in [J] \setminus\{j\}}} \norm{Z_{j,\ell}\big(X_1 \otimes X_2 \otimes \cdots \otimes X_{j-1} \otimes X_{j+1}\otimes \cdots \otimes X_J\big)}_2 \\
		&\leq 2\sup_{\big(\widetilde{X}_{1}, \dots, \widetilde{X}_{J}\big) \in  \mathcal{C}^{p_1, r_1}  \times \dots \times \mathcal{C}^{p_J, r_J}} \norm{Z_{j,\ell}\big(\widetilde{X}_1 \otimes \widetilde{X}_2 \otimes \cdots \otimes \widetilde{X}_{j-1} \otimes \widetilde{X}_{j+1}\otimes \cdots \otimes \widetilde{X}_J\big)}_2.
	\end{aligned}
\end{equation}

Therefore, it suffices to show that there exist $\widetilde C_1, \widetilde c_1, \widetilde C_2$, s.t.
\[\sup_{\big(\widetilde{X}_{1}, \dots, \widetilde{X}_{J}\big) \in  \mathcal{C}^{p_1, r_1} \times  \dots \times \mathcal{C}^{p_J, r_J}} \norm{Z_{j,\ell}\big(\widetilde{X}_1 \otimes \widetilde{X}_2 \otimes \cdots \otimes \widetilde{X}_{j-1} \otimes \widetilde{X}_{j+1}\otimes \cdots \otimes \widetilde{X}_J\big)}_2 \leq \widetilde C_2\sigma\sqrt{pr},\]
with probability at least $1-\widetilde C_1e^{-\widetilde c_1p}$. 

For a fixed tuple $\big(\widetilde{X}_{1}, \dots, \widetilde{X}_{J}\big)$, let $\Psi_{j, \ell} = Z_{j,\ell}\big(\widetilde{X}_1 \otimes \widetilde{X}_2 \otimes \cdots \otimes \widetilde{X}_{j-1} \otimes \widetilde{X}_{j+1}\otimes \cdots \otimes \widetilde{X}_J\big)$. The rows of $\Psi_{j, \ell}$ are independent normal vectors $\calN(0, \Sigma_{-j})$, where $\Sigma_{-j} =  \sigma^2\big(\widetilde{X}_1^{\top}\widetilde{X}_1 \otimes \widetilde{X}_2^{\top}\widetilde{X}_2 \otimes \cdots \otimes \widetilde{X}_{j-1}^{\top}\widetilde{X}_{j-1} \otimes \widetilde{X}_{j+1}^{\top}\widetilde{X}_{j+1}\otimes \cdots \otimes \widetilde{X}_J^{\top}\widetilde{X}_{J}\big)$, and  hence $\Psi_{j, \ell}\Sigma_{-j}^{-1/2}$ has i.i.d. standard normal entries. Applying Lemma \ref{lem:3} to $\Psi_{j, \ell}\Sigma_{-j}^{-1/2}$ and using that $\norm{\Psi_{j, \ell}}_2 \leq \norm{\Psi_{j, \ell}\Sigma_{-j}^{-1/2}}_2\norm{\Sigma_{-j}^{-1/2}}_2 \leq \sigma\norm{\Psi_{j, \ell}\Sigma_{-j}^{-1/2}}_2$ leads to 
\[\norm{Z_{j,\ell}\big(\widetilde{X}_1 \otimes \widetilde{X}_2 \otimes \cdots \otimes \widetilde{X}_{j-1} \otimes \widetilde{X}_{j+1}\otimes \cdots \otimes \widetilde{X}_J\big)}_2 \leq \sigma(\sqrt{p_{j}}+C\sqrt{r_{-j}}+t),\]
with probability at least $1-2e^{-ct^2}$, for some absolute constants $c, C>0$. Using the assumptions $p \asymp p_j$ and $p \gtrsim r^{J-1} \geq r_{-j}$ and letting $t=\gamma'\sqrt{pr}$, we obtain 
\[	\norm{Z_{j,\ell}\big(\widetilde{X}_1 \otimes \widetilde{X}_2 \otimes \cdots \otimes \widetilde{X}_{j-1} \otimes \widetilde{X}_{j+1}\otimes \cdots \otimes \widetilde{X}_J\big)}_2 \leq C_{\gamma}'\sigma\sqrt{pr},\]
with probability at least $1-2e^{-c\gamma'^2pr}$, for any $\gamma'>0$ and some $C_{\gamma}'>0$ depending on $\gamma'$. This further leads to that 
\[	\sup_{\big(\widetilde{X}_{1}, \dots, \widetilde{X}_{J}\big) \in  \mathcal{C}^{p_1, r_1}  \times \dots \times \mathcal{C}^{p_J, r_J}}\norm{Z_{j,\ell}\big(\widetilde{X}_1 \otimes \widetilde{X}_2 \otimes \cdots \otimes \widetilde{X}_{j-1} \otimes \widetilde{X}_{j+1}\otimes \cdots \otimes \widetilde{X}_J\big)}_2 \leq C_{\gamma}'\sigma\sqrt{pr},\]
with probability at least
$1-2(4J+1)^{Jpr}e^{-c\gamma'^2pr}$. By choosing $\gamma'$ sufficiently large and combining the above inequality with \eqref{eq:cover}, we finally obtain that there exist $\widetilde C_1, \widetilde c_1, \widetilde C_2>0$, such that 
\begin{equation}\label{eq:good-3}
	\sup_{\substack{X_{j'} \in \R^{p_{j'} \times r_{j'}}\\ \norm{X_{j'}}_2 \leq 1, j' \in [J] \setminus\{j\}}} \norm{Z_{j,\ell}\big(X_1 \otimes X_2 \otimes \cdots \otimes X_{j-1} \otimes X_{j+1}\otimes \cdots \otimes X_J\big)}_2 \leq 2\widetilde C_2\sigma\sqrt{pr},
\end{equation}
with probability at least $1-\widetilde C_1e^{-\widetilde c_1pr}$. Combining \eqref{eq:good-1}, \eqref{eq:good-2}, \eqref{eq:good-2-ave}, and \eqref{eq:good-3} leads to 
that $\Prob[E(C'_2)] \geq 1-C'_1Le^{-c'_1p}$ for some constants $C'_1, c'_1, C'_2>0$.
\end{proof}
\begin{proof}[Proof of Lemma \ref{lem:2}] 
Under event $E(C_2)$, it holds that $\sup_{j, \ell}\norm{Z_{j, \ell}U_{-j}}_2 \leq C_2\sigma\sqrt{p}$. Combining with $\norm{T_j}_2 = \norm{G_j}_2 \leq \kappa_0\lambda_{\min}$, it is straightforward to show that $\norm{\zeta_{j, \ell, 1}}_2 \leq C_2\kappa_0\lambda_{\min}\sigma\sqrt{p}$ and $\norm{\zeta_{j, \ell, 3}}_2 \leq\norm{Z_{j, \ell}U_{-j}}^2_2\leq C_2^2\sigma^2p$.

We have the following decomposition
\begin{equation}\label{eq:projection_decomp_-j}
	\begin{aligned}
		&\quad \hU^{(0)}_{-j, \ell}\hU_{-j, \ell}^{(0)\top} - U_{-j}U_{-j}^{\top} \\
		&=\sum_{j' \neq j}\bigg[ \left(U_{1}U_{1}^{\top}\right)\otimes \cdots \otimes \left(U_{j'-1}U_{j'-1}^{\top}\right) \otimes  \left(\hU^{(0)}_{j', \ell}\hU_{j', \ell}^{(0)\top} - U_{j'}U_{j'}^{\top}\right)\otimes \left(\hU^{(0)}_{j'+1, \ell}\hU_{j'+1, \ell}^{(0)\top}\right)\\
		&\quad\quad\quad\quad  \otimes \cdots \otimes\left(\hU^{(0)}_{j-1, \ell}\hU_{j-1, \ell}^{(0)\top}\right) \otimes \left(\hU^{(0)}_{j+1, \ell}\hU_{j+1, \ell}^{(0)\top}\right)\otimes \cdots \otimes \left(\hU^{(0)}_{J, \ell}\hU_{J, \ell}^{(0)\top}\right)\bigg]\\
		&=\sum_{j' \neq j}\bigg[ \left(U_{1}U_{1}^{\top}\right)\otimes \cdots \otimes \left(U_{j'-1}U_{j'-1}^{\top}\right) \otimes  \left(\hU^{(0)}_{j', \ell}\hU_{j', \ell}^{(0)\top} - \hU^{(0)}_{j', \ell}B_{j', \ell}A_{j', \ell}^{\top}U_{j'}^{\top}\right)\otimes \left(\hU^{(0)}_{j'+1, \ell}\hU_{j'+1, \ell}^{(0)\top}\right)\\
		&\quad\quad\quad\quad  \otimes \cdots \otimes\left(\hU^{(0)}_{j-1, \ell}\hU_{j-1, \ell}^{(0)\top}\right) \otimes \left(\hU^{(0)}_{j+1, \ell}\hU_{j+1, \ell}^{(0)\top}\right)\otimes \cdots \otimes \left(\hU^{(0)}_{J, \ell}\hU_{J, \ell}^{(0)\top}\right)\\
		&\qquad \qquad+ \left(U_{1}U_{1}^{\top}\right)\otimes \cdots \otimes \left(U_{j'-1}U_{j'-1}^{\top}\right) \otimes  \left(\hU^{(0)}_{j', \ell}B_{j', \ell}A_{j', \ell}^{\top}U_{j'}^{\top}-U_{j'}U_{j'}^{\top}\right)\otimes \left(\hU^{(0)}_{j'+1, \ell}\hU_{j'+1, \ell}^{(0)\top}\right)\\
		&\quad\quad\quad\quad  \otimes \cdots \otimes\left(\hU^{(0)}_{j-1, \ell}\hU_{j-1, \ell}^{(0)\top}\right) \otimes \left(\hU^{(0)}_{j+1, \ell}\hU_{j+1, \ell}^{(0)\top}\right)\otimes \cdots \otimes \left(\hU^{(0)}_{J, \ell}\hU_{J, \ell}^{(0)\top}\right)\bigg]\\
		&=\sum_{j' \neq j}\bigg[ \left(U_{1}\otimes \cdots \otimes U_{j'-1} \otimes \hU^{(0)}_{j', \ell} \otimes  \hU^{(0)}_{j'+1} \otimes \cdots \otimes \hU^{(0)}_{J} \right)\\
		&\qquad\cdot\left(U_{1}^{\top}\otimes \cdots \otimes U_{j'-1}^{\top} \otimes \left(\hU_{j', \ell}^{(0)\top} - B_{j', \ell}A_{j', \ell}^{\top}U_{j'}^{\top}\right) \otimes  \hU^{(0)\top}_{j'+1} \otimes \cdots \otimes \hU^{(0)\top}_{J} \right)\\		
		&\qquad\qquad+\left(U_{1}\otimes \cdots \otimes U_{j'-1} \otimes\left(\hU^{(0)}_{j', \ell}B_{j', \ell}A_{j', \ell}^{\top}-U_{j'}\right) \otimes  \hU^{(0)}_{j'+1} \otimes \cdots \otimes \hU^{(0)}_{J} \right)\\
		&\qquad\cdot\left(U_{1}^{\top}\otimes \cdots \otimes U_{j'-1}^{\top} \otimes U_{j'}^{\top} \otimes  \hU^{(0)\top}_{j'+1} \otimes \cdots \otimes \hU^{(0)\top}_{J} \right)\bigg]k,		
	\end{aligned}
\end{equation}
where $A_{j, \ell} \in \mathbb{O}^{r_j \times r_j}$ and $B_{j, \ell} \in \mathbb{O}^{r_j \times r_j}$ are defined by an SVD for $U_{j}^{\top}\hU^{(0)}_{j, \ell}$, that is,  $U_{j}^{\top}\hU^{(0)}_{j, \ell}=A_{j, \ell}S_{j, \ell}B_{j, \ell}^{\top}$. Note that
\begin{align*}
	&\quad \norm{\hU^{(0)}_{j', \ell}B_{j', \ell}A_{j', \ell}^{\top}-U_{j'}}_2 \\
	&= \norm{\begin{pmatrix}
			U_{j'}^{\top}\\ U_{j'\perp}^{\top}
		\end{pmatrix}\left(\hU^{(0)}_{j', \ell}B_{j', \ell}A_{j', \ell}^{\top}-U_{j'}\right)}_2\\
	&=\norm{\begin{pmatrix}
			U_{j'}^{\top}\hU^{(0)}_{j', \ell}B_{j', \ell}A_{j', \ell}^{\top}-I_{r_{j'}}\\ U_{j'\perp}^{\top}\hU^{(0)}_{j', \ell}B_{j', \ell}A_{j', \ell}^{\top}
	\end{pmatrix}}_2\\
	&\leq \sqrt{\norm{	U_{j'}^{\top}\hU^{(0)}_{j', \ell}B_{j', \ell}A_{j', \ell}^{\top}-I_{r_{j'}}}^2_2 +\norm{U_{j'\perp}^{\top}\hU^{(0)}_{j', \ell}B_{j', \ell}A_{j', \ell}^{\top}}^2_2 } \\
	&\leq \sqrt{\left[1-\lambda_{\min}\left(U_{j}^{\top}\hU^{(0)}_{j, \ell}\right)\right]^2+ 1-\lambda^2_{\min}\left(U_{j}^{\top}\hU^{(0)}_{j, \ell}\right)}\\
	& \leq \sqrt{2} \sqrt{1-\lambda^2_{\min}\left(U_{j}^{\top}\hU^{(0)}_{j, \ell}\right)}\leq\sqrt{2}\norm{\hU_{j, \ell}^{(0)}\hU_{j, \ell}^{(0)\top}-U_{j}U_{j}^{\top}}_2,
\end{align*}
where we sequentially use the following properties:

(1) For all block matrix $[A \; B]$, we have
\[\norm{[A \; B]}_2 = \sup_{\norm{\bm{v}}_2=1} \norm{[A \; B] \bm{v}}_2 = \sup_{\norm{\bm{v}_1}^2_2+\norm{\bm{v}_2}^2_2=1} \norm{A\bm{v}_1}_2 + \norm{B\bm{v}_2}_2 \leq \sqrt{\norm{A}_2^2 + \norm{B}_2^2}.\]

(2) For all $U, \hU \in \mathbb{O}^{p \times r}$, we have
\[\norm{U_{\perp}^{\top}\hU}_2 = \sup_{\norm{\bm{v}}_2=1} \norm{U_{\perp}^{\top}\hU \bm{v}}_2 = \sup_{\norm{\bm{v}}_2=1} \sqrt{\norm{\hU \bm{v}}^2_2-\norm{U^{\top}\hU \bm{v}}^2_2}=\sqrt{1-\lambda^2_{\min}\left(U^{\top}\hU\right)}.\]

(3) As $\norm{U_{j}^{\top}\hU^{(0)}_{j, \ell}}_2 \leq 1$, we have that $\lambda_{\min}\left(U_{j}^{\top}\hU^{(0)}_{j, \ell}\right) \leq 1$, and hence 
\[\sqrt{\left[1-\lambda_{\min}\left(U_{j}^{\top}\hU^{(0)}_{j, \ell}\right)\right]^2+ 1-\lambda^2_{\min}\left(U_{j}^{\top}\hU^{(0)}_{j, \ell}\right)} \leq \sqrt{2} \sqrt{1-\lambda^2_{\min}\left(U_{j}^{\top}\hU^{(0)}_{j, \ell}\right)}.\]

(4) For all $U, \hU \in \mathbb{O}^{p \times r}$,
\begin{equation}\label{eq:UperphU}
	\norm{U_{\perp}^{\top}\hU}_2 = \norm{U_{\perp}^{\top}\hU\hU^{\top}}_2 = \norm{U_{\perp}^{\top}UU^{\top}-U_{\perp}^{\top}\hU\hU^{\top}}_2 \leq \norm{UU^{\top}-\hU\hU^{\top}}_2.
\end{equation}

Since under $E(C_2)$, it holds that $\sup_{j, \ell}\norm{\hU_{j, \ell}^{(0)}\hU_{j, \ell}^{(0)\top}-U_{j, \ell}U_{j, \ell}^{\top}}_2 \leq C_2\sqrt{p}\sigma\lambda_{\min}^{-1}$, then we obtain that 
\[\sup_{j, \ell} \norm{\hU^{(0)}_{j', \ell}B_{j', \ell}A_{j', \ell}^{\top}-U_{j'}}_2 \leq C_2\sqrt{2}\sqrt{p}\sigma\lambda_{\min}^{-1}.\]

Combining the above inequality with \eqref{eq:projection_decomp_-j} and using that $\norm{T_j}_2 \leq \kappa_0\lambda_{\min}$ and 
\[\sup_{\substack{X_{j'} \in \R^{p_{j'} \times r_{j'}}\\ \norm{X_{j'}}_2 \leq 1, j'\in [J]\setminus \{j\}}} \norm{Z_{j,\ell}(X_1 \otimes X_2 \otimes \cdots \otimes X_{j-1} \otimes X_{j+1}\otimes \cdots \otimes X_J)}_2 \leq C_2\sigma\sqrt{pr}, \] 
we obtain that
\[
\sup_{j, \ell}\norm{\zeta_{j, \ell, 2}}_2 \leq \sup_{j, \ell} \norm{T_{j} \left[\hU^{(0)}_{-j,\ell}\hU_{-j,\ell}^{(0)\top}-U_{-j}U_{-j}^{\top}\right]Z^{\top}_{j,\ell}}_2 \lesssim  \kappa_0p\sigma^2 \sqrt{r},
\]
and
\[
\sup_{j, \ell}\norm{\zeta_{j, \ell, 4}}_2 \leq \sup_{j, \ell} \norm{Z_{j, \ell} \left[\hU^{(0)}_{-j,\ell}\hU_{-j,\ell}^{(0)\top}-U_{-j}U_{-j}^{\top}\right]Z^{\top}_{j,\ell}}_2 \lesssim  p^{3/2}r\sigma^3\lambda_{\min}^{-1}.
\]

Moreover, since	
\begin{align*}
	&\quad \hU^{(0)}_{-j, \ell}\hU_{-j, \ell}^{(0)\top} - U_{-j}U_{-j}^{\top} \\
	&=\sum_{j' \neq j}\bigg[ \left(U_{1}U_{1}^{\top}\right)\otimes \cdots \otimes \left(U_{j'-1}U_{j'-1}^{\top}\right) \otimes  \left(\hU^{(0)}_{j', \ell}\hU_{j', \ell}^{(0)\top} - U_{j'}U_{j'}^{\top}\right)\otimes \left(\hU^{(0)}_{j'+1, \ell}\hU_{j'+1, \ell}^{(0)\top}\right)\\
	&\quad\quad\quad\quad  \otimes \cdots \otimes\left(\hU^{(0)}_{j-1, \ell}\hU_{j-1, \ell}^{(0)\top}\right) \otimes \left(\hU^{(0)}_{j+1, \ell}\hU_{j+1, \ell}^{(0)\top}\right)\otimes \cdots \otimes \left(\hU^{(0)}_{J, \ell}\hU_{J, \ell}^{(0)\top}\right)\bigg],
\end{align*}
and 
$T_j = U_j G_j U_{-j}^{\top}$, we have that
\begin{align*}
	&\quad T_j \left[\hU^{(0)}_{-j, \ell}\hU_{-j, \ell}^{(0)\top} - U_{-j}U_{-j}^{\top}\right]T_j^{\top} \\
	&=U_j G_j\sum_{j' \neq j}\bigg[ \left(I_{r_1}\right)\otimes \cdots \otimes \left(I_{r_{j'-1}}\right) \otimes  \left(U_{j'}^{\top}\hU^{(0)}_{j', \ell}\hU_{j', \ell}^{(0)\top}U_{j'}-I_{r_{j'}}\right)\otimes \left(U_{j'+1}^{\top}\hU^{(0)}_{j'+1, \ell}\hU_{j'+1, \ell}^{(0)\top}U_{j'+1}\right)\\
	&\quad\quad  \otimes \cdots \otimes\left(U_{j-1}^{\top}\hU^{(0)}_{j-1, \ell}\hU_{j-1, \ell}^{(0)\top}U_{j-1}\right) \otimes \left(U_{j+1}^{\top}\hU^{(0)}_{j+1, \ell}\hU_{j+1, \ell}^{(0)\top}U_{j+1}\right)\otimes \cdots \otimes \left(U_{J}^{\top}\hU^{(0)}_{J, \ell}\hU_{J, \ell}^{(0)\top}U_{J}\right)\bigg] G_j^{\top}U_j^{\top}.
\end{align*}

Note that 
\[U_{j'}^{\top}\hU^{(0)}_{j', \ell}\hU_{j', \ell}^{(0)\top}U_{j'}-I_{r_{j'}} = U_{j'}^{\top}\left(\hU^{(0)}_{j', \ell}\hU_{j', \ell}^{(0)\top}-I_{r_{j'}}\right)U_{j'} = U_{j'}^{\top}\hU^{(0)}_{j', \ell, \perp}\hU_{j', \ell,\perp}^{(0)\top}U_{j'},\]
and by \eqref{eq:UperphU}, 
\[\norm{U_{j'}^{\top}\hU^{(0)}_{j', \ell, \perp}\hU_{j', \ell,\perp}^{(0)\top}U_{j'}}_2 \leq \norm{U_{j'}^{\top}\hU^{(0)}_{j', \ell, \perp}}_2^2 \leq \norm{U_{j'}U_{j'}^{\top}-\hU^{(0)}_{j', \ell}\hU^{(0)\top}_{j', \ell}}_2^2 \lesssim p\sigma^2\lambda_{\min}^{-2}.\]
Therefore,
\[\sup_{j, \ell}\norm{\zeta_{j, \ell, 5}}_2 \leq \sup_{j, \ell} \norm{T_{j} \left[\hU^{(0)}_{-j,\ell}\hU_{-j,\ell}^{(0)\top}-U_{-j}U_{-j}^{\top}\right]T^{\top}_{j}}_2 \lesssim \kappa_0^2p\sigma^2.\]
\end{proof}

\section{Statistical Inference in the Distributed Environment}\label{sec:asymp}

In this section, we provide theoretical analysis for the asymptotic distribution of our proposed distributed method in Algorithm 1 to feature statistical inference of the singular spaces. To establish the asymptotic distribution, we develop a two-iteration distributed procedure that obtains refined local estimators using each individual tensor and then aggregates them by averaging the projection matrices.  By establishing the asymptotic distribution, we provide a concise analysis of how aggregation helps to improve the statistical efficiency in the distributed environment. 

The two-iteration distributed procedure is formally displayed in Algorithm \ref{alg:inf}, and we give the asymptotic distribution of the estimation error $\rho(\hU_j, U_j)$ in the following theorem.


\begin{thm}
	Assume the assumptions in Theorem 2.1 hold. Further assume that $\lambda_{\min}/ \sigma \gtrsim L^{1/2}(pr)^{3/4}$
	and $\max(r^{3}, r^{J-1})/p=o(1)$. Then we have
	\begin{equation}\label{eq:normal}
		\frac{\rho^2\big(\hU_j, U_j\big)  - 2\sigma^2L^{-1}p_j\norm{\Lambda^{-1}_j}_{\rm F}^2}{\sqrt{8p_j}\sigma^2L^{-1}\norm{\Lambda_j^{-2}}_{\rm F}} \stackrel{d}{\to} \calN (0, 1),
	\end{equation}
	for $j\in[J]$, where $\hU_j$ is the output of Algorithm \ref{alg:inf}, and $\Lambda_j$ denotes
	$r_j \times r_j$ singular value matrix of $\calM_j(\calG)$.
	\label{thm:inference}
\end{thm}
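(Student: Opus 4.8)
The plan is to reduce \eqref{eq:normal} to a central limit theorem for a quadratic form in Gaussian variables, once a sufficiently accurate linearization of $\hU_j\hU_j^\top-U_jU_j^\top$ is available. The role of the two local iterations in Algorithm \ref{alg:inf} is to sharpen the one-step expansion used for Theorem 2.1: running the entrywise perturbation expansion \eqref{eq:rho_decomp_thm-1}--\eqref{eq:def-Sjlt12-1} a second time on each machine---and using that the once-refined initialization error already has the linear-in-noise form of $\frakS_{j',\ell,1}$ for the modes $j'\neq j$, so that the terms $\zeta_{j,\ell,2},\zeta_{j,\ell,4},\zeta_{j,\ell,5}$ become mean-zero bilinear/trilinear forms of Gaussians---I would show that each twice-refined local estimator satisfies $\hU_{j,\ell}\hU_{j,\ell}^\top-U_jU_j^\top=\frakS_{j,\ell,1}+\frakR_{j,\ell}$ with $\frakS_{j,\ell,1}$ the linear term in \eqref{eq:def-Sjlt12-1} and $\norm{\frakR_{j,\ell}}_{\rm F}\lesssim (pr)^{3/2}\sigma^3\lambda_{\min}^{-3}$, uniformly in $\ell$, on an event of probability $1-\widetilde C_1e^{-\widetilde c_1 p}$ (this step follows the strategy of \citealp{xia2022inference}, tracked uniformly over the $L$ machines). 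Averaging over $\ell$ and then handling the final $\mathrm{svd}_{r_j}$ aggregation through \eqref{eq:decomp-after-SVD-thm-1}, and noting that
\[
\overline{\frakS_{j,1}}:=\frac1L\sum_{\ell=1}^L\frakS_{j,\ell,1}=U_j\Lambda_j^{-2}G_jU_{-j}^\top\overline Z_j^\top U_{j\perp}U_{j\perp}^\top+U_{j\perp}U_{j\perp}^\top\overline Z_jU_{-j}G_j^\top\Lambda_j^{-2}U_j^\top
\]
already lies entirely in the off-diagonal blocks (with $\overline Z_j=L^{-1}\sum_\ell Z_{j,\ell}$ having i.i.d.\ $\calN(0,\sigma^2/L)$ entries), I obtain $\hU_j\hU_j^\top-U_jU_j^\top=\overline{\frakS_{j,1}}+\widetilde\frakR_j$, where $\widetilde\frakR_j$ collects the averaged cubic remainders and the $\mathrm{svd}_{r_j}$-correction terms, the latter being quadratic in $\overline{\frakS_{j,1}}$ and supported on the diagonal blocks $U_jU_j^\top(\cdot)U_jU_j^\top$ and $U_{j\perp}U_{j\perp}^\top(\cdot)U_{j\perp}U_{j\perp}^\top$.

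Expanding $\rho^2(\hU_j,U_j)=\norm{\overline{\frakS_{j,1}}}_{\rm F}^2+2\angles{\overline{\frakS_{j,1}},\widetilde\frakR_j}+\norm{\widetilde\frakR_j}_{\rm F}^2$, I would next show that the last two summands are $o_{\Prob}\big(\sqrt{p_j}\,\sigma^2L^{-1}\norm{\Lambda_j^{-2}}_{\rm F}\big)$, the target fluctuation scale. The $\mathrm{svd}_{r_j}$-correction part of $\widetilde\frakR_j$ is Frobenius-orthogonal to $\overline{\frakS_{j,1}}$ by the block structure, so it contributes only through $\norm{\cdot}_{\rm F}^2$; the remaining contributions are controlled by the crude bounds $\norm{\overline{\frakS_{j,1}}}_{\rm F}\lesssim\sqrt{pr/L}\,\sigma\lambda_{\min}^{-1}$ and $\norm{\widetilde\frakR_j}_{\rm F}\lesssim (pr)^{3/2}\sigma^3\lambda_{\min}^{-3}+p\sqrt r\,\sigma^2L^{-1}\lambda_{\min}^{-2}$ (both holding on the Lemma \ref{lem:1}-type event), together with the strengthened conditions $\lambda_{\min}/\sigma\gtrsim L^{1/2}(pr)^{3/4}$ and $\max(r^3,r^{J-1})/p=o(1)$, which make every cross and remainder term negligible against the fluctuation scale.

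The heart of the argument is $\norm{\overline{\frakS_{j,1}}}_{\rm F}^2$. Because its two blocks are mutual transposes on orthogonal block positions, $\norm{\overline{\frakS_{j,1}}}_{\rm F}^2=2\norm{U_{j\perp}^\top\overline Z_jU_{-j}\,G_j^\top\Lambda_j^{-2}}_{\rm F}^2$. Put $W:=U_{j\perp}^\top\overline Z_jU_{-j}\in\R^{(p_j-r_j)\times r_{-j}}$; since $U_{j\perp},U_{-j}$ have orthonormal columns, $W$ has i.i.d.\ $\calN(0,\sigma^2/L)$ entries. Recalling the normalization $G_jG_j^\top=\Lambda_j^2$ (so $G_j=\Lambda_jV_{G_j}^\top$ with $V_{G_j}^\top V_{G_j}=I_{r_j}$), one has $M:=G_j^\top\Lambda_j^{-4}G_j=V_{G_j}\Lambda_j^{-2}V_{G_j}^\top$, whose nonzero eigenvalues are exactly those of $\Lambda_j^{-2}$, so $\tr(M)=\norm{\Lambda_j^{-1}}_{\rm F}^2$ and $\tr(M^2)=\norm{\Lambda_j^{-2}}_{\rm F}^2$. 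Hence $\norm{\overline{\frakS_{j,1}}}_{\rm F}^2=2\tr(WMW^\top)=\tfrac{2\sigma^2}{L}\sum_{k=1}^{p_j-r_j}\xi_k$, where $\xi_k=\widetilde w_k^\top M\widetilde w_k$ with $\widetilde w_k\sim\calN(0,I_{r_{-j}})$ i.i.d.\ across the $p_j-r_j$ rows of $W$, so $\E\xi_k=\norm{\Lambda_j^{-1}}_{\rm F}^2$ and $\mathrm{Var}(\xi_k)=2\norm{\Lambda_j^{-2}}_{\rm F}^2$. The Lyapunov CLT applies to this i.i.d.\ sum (the Lyapunov ratio is $O(p_j^{-1/2})$, using $\kappa_0=O(1)$ to bound $\tr(M^3)/(\tr M^2)^{3/2}$), and $r_j/p_j\to0$ together with $r_j^3/p_j\to0$ let me replace $p_j-r_j$ by $p_j$ in the centering and scaling, giving
\[
\frac{\tfrac{2\sigma^2}{L}\sum_{k=1}^{p_j-r_j}\xi_k-\tfrac{2\sigma^2}{L}p_j\norm{\Lambda_j^{-1}}_{\rm F}^2}{\sqrt{8p_j}\,\sigma^2L^{-1}\norm{\Lambda_j^{-2}}_{\rm F}}\stackrel{d}{\to}\calN(0,1).
\]
Together with the previous paragraph and Slutsky's theorem, this yields \eqref{eq:normal}.

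The main obstacle is the first step: establishing the cubic-order linearization for the twice-iterated local estimators \emph{without} sample splitting, uniformly over the $L$ machines. This requires iterating the perturbation expansion carefully and verifying that the second pass genuinely converts the $O(pr\,\sigma^2\lambda_{\min}^{-2})$ term left after one iteration into an $O((pr)^{3/2}\sigma^3\lambda_{\min}^{-3})$ remainder---exploiting cancellations in the mean-zero bilinear/trilinear noise forms produced by the refined initialization---while keeping track of the dependence between each local noise $\calZ_\ell$ and the local initialization built from $\calT_\ell$. The remaining steps (block-orthogonality bookkeeping and the Lyapunov CLT) are routine.
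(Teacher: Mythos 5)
Your overall architecture---linearize $\hU_j\hU_j^\top-U_jU_j^\top$, expand $\rho^2$, reduce the leading term to a sum of i.i.d.\ quadratic forms, and apply a CLT---is the same as the paper's, and your third paragraph (computing $\norm{\overline{\frakS_{j,1}}}_{\rm F}^2$, passing through $G_j^\top\Lambda_j^{-4}G_j=V_{G_j}\Lambda_j^{-2}V_{G_j}^\top$, and invoking Lyapunov) matches the paper's final step almost verbatim. The block-orthogonality observation that lets the $\mathrm{svd}_{r_j}$ correction enter only through its squared norm is also used in the paper.

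However, there is a genuine gap in the middle. Your claim that the twice-iterated local estimator satisfies
$\hU_{j,\ell}\hU_{j,\ell}^\top - U_jU_j^\top = \frakS_{j,\ell,1} + \frakR_{j,\ell}$
with $\norm{\frakR_{j,\ell}}_{\rm F}\lesssim (pr)^{3/2}\sigma^3\lambda_{\min}^{-3}$ is false. The second-order block $\frakS^{(2)}_{j,\ell,2}$ survives the second iteration: it contains, among other things, the term
$U_j\Lambda_j^{-2}U_j^\top \, Z_{j,\ell}U_{-j}U_{-j}^\top Z_{j,\ell}^\top \, U_{j\perp}U_{j\perp}^\top$
coming from $\zeta_{j,\ell,3}$, which does not involve the initialization at all and hence cannot be reduced by further iterations; it remains of spectral order $p\sigma^2\lambda_{\min}^{-2}$ and Frobenius order $p\sqrt r\,\sigma^2\lambda_{\min}^{-2}$. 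Likewise $\zeta^{(2)}_{j,\ell,2}$, even after the initialization is itself linearized, is a bilinear form of spectral order $\kappa_0 p\sqrt r\,\sigma^2$. Being mean-zero does not shrink the Frobenius norm of these matrices; it only changes their correlation with other quantities. So the remainder after two iterations is genuinely of order $pr\sigma^2\lambda_{\min}^{-2}$ in Frobenius norm, one full power of $\sqrt{pr}\sigma/\lambda_{\min}$ larger than what you asserted.

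This kills the Cauchy--Schwarz step. With the correct remainder order, the cross term is bounded by
$\norm{\overline{\frakS_{j,1}}}_{\rm F}\cdot\norm{\widetilde\frakR_j}_{\rm F} \lesssim \sqrt{pr/L}\,\sigma\lambda_{\min}^{-1}\cdot pr\sigma^2\lambda_{\min}^{-2} = (pr)^{3/2}L^{-1/2}\sigma^3\lambda_{\min}^{-3}$,
and comparing with the fluctuation scale $\sqrt{pr}\,\sigma^2 L^{-1}\lambda_{\min}^{-2}$ leaves a ratio of $pr\,L^{1/2}\sigma/\lambda_{\min}$, which is $o(1)$ only under $\lambda_{\min}/\sigma\gg L^{1/2}pr$---strictly stronger than the assumed $\lambda_{\min}/\sigma\gtrsim L^{1/2}(pr)^{3/4}$. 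The point of the two-iteration construction is not to push the remainder down to cubic order; it is to make the surviving second-order terms expressible as conditionally mean-zero bilinear/trilinear Gaussian forms, so that the \emph{inner products} $\angles{\overline{\frakS_{j,1}},\overline{\frakS^{(2)}_{j,2}}}$ concentrate far below their Cauchy--Schwarz ceilings. The paper executes this by conditioning on $\{\tZ_{j',\ell,a},\tZ_{j,\ell,c}\}_\ell$ (exploiting that $\tZ_{j',\ell,b}$ is independent of $\tZ_{j,\ell,c}$ because $U_{j'\perp}^\top U_{j'}=0$) and bounding the cross terms $\mathfrak{M}_{j'}$, $\widetilde{\mathfrak{M}}_0$, $\widetilde{\mathfrak{M}}_1$, $\widetilde{\mathfrak{M}}_2$ by $O\!\big(p\sigma^3\sqrt{r\log p}/(L\lambda_{\min}^3)\big)$, which is negligible under the stated SNR condition. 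Your outline gestures at this phenomenon (``exploiting cancellations in the mean-zero bilinear/trilinear noise forms'') but misattributes it to the remainder's Frobenius norm rather than to the inner product; without the concentration argument, the proof does not go through under the stated hypotheses.
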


The technical reason behind the two-iteration distributed procedure is to ensure a more precise quantification of the local estimation error. Recall that in Theorem 2.1, we assume the initial estimators $\big\{\hU^{(0)}_{j,\ell}\big\}$ have an error rate of the order $O(\sqrt{p}\sigma\lambda_{\min}^{-1})$.  However, to establish the asymptotic normality, we need a finer requirement for the initial estimation error, which may not be satisfied by $\big\{\hU_{j,\ell}^{(0)}\big\}$ but is satisfied by the local estimators $\{\hU_{j,\ell}\}$ obtained in Algorithm 1.  Therefore, we add one more iteration for each individual tensor in Algorithm 1 to obtain refined local estimators.  This two-iteration strategy was first investigated by a recent work \cite{xia2022inference} for single tensor decomposition. In Theorem \ref{thm:inference}, we further provide the asymptotic distribution for our distributed estimator and show that aggregation in a distributed setting helps improve the asymptotic Mean Squared Error (MSE) from a local rate $O(p^2r^2\sigma^4\lambda_{\min}^{-4})$ to the global optimal rate $O(L^{-2}p^2r^2\sigma^4\lambda_{\min}^{-4})$, when the SNR is sufficiently large.

Concretely, Theorem \ref{thm:inference} shows that, when the SNR satisfies $\lambda_{\min}/ \sigma \gtrsim L^{1/2}(pr)^{3/4}$, the squared distance $\rho^2(\hU_j, U_j)$ has an asymptotic bias $2\sigma^2L^{-1}p_j\big\|\Lambda_j^{-1}\big\|_{\rm F}^2$ and an asymptotic standard deviation $\sqrt{8p_j}\sigma^2L^{-1}\big\|\Lambda_j^{-2}\big\|_{\rm F}$. The SNR condition may be further weakened, which we leave for future investigation. Since $\big\|\Lambda_j^{-1}\big\|_{\rm F} \leq \sqrt{r}\big\|\Lambda_j^{-1}\big\|_2 \leq \sqrt{r} \lambda_{\min}^{-1}$,  the asymptotic MSE of $\rho^2(\hU_j, U_j)$ is of the order $O(p^2r^2\sigma^4L^{-2}\lambda_{\min}^{-4})$, which is consistent with the error rate established in Theorem 2.1. Moreover, the pooled estimator $\hU_{\mathrm{pooled}, j}$ (defined in Section 2.1) has the same asymptotic distribution as \eqref{eq:normal}, indicating that our estimator achieves the optimal asymptotic MSE we can obtain in the distributed setting.  

Based on Theorem \ref{thm:inference}, we can further construct confidence regions for $U_j$ using the proposed estimator $\hU_j$. Specifically, given a pre-specified level $1-\xi$,  we construct the  confidence region for $U_j$ as follows,
\begin{equation}\label{eq:confidence-region}
	\Big\{U \in \mathbb{O}^{p_j \times r_j}:\Big|\rho^2(\hU_j, U) -2\widehat{\sigma}^2L^{-1}p_j\big\|\widehat \Lambda_j^{-1}\big\|_{\rm F}^2\Big| \leq z_{1-\frac{\xi}{2}} \sqrt{8p_j}\widehat{\sigma}^2L^{-1}\big\|\widehat \Lambda_j^{-2}\big\|_{\rm F}\Big\},
\end{equation}
where $z_{1-\frac{\xi}{2}}$  denotes the $\big(1-\frac{\xi}{2}\big)$-th quantile of a standard normal distribution, and $\widehat\sigma$ and $\widehat\Lambda_j$ are consistent estimators for $\sigma$ and $\Lambda_j$.  In practice, the noise level $\sigma$ can be estimated by \[\widehat{\sigma} =  \big\|\calT_{\ell}-\calT_{\ell} \times_1 \hU_{1} \times_2 \hU_{2}\cdots\times_J\hU_{J}\big\|_{\rm F} / \sqrt{p_1p_2\cdots p_J}, \] 
and the singular value matrix $\Lambda_j$ can be estimated by
\[\widehat{\Lambda}_{j} = \text{the top $r_j$ singular values of } \calM_j\big(\calT_{\ell} \times_{1} \hU_{1}^{\top} \times_{2} \hU_{2}^{\top} \cdots \times_{j-1}\hU_{j-1}^{\top} \times_{j+1}\hU_{j+1}^{\top}\cdots \times_{J}\hU_{J}^{\top}\big),\] 
on any machine $\ell$. 

\begin{algorithm}[!t]
	\spacingset{1.2}
	\caption{Distributed Tensor PCA for Inference} 
	\label{alg:inf} 
	\vspace*{0.08in} {\bf Input:}
	Tensors distributed on local machines $\{\calT_{\ell}\}$ and initial estimators $\big\{\hU_{1,\ell}^{(0)}, \hU_{2,\ell}^{(0)}, \dots, \hU_{J,\ell}^{(0)}\big\}$, for all $\ell \in [L]$. 
	
	{\bf Output:}  Estimators $\big\{\hU_1, \hU_2, \dots, \hU_J\big\}$.
	\begin{algorithmic}[1]  
		\For{$\ell=1,2,\dots,L$, $j=1,2,\dots,J$, $t=1,2$} \do \\
		\State Compute a local estimator 
		\[
		\hU^{(t)}_{j,\ell} = \mathrm{svd}_{r_j}\Big[\calM_j\big(\calT_{\ell} \times_{1} \hU_{1,\ell}^{(t-1)\top} \times_{2} \hU_{2,\ell}^{(t-1)\top} \cdots \times_{j-1}\hU_{j-1,\ell}^{(t-1)\top} \times_{j+1}\hU_{j+1,\ell}^{(t-1)\top}\cdots \times_{J}\hU_{J,\ell}^{(t-1)\top}\big)\Big];
		\]
	\EndFor
	\State Send  $\big\{\hU^{(2)}_{j,\ell}\big\}_{j \in [J], \ell \in [L]}$ to the central machine; 
	
	\For{$j=1,2,\dots, J$}
	\State On the central machine, compute  $\hU_j = \mathrm{svd}_{r_j}\Big[\frac{1}{L}\sum\limits_{\ell=1}^{L}\hU^{(2)}_{j, \ell}\hU^{(2)\top}_{j, \ell}\Big]$;
	\EndFor
\end{algorithmic} 
\end{algorithm}

\subsection{Proof of the Results for Distributed Inference}
\begin{proof}[Proof for Theorem \ref{thm:inference}]
	
	Following the proof of Theorem 2.1, for $j \in [J]$, $t \in \{1, 2\}$, and $\ell \in [L]$, the estimator
	$\hU_{j,\ell}^{(t)}$ is composed of the first $r_j$  eigenvectors of
	\begin{equation}
		\quad (T_j + Z_{j, \ell}) \hU^{(t-1)}_{-j}\hU_{-j}^{(t-1)\top}(T^{\top}_j + Z^{\top}_{j, \ell})= T_jU_{-j}U_{-j}^{\top} T_j^{\top} +  \frakE^{(t)}_{j, \ell}= U_j\Lambda_j^2U_j^{\top} +  \frakE^{(t)}_{j, \ell},
		\label{eq:decomp_before_SVD}
	\end{equation}
	where $\frakE^{(t)}_{j, \ell}$ is a remainder term defined by
	\begin{equation}
		\begin{aligned}
			\frakE^{(t)}_{j, \ell}&:=\zeta_{j,\ell,1}+\zeta^{\top}_{j,\ell,1}+\zeta^{(t)}_{j,\ell,2}+\zeta^{(t)\top}_{j,\ell,2}+\zeta_{j,\ell,3}+\zeta^{(t)}_{j,\ell,4}+\zeta^{(t)}_{j,\ell,5},\\
			\zeta_{j,\ell,1}&:=T_{j} U_{-j}U_{-j}^{\top}Z^{\top}_{j,\ell}, \\
			\zeta^{(t)}_{j,\ell,2}&:= T_{j} \left[\hU^{(t-1)}_{-j}\hU^{(t-1)\top}_{-j}-U_{-j}U_{-j}^{\top}\right]Z^{\top}_{j,\ell},\\
			\zeta_{j,\ell,3}&:= Z_{j,\ell}U_{-j}U_{-j}^{\top}Z_{j,\ell}^{\top},\\
			\zeta^{(t)}_{j,\ell,4}&:= Z_{j,\ell}\left[\hU^{(t-1)}_{-j}\hU^{(t-1)\top}_{-j}-U_{-j}U_{-j}^{\top}\right]Z_{j,\ell}^{\top},\\
			\zeta^{(t)}_{j,\ell,5}&:=T_j \left[\hU^{(t-1)}_{-j}\hU^{(t-1)\top}_{-j}-U_{-j}U_{-j}^{\top}\right] T_j^{\top}.
		\end{aligned}
		\label{eq:frakE_decomp}
	\end{equation} 
	where 	\[U_{-j} = U_1 \otimes U_2 \otimes \cdots \otimes U_{j-1} \otimes U_{j+1} \otimes \cdots \otimes U_{J},\] 
	\[\hU^{(t)}_{-j} = \hU^{(t)}_1 \otimes \hU^{(t)}_2 \otimes \cdots \otimes \hU^{(t)}_{j-1} \otimes \hU^{(t)}_{j+1} \otimes \cdots \otimes \hU^{(t)}_{J} .\] 
	
	For each $j\in [J]$, $\ell\in [L]$, define a ``locally-good" event: 
	\begin{equation}\label{eq:good-two}
		\begin{aligned}
			&\quad \widetilde{E}_{j,\ell}(C)\\
			&:=\bigg\{\norm{\hU_{j,\ell}^{(0)}\hU_{j,\ell}^{(0)\top}-U_{j}U_{j}^{\top}}_2 \leq C \sqrt{p}\sigma\lambda_{\min}^{-1}, \, \norm{\hU_{j,\ell}^{(1)}\hU_{j,\ell}^{(1)\top}-U_{j}U_{j}^{\top}}_2 \leq C \sqrt{p}\sigma\lambda_{\min}^{-1}, \, 
			\norm{Z_{j,\ell}U_{-j}}_2 \leq C\sigma\sqrt{p}, \\ &\quad \sup_{\substack{X_{j'} \in \R^{p_{j'} \times r_{j'}}\\ \norm{X_{j'}}_2 \leq 1, j'\in [J]\setminus \{j\}}} \norm{Z_{j,\ell}(X_1 \otimes X_2 \otimes \cdots \otimes X_{j-1} \otimes X_{j+1}\otimes \cdots \otimes X_J)}_2 \leq C\sigma\sqrt{pr} \bigg\}.
		\end{aligned}
	\end{equation}
	and a ``globally-good'' event
	\[	\widetilde{E}(C):=\bigcap_{\ell=1}^L\bigcap_{j=1}^J E_{j,\ell}(C).\]
	By the proof of Lemma \ref{lem:1} and Theorem 2.1, we have $\Prob[\widetilde{E}(C_2)] \geq 1-C_1e^{-c_1p}$ for some constants $C_1, c_1, C_2$.
	By Lemma \ref{lem:2}, it holds that
	\begin{equation}
		\begin{aligned}
			&\norm{\zeta_{j,\ell,1}}_2 \leq C_2' \kappa_0\lambda_{\min}\sigma\sqrt{p}, \quad  \norm{\zeta^{(t)}_{j,\ell,2}}_2\leq C_2' \kappa_0p\sigma^2\sqrt{r},\\ &\norm{\zeta_{j,\ell,3}}_2 \leq C_2'p\sigma^2, \quad \norm{\zeta^{(t)}_{j,\ell, 4}}_2 \leq C_2' p^{3/2}\sqrt{r}\sigma^3\lambda_{\min}^{-1}, \quad  \norm{\zeta^{(t)}_{j,\ell, 5}}_2 \leq C_2' \kappa_0^2p\sigma^2,
		\end{aligned}
		\label{eq:frakE_bound} 
	\end{equation}
	for some absolute constant $C_2'>0$, under the event $\widetilde{E}(C_2)$.
	Moreover, similar to 	\eqref{eq:rho_decomp_thm-1}, we have
	\begin{equation}
		\begin{aligned}
			\hU^{(t)}_{j,\ell}\hU^{(t)\top}_{j,\ell} - U_jU_j^{\top} &= U_j\Lambda_{j}^{-2}U_j^{\top}\frakE^{(t)}_{j, \ell}U_{j\perp}U_{j\perp}^{\top} + U_{j\perp}U_{j\perp}^{\top}\frakE^{(t)}_{j, \ell}U_j\Lambda_{j}^{-2}U_j^{\top} \\
			&\quad-U_j\Lambda_{j}^{-2}U_j^{\top}\frakE^{(t)}_{j,\ell}U_{j\perp}U_{j\perp}^{\top}\frakE^{(t)}_{j, \ell}U_j\Lambda_{j}^{-2}U_j^{\top} 
			-U_j\Lambda_{j}^{-2}U_j^{\top}\frakE^{(t)}_{j,\ell}U_j\Lambda_{j}^{-2}U_j^{\top}\frakE^{(t)}_{j, \ell}U_{j\perp}U_{j\perp}^{\top}\\
			&\quad-U_{j\perp}U_{j\perp}^{\top}\frakE^{(t)}_{j,\ell}U_j\Lambda_{j}^{-2}U_j^{\top}\frakE^{(t)}_{j, \ell}U_j\Lambda_{j}^{-2}U_j^{\top}+U_j\Lambda_{j}^{-4}U_j^{\top}\frakE^{(t)}_{j,\ell}U_{j\perp}U_{j\perp}^{\top}\frakE^{(t)}_{j,\ell}U_{j\perp}U_{j\perp}^{\top}\\
			&\quad+U_{j\perp}U_{j\perp}^{\top}\frakE^{(t)}_{j,\ell}U_{j\perp}U_{j\perp}^{\top}\frakE^{(t)}_{j,\ell}U_j\Lambda_{j}^{-4}U_j^{\top}+U_{j\perp}U_{j\perp}^{\top}\frakE^{(t)}_{j,\ell}U_j\Lambda_{j}^{-4}U_j^{\top}\frakE^{(t)}_{j,\ell}U_{j\perp}U_{j\perp}^{\top}+\frakR^{(t)}_{j, \ell},
		\end{aligned}
		\label{eq:rho_decomp_thm}
	\end{equation}
	where $\norm{\frakR^{(t)}_{j, \ell}}_2 \lesssim \kappa_0^3\sigma^3p^{3/2}/\lambda_{\min}^3$.
	Then plugging \eqref{eq:frakE_decomp} into \eqref{eq:rho_decomp_thm} and using the fact that $U^{\top}_{j\perp}T_j=0$, we obtain
	\begin{equation}
		\hU^{(t)}_{j,\ell}\hU^{(t)\top}_{j,\ell} - U_jU_j^{\top}
		=\frakS^{(t)}_{j, \ell, 1}+\frakS^{(t)}_{j, \ell, 2}+\frakS^{(t)}_{j, \ell, 3},
		\label{eq:rho_decomp_plugin}
	\end{equation}
	where for $j \in [J]$ and $t=1, 2$,
	\begin{equation}
		\label{eq:def-Sjlt12}
		\begin{aligned}
			\frakS_{j, \ell, 1}&:=U_j\Lambda_{j}^{-2}U_j^{\top}\zeta_{j, \ell,1}U_{j\perp}U_{j\perp}^{\top} +U_{j\perp}U_{j\perp}^{\top}\zeta^{\top}_{j, \ell,1}U_j\Lambda_{j}^{-2}U_j^{\top}\\
			&= U_j\Lambda_{j}^{-2}G_j U_{-j}^{\top}Z^{\top}_{j, \ell}U_{j\perp}U_{j\perp}^{\top} +  U_{j\perp}U_{j\perp}^{\top}Z_{j, \ell} U_{-j}G_j^{\top}\Lambda_{j}^{-2}U_j^{\top},\\
			\frakS^{(t)}_{j, \ell, 2}&:= U_j\Lambda_{j}^{-2}U_j^{\top}\zeta^{(t)}_{j, \ell, 2}U_{j\perp}U_{j\perp}^{\top}+U_{j\perp}U_{j\perp}^{\top}\zeta^{(t)\top}_{j, \ell, 2}U_j\Lambda_{j}^{-2}U_j^{\top}\\
			&\quad+U_j\Lambda_{j}^{-2}U_j^{\top}\zeta_{j, \ell, 3}U_{j\perp}U_{j\perp}^{\top}+U_{j\perp}U_{j\perp}^{\top}\zeta_{j, \ell, 3}U_j\Lambda_{j}^{-2}U_j^{\top}\\
			&\quad-U_j\Lambda_{j}^{-2}U_j^{\top}\zeta_{j, \ell, 1}U_{j\perp}U_{j\perp}^{\top}\zeta^{\top}_{j, \ell, 1}U_j\Lambda_{j}^{-2}U_j^{\top}\\
			&\quad-U_j\Lambda_{j}^{-2}U_j^{\top}(\zeta_{j, \ell, 1}+\zeta^{\top}_{j, \ell, 1})U_j\Lambda_{j}^{-2}U_j^{\top}\zeta_{j,\ell, 1}U_{j\perp}U_{j\perp}^{\top}\\
			&\quad-U_{j\perp}U_{j\perp}^{\top}\zeta^{\top}_{j, \ell, 1}U_j\Lambda_{j}^{-2}U_j^{\top}(\zeta_{j, \ell, 1}+\zeta^{\top}_{j, \ell, 1})U_j\Lambda_{j}^{-2}U_j^{\top}\\
			&\quad +U_{j\perp}U_{j\perp}^{\top}\zeta^{\top}_{j, \ell, 1}U_j\Lambda_{j}^{-4}U_j^{\top}\zeta_{j, \ell, 1}U_{j\perp}U_{j\perp}^{\top},
		\end{aligned}
	\end{equation}
	and $\frakS_{j, \ell,3}^{(t)}$ is a remainder term determined by the above equations.
	Then
	\begin{equation}
		\begin{aligned}
			&\quad \frac{1}{L}\sum_{\ell=1}^L\hU^{(t)}_{j,\ell}\hU^{(t)\top}_{j,\ell} - U_jU_j^{\top}\\ &=\overline{\frakS_{j, 1}}+ \overline{\frakS^{(t)}_{j, 2}} + \overline{\frakS^{(t)}_{j, 3}}\\
			&= U_j\Lambda_{j}^{-2}G_j U_{-j}^{\top}\overline{Z}^{\top}_{j}U_{j\perp}U_{j\perp}^{\top} +  U_{j\perp}U_{j\perp}^{\top}\overline{Z}_{j} U_{-j}G_j^{\top}\Lambda_{j}^{-2}U_j^{\top}+ \overline{\frakS^{(t)}_{j, 2}} + \overline{\frakS^{(t)}_{j, 3}},
		\end{aligned}
	\end{equation}
	where $\overline{Z}_j:=(1/L)\sum_{\ell=1}^L Z_{j, \ell}$ and $\overline{\frakS^{(t)}_{j,k}}:=(1/L)\sum_{\ell=1}^L \frakS^{(t)}_{j, \ell, k}$ for $k \in \{2,3\}$ satisfies
	\[	\norm{\overline{\frakS^{(t)}_{j,2}}}_2 \lesssim \kappa_0^2pr^{1/2}\sigma^2\lambda_{\min}^{-2}, \quad \norm{\overline{\frakS^{(t)}_{j,3}}}_2 \lesssim \kappa_0^3p^{3/2}r^{1/2}\sigma^3\lambda_{\min}^{-3}.\] 
	
	By \eqref{eq:Sj1}, it holds that
	\[ \norm{\overline{\frakS_{j,1}}}_2=\norm{U_j\Lambda_{j}^{-2}G_j U_{-j}^{\top}\overline{Z}^{\top}_jU_{j\perp}U_{j\perp}^{\top} +  U_{j\perp}U_{j\perp}^{\top}\overline{Z}_{j} U_{-j}G_j^{\top}\Lambda_{j}^{-2}U_j^{\top}}_2 \lesssim \lambda^{-1}_{\min}\sigma\sqrt{p/L},\]
	with probability at least $1-C_1e^{-c_1p}$.
	
	Therefore, we obtain that
	\begin{equation}
		\label{eq:rate-hU2}
		\frac{1}{L}\sum_{\ell=1}^L\hU^{(2)}_{j,\ell}\hU^{(2)\top}_{j,\ell} - U_jU_j^{\top}=\overline{\frakS_{j, 1}}+ \overline{\frakS^{(2)}_{j, 2}} + \overline{\frakS^{(2)}_{j, 3}},
	\end{equation}
	where
	\begin{equation}
		\label{eq:bound-2-S123}
		\norm{\overline{\frakS_{j,1}}}_2 \lesssim \lambda^{-1}_{\min}\sigma\sqrt{p/L}, \quad
		\norm{\overline{\frakS^{(2)}_{j,2}}}_2 \lesssim \kappa_0^2pr^{1/2}\sigma^2\lambda_{\min}^{-2}, \quad \norm{\overline{\frakS^{(2)}_{j,3}}}_2 \lesssim \kappa_0^3p^{3/2}r^{1/2}\sigma^3\lambda_{\min}^{-3},
	\end{equation}
	with probability at least $1-C'_1Le^{-c'_1p}$ for some absolute constant $C_1', c_1'>0$.
	
	Since the columns of $\hU_j$ are the first $r_j$ eigenvectors of $\frac{1}{L}\sum_{\ell=1}^L\hU^{(2)}_{j,\ell}\hU^{(2)\top}_{j,\ell}$, similar to \eqref{eq:decomp-after-SVD-thm-1},
	\begin{equation}
		\label{eq:decomp-after-SVD-thm}
		\begin{aligned}
			\hU_j\hU_j^{\top}-U_jU_j^{\top}&=U_jU_j^{\top}\left(\overline{\frakS_{j, 1}}+ \overline{\frakS^{(2)}_{j, 2}}\right)U_{j\perp}U_{j\perp}^{\top}+ 
			U_{j\perp}U_{j\perp}^{\top}\left(\overline{\frakS_{j, 1}}+ \overline{\frakS^{(2)}_{j, 2}}\right)U_jU_j^{\top} \\ 
			&\quad -U_jU_j^{\top}\overline{\frakS_{j, 1}}U_{j\perp}U_{j\perp}^{\top} \overline{\frakS_{j, 1}}U_jU_j^{\top} + U_{j\perp}U_{j\perp}^{\top}\overline{\frakS_{j, 1}}U_jU_j^{\top}\overline{\frakS_{j, 1}}U_{j\perp}U_{j\perp}^{\top}
			+ \widetilde{\frakR}_j,
		\end{aligned}
	\end{equation}
	where $\norm{\widetilde{\frakR}_j}_2 \lesssim \kappa_0^3p^{3/2}r^{1/2}\sigma^3\lambda_{\min}^{-3}$. By \eqref{eq:bound-2-S123}, we obtain that
	\[\norm{\hU_j\hU_j^{\top}-U_jU_j^{\top}}_{\rm F} \leq \sqrt{r}\norm{\hU_j\hU_j^{\top}-U_jU_j^{\top}}_2 =O_{\Prob}\left(\lambda^{-1}_{\min}\sigma\sqrt{pr/L}+ \kappa_0^2pr\sigma^2\lambda_{\min}^{-2}\right).\]
	Moreover,
	\begin{equation}
		\label{eq:decomp-square}
		\begin{aligned}
			&\quad\norm{\hU_j\hU_j^{\top}-U_jU_j^{\top}}_{\rm F}^2\\
			&=\angles{\hU_j\hU_j^{\top}-U_jU_j^{\top}, \hU_j\hU_j^{\top}-U_jU_j^{\top}}\\
			&=2\tr \left(U_j^{\top}\overline{\frakS_{j, 1}}U_{j\perp}U_{j\perp}^{\top}\overline{\frakS_{j, 1}}U_j\right)+4\tr \left(U_j^{\top}\overline{\frakS^{(2)}_{j, 2}}U_{j\perp}U_{j\perp}^{\top}\overline{\frakS_{j, 1}}U_j\right)+\mathfrak{Q}_j\\
			&=2\tr\left(\Lambda_{j}^{-4}G_jU_{-j}^{\top}\overline{Z}^{\top}_{j}U_{j\perp}U_{j\perp}^{\top}\overline{Z}_{j} U_{-j}G_j^{\top}\right)\\
			&\quad+ 4\tr\left(\Lambda_{j}^{-4}U^{\top}_j\overline{\zeta^{(2)}_{j, 2}}U_{j\perp}U_{j\perp}^{\top}\overline{Z_{j}}U_{-j}G_j^{\top}\right)\\
			&\quad+4\tr\left(\Lambda_{j}^{-4}U^{\top}_j\overline{\zeta_{j, 3}}U_{j\perp}U_{j\perp}^{\top}\overline{Z_{j}}U_{-j}G_j^{\top}\right)\\
			&\quad - 4\tr\left\lbrace\Lambda_{j}^{-4}U^{\top}_j\left[\frac{1}{L}\sum_{\ell=1}^L\left(\zeta_{j,\ell, 1}+\zeta^{\top}_{j,\ell, 1}\right)U_j\Lambda_{j}^{-2}U_j^{\top}\zeta_{j,\ell, 1}\right]U_{j\perp}U_{j\perp}^{\top}\overline{Z_{j}}U_{-j}G_j^{\top}\right\rbrace+\mathfrak{Q}_j,
		\end{aligned}
	\end{equation}
	where $\overline{\zeta^{(t)}_{j, k}}:=(1/L)\sum_{\ell=1}^L\zeta^{(t)}_{j, \ell, k}$ and $\mathfrak{Q}_j$ is a remainder term. By the inequality that $|\angles{A, B}| \leq r\norm{A}_2\norm{B}_2$ for all rank-$r$ matrices $A$ and $B$, we bound the remainder term by $|\mathfrak{Q}_j| \lesssim \kappa_0^4p^2r^2\sigma^4\lambda_{\min}^{-4}$.
	
	Now we provide bounds for the second, third and fourth terms in \eqref{eq:decomp-square}. By \eqref{eq:rho_decomp_plugin},
	\[U_j^{\top}\hU_{j, \ell}^{(1)}\hU_{j, \ell}^{(1)\top}=U_j^{\top}+\Lambda_{j}^{-2}G_jU_{-j}^{\top}Z^{\top}_{j,\ell}U_{j\perp}U_{j\perp}^{\top}+ U_j^{\top}\left(\overline{\frakS^{(1)}_{j, 2}} + \overline{\frakS^{(1)}_{j, 3}}\right).\]
	Hence,
	\begin{equation}
		\begin{aligned}
			&\quad\zeta^{(2)}_{j, \ell, 2}\\
			&=T_j\left[\hU^{(1)}_{-j}\hU^{(1)\top}_{-j}-U_{-j}U_{-j}^{\top}\right]Z^{\top}_{j,\ell}\\
			&=
			U_jG_jU_{-j}^{\top}\left[\hU^{(1)}_{-j}\hU^{(1)\top}_{-j}-U_{-j}U_{-j}^{\top}\right]Z_{j,\ell}^{\top}\\
			&= U_jG_j\left[\bigotimes_{j' \neq j}\left(U_{j'}^{\top}\hU_{j', \ell}^{(1)}\hU_{j', \ell}^{(1)\top} \right) -U_{-j}^{\top} \right]Z_{j,\ell}^{\top}\\
			&= U_jG_j \bigg[\sum_{j'\neq j}U^{\top}_{1}  \otimes \cdots U^{\top}_{j'-1}\otimes \left(\Lambda_{j'}^{-2}G_{j'} U_{-j'}Z_{j',\ell}^{\top}U_{j'\perp}U_{j'\perp}^{\top}\right)\otimes U^{\top}_{j'+1}\cdots U_{j-1}^{\top} \otimes U_{j+1}^{\top} \cdots \otimes U^{\top}_{J}\bigg]Z_{j,\ell}^{\top} \\
			&\quad + \mathfrak{R}_{\zeta^{(2)}_{j,\ell, 2}},
		\end{aligned}
	\end{equation}
	where $\norm{\mathfrak{R}_{\zeta^{(2)}_{j,\ell, 2}}}_2 \leq C_4  \kappa_0^3p^{3/2}r^{1/2}\sigma^3\lambda_{\min}^{-1}$  for some $C_4>0$. As a result, we obtain
	\begin{equation}
		\quad\tr\left(\Lambda_{j}^{-4}U^{\top}_j\overline{\zeta^{(2)}_{j, 2}}U_{j\perp}U_{j\perp}^{\top}\overline{Z_{j}}U_{-j}G_j^{\top}\right)=\sum_{j'\neq j}\mathfrak{M}_{j'}+\mathfrak{R}_{\mathfrak{M}},
	\end{equation}
	where
	\begin{equation}
		\begin{aligned}
			&\mathfrak{M}_{j'}:=\tr\Bigg\{\Lambda_{j}^{-4}G_j\\
			&\cdot \bigg[\frac{1}{L}\sum_{\ell=1}^L\left(U^{\top}_{1}  \otimes \cdots\otimes  U^{\top}_{j'-1}\otimes\left( \Lambda_{j'}^{-2}G_{j'} U_{-j'}Z_{j',\ell}^{\top}U_{j'\perp}U_{j'\perp}^{\top}\right)\otimes U^{\top}_{j'+1}\cdots \otimes U_{j-1}^{\top} \otimes U_{j+1}^{\top} \otimes  \cdots \otimes  U^{\top}_{J}\right)Z^{\top}_{j,\ell}\bigg]\cdot\\ 
			&\quad \quad\quad\quad U_{j\perp} U_{j\perp}^{\top}\overline{Z_{j}}U_{-j}G_j^{\top}\Bigg\},
		\end{aligned}
		\label{eq:decomp-2}
	\end{equation}
	and $\abs{\frakR_{\mathfrak{M}}} \lesssim \kappa_0^4p^2r^2\sigma^4\lambda_{\min}^{-4}$.  Define 
	\[\tZ_{j', \ell, a}:=U^{\top}_{j'\perp}Z_{j',\ell}U_{-j'},\] \[\tZ_{j', \ell, b}:=U^{\top}_{j\perp}Z_{j,\ell}(U_{1}\otimes\cdots \otimes U_{j'-1} \otimes U_{j'\perp} \otimes U_{j'+1} \otimes\cdots \otimes U_{j-1} \otimes U_{j+1} \otimes \cdots \otimes U_{J}),\] 
	\[\tZ_{j, \ell,c}:=U^{\top}_{j\perp}Z_{j,\ell}U_{-j},\]then $\mathfrak{M}_{j'}$ can be simplified as
	\[\mathfrak{M}_{j'}=\tr\left[\Lambda_j^{-4}G_j\frac{1}{L}\sum_{\ell=1}^L\left( I_{r_{1}}  \otimes \cdots I_{r_{j'-1}}\otimes\left(\Lambda^{-2}_{j'}G_{j'}\tZ^{\top}_{j',\ell,a}\right)\otimes I_{r_{j'+1}} \cdots \otimes I_{r_{J}}\right)\tZ^{\top}_{j',\ell,b}\overline{\tZ_{j,c}}G_j^{\top}\right],\]
	where $\overline{\tZ_{j,c}}=(1/L)\sum_{\ell=1}^L\tZ_{j,\ell,c}$.
	
	Let $\ve$ denote the vectorization of a matrix. By assumption, $\ve (Z_{j',\ell}) \sim \calN(0, \sigma^2I)$. Using the identity that $\ve(ABC)=(C^{\top} \otimes A) \ve(B)$ for all matrices $A,B,C$, we obtain that $\ve(\tZ_{j', \ell, a}) \sim \calN(0, \sigma^2I)$, i.e., the entries of $\tZ_{j', \ell, a}$ are i.i.d. $\calN(0,\sigma^2)$. Similarly, $\ve(\tZ_{j', \ell,b}) \sim \calN(0, \sigma^2I)$ and $\ve(\tZ_{j, \ell,c}) \sim \calN(0, \sigma^2I)$. Furthermore, since $U_{j'\perp}^{\top}U_{j'}=0$,  $\E[\ve(\tZ_{j', \ell,b})\ve(\tZ_{j, \ell,c})^{\top}]=0$, which implies that  $\tZ_{j',\ell,b}$ and $\tZ_{j,\ell,c}$ are independent. Therefore, conditional on $\{\tZ_{j',\ell,a}, \tZ_{j,\ell,c}\}_{\ell=1}^L$, 
	\begin{equation}
		\label{eq:normal_M1}
		\begin{aligned}
			&\quad \mathfrak{M}_{j'} \mid \{\tZ_{j',\ell,a}, \tZ_{j,\ell,c}\}_{\ell=1}^L \\
			&\sim \calN\left(0, \frac{\sigma^2}{L^2}\sum_{\ell=1}^L\norm{\overline{\tZ_{j,c}}G^{\top}_j\Lambda_j^{-4}G_j\left(I_{r_{1}}  \otimes \cdots I_{r_{j'-1}}\otimes\left(\Lambda^{-2}_{j'}G_{j'}\tZ^{\top}_{j',\ell,a}\right)\otimes I_{r_{j'+1}} \cdots \otimes I_{r_{J}}\right)}^2_{F}\right).
		\end{aligned}
	\end{equation}
	By the proof of Theorem 2.1, with probability at least $1-C_1e^{-c_1p}$, $\norm{\tZ_{j',\ell,a}}_2 \leq C_2\sigma\sqrt{p}$ and $\norm{\overline{\tZ_{j,c}}}_2 \leq C_2\sigma\sqrt{p/L}$.
	Note that \[\mathrm{rank}\left[\overline{\tZ_{j,c}}G^{\top}_j\Lambda_j^{-4}G_j\left(I_{r_{1}}  \otimes \cdots I_{r_{j'-1}}\otimes\left(\Lambda^{-2}_{j'}G_{j'}\tZ^{\top}_{j',\ell,a}\right)\otimes I_{r_{j'+1}} \cdots \otimes I_{r_{J}}\right)\right] \leq \mathrm{rank}(G_j)=r_j.\] 
	Then, using that $\norm{\Lambda_j^{-1}G_j}_2=1$, we obtain,
	\[\norm{\overline{\tZ_{j,c}}G^{\top}_j\Lambda_j^{-4}G_j\left(I_{r_{1}}  \otimes \cdots I_{r_{j'-1}}\otimes\left(\Lambda^{-2}_{j'}G_{j'}\tZ^{\top}_{j',\ell,a}\right)\otimes I_{r_{j'+1}} \cdots \otimes I_{r_{J}}\right)}_{F} \leq C_2^{2}pr^{1/2}L^{-1/2}\sigma^2\lambda_{\min}^{-3}.\]
	Hence,
	\begin{align*}
		\mathrm{sd}(	\mathfrak{M}_{j'})&:=\sqrt{\frac{\sigma^2}{L^2}\sum_{\ell=1}^L\norm{\overline{\tZ_{j,c}}G^{\top}_j\Lambda_j^{-4}G_j\left(I_{r_{1}}  \otimes \cdots I_{r_{j'-1}}\otimes\left(\Lambda^{-2}_{j'}G_{j'}\tZ^{\top}_{j',\ell,a}\right)\otimes I_{r_{j'+1}} \cdots \otimes I_{r_{J}}\right)}^2_{F}} \\
		&\lesssim  L^{-1} pr^{1/2}\sigma^3\lambda_{\min}^{-3}.
	\end{align*}
	
	By \eqref{eq:normal_M1}, for any $\gamma>0$, 
	\[\Prob\left(\abs{\mathfrak{M}_{j'}} \geq  \mathrm{sd}(\mathfrak{M}_{j'})\sqrt{2\gamma\log p} \quad \big|\quad \{\tZ_{j',\ell,a}, \tZ_{j,\ell,c}\}_{\ell=1}^L \right) \leq 2p^{-\gamma}.\]
	Therefore, with probability at least $1-C_1Le^{-c_1p}-2p^{-\gamma}$, 
	\[\abs{\mathfrak{M}_{j'}} \lesssim \frac{p\sigma^3\sqrt{\gamma r \log p}}{\lambda_{\min}^3L}.\]
	In conclusion, the second term in \eqref{eq:decomp-square} can be bounded by
	\begin{equation}
		\abs{4\tr\left(\Lambda_{j}^{-4}U^{\top}_j\overline{\zeta^{(2)}_{j, 2}}U_{j\perp}U_{j\perp}^{\top}\overline{Z_{j}}U_{-j}G_j^{\top}\right)} \lesssim \frac{p\sigma^3\sqrt{\gamma r \log p}}{\lambda_{\min}^3L} + \kappa_0^4p^2r^2\sigma^4\lambda_{\min}^{-4}, \label{eq:bound-M123}
	\end{equation}
	with probability at least $1-C_1JLe^{-c_1p}-2Jp^{-\gamma}$.

	Next, we deal with the third term in \eqref{eq:decomp-square}, i.e.,
	\[\tr\left(\Lambda_{j}^{-4}U^{\top}_j\overline{\zeta_{j, 3}}U_{j\perp}U_{j\perp}^{\top}\overline{Z_{j}}U_{-j}G_j^{\top}\right)=\frac{1}{L}\sum_{\ell=1}^L\tr \left[\Lambda_j^{-4}\tZ_{j,\ell,d}\tZ_{j,\ell,c}^{\top}\overline{\tZ_{j,c}}G_j^{\top}\right]:=\widetilde{\mathfrak{M}_0},\]
	where $\tZ_{j,\ell,d}:=U_{j}^{\top}Z_{j,\ell}U_{-j}$. Since $\ve([\tZ_{j,\ell,d} \tZ_{j,\ell,c}])=\ve([U_j U_{j\perp}]^{\top}Z_{j,\ell}U_{-j}) \sim \calN(0,\sigma^2 I)$, we have 
	\[\widetilde{\mathfrak{M}_0} \mid \{\tZ_{j, \ell,c}\}_{\ell=1}^L \sim \calN \left(0, \frac{\sigma^2}{L^2} \sum_{\ell=1}^L \norm{\Lambda_j^{-4}G_j\overline{\tZ_{j,c}}^{\top}\tZ_{j,\ell,c}}_{\rm F}^2\right).\]
	
	Using that $\norm{\tZ_{j,\ell,c}}_2 \leq C_2\sigma\sqrt{p}$ and $\norm{\overline{\tZ_{j,c}}}_2 \leq C_2\sigma\sqrt{p/L}$ with probability at least $1-C_1e^{-c_1p}$, we have
	$\mathrm{sd}(\widetilde{\mathfrak{M}}_0) \lesssim \sigma^3 L^{-1}r^{1/2}p\lambda^{-3}_{\min}$. Therefore, with probability at least $1-C_1Le^{-c_1p}-2p^{-\gamma}$, we obtain that 
	\begin{equation}
		\abs{\widetilde{\mathfrak{M}}_0} \lesssim \frac{p\sigma^3\sqrt{\gamma r \log p}}{\lambda_{\min}^3L}. 
		\label{eq:bound-M4}
	\end{equation}
	
	For the fourth term in \eqref{eq:decomp-square}, note that
	\begin{equation}
		\begin{aligned}
			&\quad\tr\left\lbrace\Lambda_{j}^{-4}U^{\top}_j\left[\frac{1}{L}\sum_{\ell=1}^L\left(\zeta_{j,\ell, 1}+\zeta^{\top}_{j,\ell, 1}\right)U_j\Lambda_{j}^{-2}U_j^{\top}\zeta_{j,\ell, 1}\right]U_{j\perp}U_{j\perp}^{\top}\overline{Z_{j}}U_{-j}G_j^{\top}\right\rbrace\\
			&=\frac{1}{L}\sum_{\ell=1}^L\tr\left[\Lambda_{j}^{-4}G_j\tZ^{\top}_{j,\ell,d}\Lambda_{j}^{-2}G_j\tZ_{j,\ell,c}^{\top}\overline{\tZ_{j,c}}G_j^{\top}\right]+\frac{1}{L}\sum_{\ell=1}^L\tr\left[\Lambda_{j}^{-4}\tZ_{j,\ell,d}G_j^{\top}\Lambda_{j}^{-2}G_j\tZ_{j,\ell,c}^{\top}\overline{\tZ_{j,c}}G_j^{\top}\right]\\
			&=:\widetilde{\mathfrak{M}}_1+\widetilde{\mathfrak{M}}_2.
		\end{aligned}
	\end{equation}
	Repeating the analysis for $\widetilde{\mathfrak{M}}_0$ yields the same result that 
	\begin{equation}
		\abs{\widetilde{\mathfrak{M}}_1} \lesssim \frac{p\sigma^3\sqrt{\gamma r \log p}}{\lambda_{\min}^3L}, \quad	\abs{\widetilde{\mathfrak{M}}_2} \lesssim \frac{p\sigma^3\sqrt{\gamma r \log p}}{\lambda_{\min}^3L}. 
		\label{eq:bound-M56}
	\end{equation}
	Combining \eqref{eq:decomp-square}, \eqref{eq:bound-M123}, \eqref{eq:bound-M4} and \eqref{eq:bound-M56} leads to that
	\begin{equation}
		\label{eq:square-remainder}
		\begin{aligned}
			&\quad\abs{\norm{\hU_j\hU_j^{\top}-U_jU_j^{\top}}_{\rm F}^2-2\tr\left(\Lambda_{j}^{-4}G_jU_{-j}^{\top}\overline{Z}^{\top}_{j}U_{j\perp}U_{j\perp}^{\top}\overline{Z}_{j} U_{-j}G_j^{\top}\right)}\\ &=O_{\Prob}\left(\frac{p\sigma^3\sqrt{ r \log p}}{\lambda_{\min}^3L}+\frac{\kappa_0^4p^2r^2\sigma^4}{\lambda_{\min}^{4}}\right).
		\end{aligned}
	\end{equation}
	Now we focus on the first term. By the proof of the final step of Theorem 1 in \cite{xia2022inference}, it holds that
	\begin{align*}
		&\quad2\tr\left(\Lambda_{j}^{-4}G_jU_{-j}^{\top}\overline{Z}^{\top}_{j}U_{j\perp}U_{j\perp}^{\top}\overline{Z}_{j} U_{-j}G_j^{\top}\right)\\
		&=2\norm{\Lambda_{j}^{-2}G_jU_{-j}^{\top}\overline{Z}^{\top}_{j}U_{j\perp}}_{\rm F}^2\\
		&\stackrel{d}{=} \frac{2\sigma^2}{L}\sum\limits_{i=1}^{p_j-r_j}\norm{\Lambda_j^{-1}\bm{z}_i}_2^2,
	\end{align*} 
	where $\bm{z}_i \stackrel{i.i.d.}{\sim} \calN(0, I_{r_j})$. Since $\E\left[\norm{\Lambda_1^{-1}\bm{z}_i}_2^2\right]=\norm{\Lambda_j^{-1}}_{\rm F}^2$ and $\mathrm{Var}\left[\norm{\Lambda_j^{-1}\bm{z}_i}_2^2\right]=2\norm{\Lambda_j^{-2}}_{\rm F}^2$,  by Central Limit Theorem,
	
	\[\frac{2\norm{\Lambda_{j}^{-2}G_jU_{-j}^{\top}\overline{Z}^{\top}_{j}U_{j\perp}}_{\rm F}^2 - 2\sigma^2L^{-1}(p_j-r_j)\norm{\Lambda^{-1}_j}_{\rm F}^2}{\sqrt{8(p_j-r_j)}\sigma^2L^{-1}\norm{\Lambda_j^{-2}}_{\rm F}} \stackrel{d}{\to} \calN (0, 1).\]
	
	Since $\sqrt{8(p_j-r_j)}\sigma^2L^{-1}\norm{\Lambda_j^{-2}}_{\rm F} \gtrsim \sqrt{p_jr_j}L^{-1}\sigma^2\kappa_0^{-2}\lambda_{\min}^{-2}$ and $r_j/p_j=o(1)$, by \eqref{eq:square-remainder}, it holds that
	\[	
	\frac{\norm{\hU_j\hU_j^{\top}-U_jU_j^{\top}}_{\rm F}^2  - 2\sigma^2L^{-1}(p_j-r_j)\norm{\Lambda^{-1}_j}_{\rm F}^2}{\sqrt{8p_j}\sigma^2L^{-1}\norm{\Lambda_j^{-2}}_{\rm F}} \stackrel{d}{\to} \calN (0, 1),\]
	if 
	\[\left(\frac{p\sigma^3\sqrt{ r \log p}}{\lambda_{\min}^3L}+\frac{\kappa_0^4p^2r^2\sigma^4}{\lambda_{\min}^{4}}\right)/\frac{\sqrt{p_jr_j}\sigma^2}{L\kappa_0^{2}\lambda_{\min}^{2}}=o(1),\]
	or
	\[\frac{Lr^{3/2}\kappa_0^6p^{3/2}}{(\lambda_{\min}/\sigma)^2}+\frac{\kappa_0^2\sqrt{p\log p}}{\lambda_{\min}/\sigma}=o(1).\]
	Furthermore, if $r^3/p=o(1)$, we have $r_j\norm{\Lambda^{-1}_j}_{\rm F}^2/\sqrt{p_j}\norm{\Lambda_j^{-2}}_{\rm F}=o(1)$, then 
	\[\frac{\norm{\hU_j\hU_j^{\top}-U_jU_j^{\top}}_{\rm F}^2  - 2\sigma^2L^{-1}p_j\norm{\Lambda^{-1}_j}_{\rm F}^2}{\sqrt{8p_j}\sigma^2L^{-1}\norm{\Lambda_j^{-2}}_{\rm F}} \stackrel{d}{\to} \calN (0, 1).\] 
\end{proof}

\section{Additional Results in Numerical Studies}\label{sec:supp_numerical}

\begin{figure}[!t]
	\centering
	\begin{subfigure}[b]{0.45\textwidth}
		\centering
		\includegraphics[width=\textwidth]{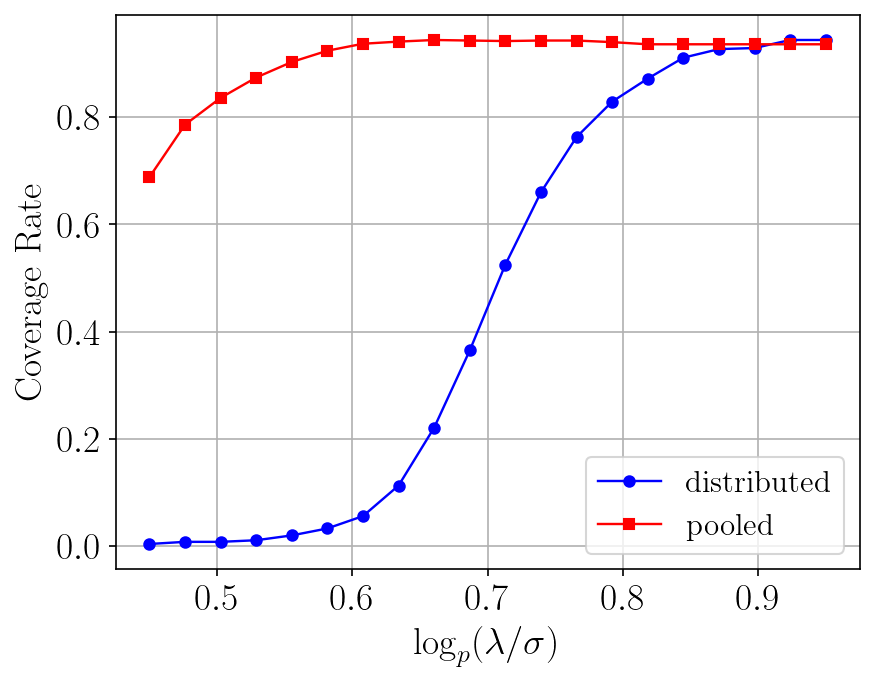}
		\caption{$p=50$}
		\label{fig:inf_p=1}
	\end{subfigure}
	\hfill
	\begin{subfigure}[b]{0.45\textwidth}
		\centering
		\includegraphics[width=\textwidth]{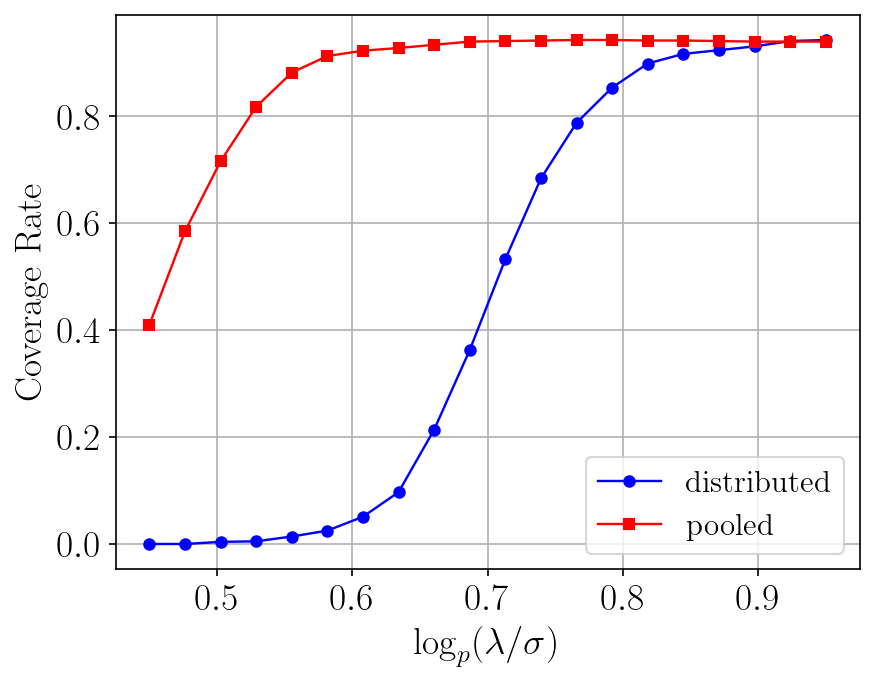}
		\caption{$p=100$}
		\label{fig:inf_p=10}
	\end{subfigure}
	\caption{The coverage rates of different methods under the heterogeneous setting.}
	\label{fig:inf}
\end{figure}

\subsection{Simulations for Asymptotic Distribution} \label{sec:supp-simulation}

In this section, we verify the validity of the asymptotic distribution established in Section \ref{sec:asymp} in the supplementary material by computing the coverage rates of our proposed distributed Algorithm \ref{alg:inf}, that is, the rate that the estimated confidence region \eqref{eq:confidence-region}  covers the truth $U_j$. As clarified in Section \ref{sec:asymp}, we estimate the noise level $\sigma$ by
\[\widehat{\sigma} =  \big\|\calT_{1}-\calT_{1} \times_1 \hU_{1} \times_2 \hU_{2}\cdots\times_J\hU_{J}\big\|_{\rm F} / \sqrt{p_1p_2\cdots p_J}, \] 
and estimate the singular value matrix $\Lambda_{j}$ by 
\[\widehat{\Lambda}_{j} = \text{the top $r_j$ singular values of } \calM_j\big(\calT_{1} \times_{1} \hU_{1}^{\top} \times_{2} \hU_{2}^{\top} \cdots \times_{j-1}\hU_{j-1}^{\top} \times_{j+1}\hU_{j+1}^{\top}\cdots \times_{J}\hU_{J}^{\top}\big),\]
where $\big\{\hU_j\big\}_{j \in [J]}$ are the outputs of Algorithm \ref{alg:inf}. For comparison, we also record the coverage rates of the pooled estimator, where the confidence region for $U_j$ is given by replacing $\hU_j$ in \eqref{eq:confidence-region} with $\hU_{\mathrm{pooled}, j}$.

Specifically, we choose the confidence level $1-\xi$ to be $0.95$ and report the coverage rates of $\hU_1$ and $\hU_{\mathrm{pooled}, 1}$ with $p=50$ and $100$ in Figure \ref{fig:inf}. For both cases, our proposed estimator performs comparably to the pooled estimator when the SNR $\lambda/\sigma$ is sufficiently large, achieving a high coverage rate around the nominal 95\% level. It is worth noting that, compared to  Figure  \ref{fig:homo}, the requirement for SNR to achieve the asymptotic normality is more stringent than that to attain the optimal statistical error rate. This observation is consistent with our theoretical results: Corollary 2.1 guarantees the optimal statistical error rate under the condition $\lambda/\sigma \geq \sqrt{prL}$,  whereas Theorem \ref{thm:inference}, which establishes the asymptotic normality, assumes a stronger condition that $\lambda/\sigma \geq L^{1/2}(pr)^{3/4}$.

\subsection{Knowledge Transfer in Real Data Analysis}\label{sec:supp-real}
\begin{figure}[!t]
\centering
\begin{subfigure}[b]{0.45\textwidth}
	\centering
	\includegraphics[width=\textwidth]{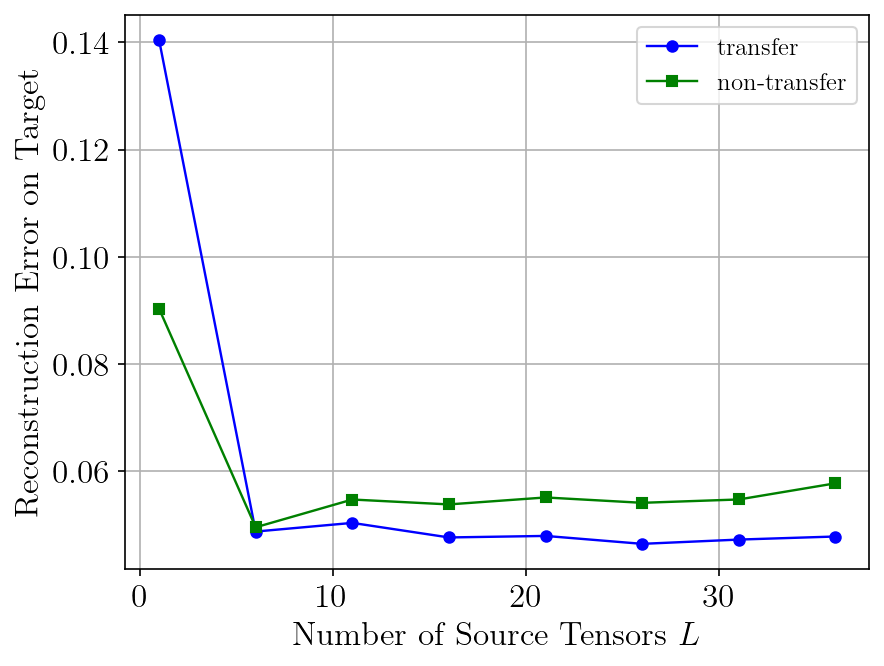}
	\caption{PROTEINS}
	\label{fig:transfer_small_r}
\end{subfigure}
\hfill
\begin{subfigure}[b]{0.45\textwidth}
	\centering
	\includegraphics[width=\textwidth]{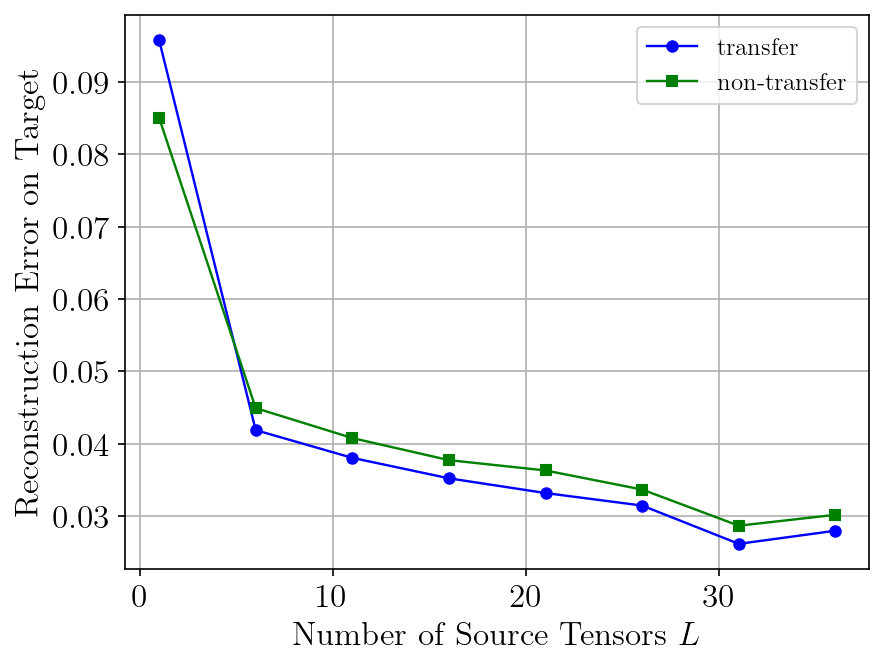}
	\caption{PTC\_FM}
	\label{fig:transferl_large_r}
\end{subfigure}
\caption{The reconstruction errors of transfer learning between class 0 and class 1 of the PROTEINS and PTC\_FM datasets.}
\label{fig:transfer}
\end{figure}
In the real data analysis, we further apply our transfer learning algorithm (Algorithm 3) to conduct knowledge transfer between class 0 and class 1. Since the sample size of class 0 is larger in both datasets, we let class 0 be the source task and class 1 be the target task. Similar to the previous procedure, we randomly select $L$ training samples on both tasks and $L'$ test samples on the target task. The training samples on each task are averaged into a source tensor $\calT_s$ and a target tensor $\calT_t$ and then input into Algorithm 3 to obtain estimators $\big\{[\hU_j \; \hV_{j,t}]\big\}_{j=1,2,3}$. 
For the sample sizes, we still let $L \in [1, 40]$ and $L'=100$. For comparison, we also record the performance of the ``non-transfer'' estimators, which are obtained only using the training samples on the target task. The results are displayed in Figure \ref{fig:transfer}. We observe that the transfer learning method consistently outperforms the ``non-transfer'' method for both datasets when $L>5$, demonstrating the advantage of leveraging knowledge from the source task. 

\end{document}